\documentclass[%
 reprint,
 superscriptaddress,
 amsmath,amssymb,
 aps,
 prx,               
 longbibliography,  
 floatfix,
]{revtex4-2}

\AtBeginDocument{%
  }

\usepackage{amsmath,amssymb,mathtools}
\usepackage{amsthm}
\usepackage{graphicx}
\usepackage{subcaption}
\graphicspath{{figures/}}
\usepackage{textcomp}
\usepackage{xcolor}
\usepackage{ragged2e}
\usepackage{braket}
\usepackage{booktabs}
\usepackage{physics}
\usepackage{float}

\usepackage{algorithm}
\usepackage{algpseudocode}

\theoremstyle{definition}
\newtheorem{definition}{Definition}
\theoremstyle{plain}
\newtheorem{theorem}{Theorem}
\newtheorem{lemma}[theorem]{Lemma}
\newtheorem{proposition}[theorem]{Proposition}
\newtheorem{corollary}[theorem]{Corollary}

\algrenewcommand\algorithmiccomment[1]{\hfill{\footnotesize$\triangleright$ #1}}

\usepackage{xspace}

\usepackage{hyperref}
\hypersetup{
  colorlinks=true,
  linkcolor=blue,
  citecolor=blue,
  urlcolor=blue,
}

\algnewcommand\algorithmicinput{\textbf{Input:}}
\algnewcommand\algorithmicoutput{\textbf{Output:}}
\algnewcommand\Input{\item[\algorithmicinput]}
\algnewcommand\Output{\item[\algorithmicoutput]}

\newcommand{\Fop}[1]{\mathcal{F}_{#1}}       
\newcommand{\Pop}[1]{\hat{P}_{#1}}            
\newcommand{\Vop}[1]{\hat{V}_{#1}}            
\DeclareMathOperator{\Inv}{Inv}
\newcommand{\SWAP}{\text{SWAP}\xspace}
\newcommand{\CNOT}{\text{CNOT}\xspace}
\newcommand{\CZ}{\text{CZ}\xspace}
\newcommand{\FSWAP}{\text{FSWAP}\xspace}
\newcommand{\Z}{\text{Z}\xspace}
\newcommand{\SWAPs}{\text{SWAPs}\xspace}
\newcommand{\CNOTs}{\text{CNOTs}\xspace}

\newcommand{\FSWAPs}{\text{FSWAPs}\xspace}

\newcommand{\Bswap}[1]{B_{#1}}
\newcommand{\Ffull}[1]{F_{#1}}

\DeclareMathOperator{\polylog}{polylog}

\begin{document}

\title{Asymptotically Optimal Depth Fermionic Permutation on 2D Grid Quantum Architecture without Ancillas}

\author{Dantong Li}
\email{dantong.li@yale.edu}
\affiliation{Department of Computer Science, Yale University, New Haven, Connecticut 06511, USA}
\affiliation{Yale Quantum Institute, Yale University, New Haven, Connecticut 06511, USA}

\author{Shifan Xu}
\email{shifan.xu@yale.edu}
\affiliation{Department of Applied Physics, Yale University, New Haven, Connecticut 06511, USA}
\affiliation{Yale Quantum Institute, Yale University, New Haven, Connecticut 06511, USA}

\author{Yongshan Ding}
\email{yongshan.ding@yale.edu}
\affiliation{Department of Computer Science, Yale University, New Haven, Connecticut 06511, USA}
\affiliation{Yale Quantum Institute, Yale University, New Haven, Connecticut 06511, USA}
\affiliation{Department of Applied Physics, Yale University, New Haven, Connecticut 06511, USA}

\date{\today}

\begin{abstract}
Simulating fermionic systems on qubit hardware involves many nonlocal interactions, and efficient routing of these interactions is critical to the overall cost of fermionic simulation algorithms. Recent works reduce this routing overhead to polylogarithmic depth under all-to-all connectivity, but degrade to $O(\sqrt{N}\log N)$ for $N$ fermionic modes on 2D nearest-neighbor architectures. We present a fermionic permutation protocol tailored to 2D grid architectures that achieves the optimal $O(\sqrt{N})$ depth with $O(N\sqrt{N})$ nearest-neighbor gates and no ancilla qubits, measurements, or classical feedforward. For $N=L^2$ modes, the CNOT depth is at most $10L+12$. This matches the $\Omega(\sqrt{N})$ lower bound, which holds even when $O(N)$ ancillas and classical feedforward are permitted. We further construct an $O(\sqrt{N})$-depth transformation between the Jordan--Wigner, Bravyi--Kitaev, and Parity encodings on the 2D grid via a Hilbert-curve layout, extending our result to all three encodings. Benchmarks on a standalone fermionic permutation circuit, the fermionic fast Fourier transform, and Trotter simulation of a sparse SYK model demonstrate that the practical crossover is near the analytical depth-bound breakeven at $L=6$ ($N=36$), followed by consistent reductions in circuit depth and spacetime volume and higher estimated no-fault probability in the early fault-tolerant regime, growing with system size.
\end{abstract}

\keywords{Quantum simulation, Fermionic permutation, 2D nearest-neighbor quantum architecture}

\maketitle

\begin{figure*}[t]
\centering
\includegraphics{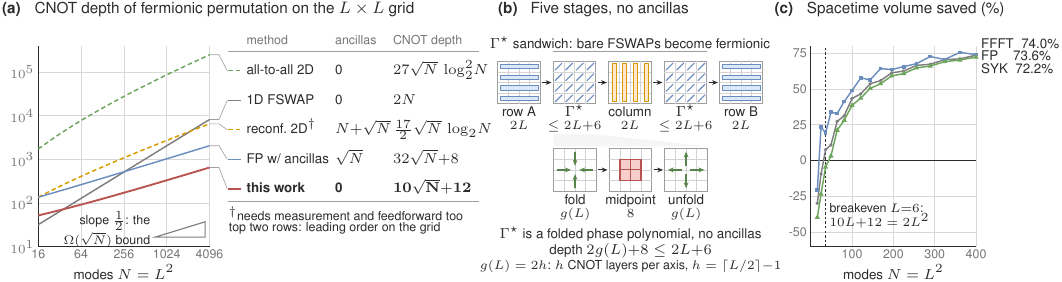}
\caption{Summary of results. (a)~CNOT depth of fermionic permutation on the $L\times L$ nearest-neighbor grid for the methods of Table~\ref{tab:fp-comparison}, from the formulas listed in the table at right (leading order for the two $\polylog N$ protocols compiled to the grid); the table also gives each method's ancilla count. Ours is ancilla-free at $O(\sqrt N)$ depth and runs parallel to the slope-$\tfrac12$ guide, the $\Omega(\sqrt N)$ lower bound that holds even with $O(N)$ ancillas and feedforward. (b)~The protocol: three sorting stages of depth $2L$ each---a row sort, a column sort, and a row sort---with the column sort conjugated by the fixed diagonal operator $\Gamma^\star$, which turns its bare vertical \FSWAP{}s into fermionic ones. Each $\Gamma^\star$ is a folded phase polynomial: a CNOT fold of depth $g(L)=2h$ ($h$ layers along each axis), an $8$-layer midpoint, and an unfold of depth $g(L)$, for depth $2g(L)+8\le 2L+6$ with no ancillas; the five stages total at most $10L+12$. (c)~Spacetime volume saved by the folded construction as a function of system size: standalone random fermionic permutation against the 1D \FSWAP{} network (circuit-level volume, circles), and the 2D FFFT (squares) and sparse-SYK Trotter step (triangles) against their baselines (logical volume $S=QD$ at code distance $d=11$; CT-FFFT and the 1D \FSWAP{} network, respectively), from the data of Section~\ref{sec:eval}. The dashed line marks the analytical breakeven $L=6$, where $10L+12=2L^2$; the measured crossovers fall at $N=25$--$49$, and the saving reaches $72$--$74\%$ at $N=400$.}
\label{fig:summary}
\end{figure*}

\section{Introduction}
\label{sec:intro}

Simulating interacting fermions is central to quantum chemistry, condensed matter physics, and lattice gauge theory~\cite{Abrams_1997, Ortiz_2001, Lanyon_2010, Wecker_2014, Reiher_2017, Bauer_2020, McArdle_2020, di2024quantum}, and it is a task on which quantum computers can sidestep the sign problem that obstructs classical Monte Carlo methods~\cite{manin1980computable, feynman2018simulating}. Doing so requires a fermion-to-qubit mapping, because qubit operators have a local tensor-product structure while fermionic operators obey nonlocal anticommutation relations. The Jordan--Wigner (JW) transformation~\cite{jordan1928paulische} serializes the modes along a chain and represents each operator as a Pauli-$Z$ string of length $O(N)$; the Bravyi--Kitaev (BK)~\cite{bk} and Parity~\cite{parity} encodings spread this cost more evenly with tree-structured parity bookkeeping. All three fix a static labeling of modes to qubits, and no static labeling keeps every interaction local once the interaction graph changes between simulation steps. The active modes must then be routed dynamically by a \emph{fermionic permutation}---the fermionic analog of qubit routing, and the dynamic counterpart to the static \emph{initial placement} that prior fermion-mapping work~\cite{fermihedral, hatt} optimizes. Unlike ordinary routing, a fermionic permutation must also preserve anticommutation: in the JW encoding, every pair of modes whose order changes needs a phase correction. This makes dynamic fermionic routing a dominant cost in Trotterized simulation of the Fermi--Hubbard and Sachdev--Ye--Kitaev (SYK) models and in the fermionic fast Fourier transform (FFFT) used throughout plane-wave electronic structure~\cite{maskara2025fastsimulationfermionsreconfigurable, constantinides2025lowdepthfermionroutingancillas, ffft, ffft-ct}.

The textbook approach permutes modes along the snake JW chain with \FSWAP{} networks~\cite{fswap1Dnetwork}, at $O(N)$ depth and $O(N^2)$ gates on an $L\times L$ grid, leaving the grid's second axis unused. Two recent works reduce this dramatically under \emph{reconfigurable} or \emph{all-to-all} connectivity by decomposing any permutation into $O(\log N)$ layers of structured primitives. Maskara et al.~\cite{maskara2025fastsimulationfermionsreconfigurable} use merge-sort-style interleaves and reach $O(\log N)$ depth with $\Theta(N)$ ancillas, mid-circuit measurement, and classical feedforward; Constantinides et al.~\cite{constantinides2025lowdepthfermionroutingancillas} use staircase permutations from recursive bisection and reach $O(\log^2 N)$ depth with no ancillas, for the whole class of product-preserving ternary-tree encodings that includes JW, BK, and Parity. Neither decomposition is tailored to a planar layout. On a 2D grid, where information needs $\Theta(\sqrt N)$ time merely to cross the array~\cite{devulapalli2022routing}, each nonlocal layer acquires a $\Theta(\sqrt N)$ routing overhead, and our grid compilations of the two protocols (Appendices~\ref{app:reconf-2dnn} and~\ref{app:constantinides-2dnn}) have depths $O(\sqrt N\log N)$~\cite{maskara2025fastsimulationfermionsreconfigurable} and $O(\sqrt N\log^2 N)$~\cite{constantinides2025lowdepthfermionroutingancillas} on 2D nearest-neighbor hardware~\cite{google2025quantum,kim2023evidence}---above the $\Omega(\sqrt N)$ lower bound by logarithmic factors. The one existing grid-native ingredient is the diagonal parity correction $\Gamma$ of Jiang et al.~\cite{gamma}, introduced for specific FFFT and Hubbard-model subroutines and implemented there with $\Theta(\sqrt N)$ traveling ancillas at depth $13L+O(1)$.

In this article we close the gap between these protocols and the lower bound (Figure~\ref{fig:summary}). We present a fermionic permutation protocol for the 2D nearest-neighbor grid with CNOT depth at most $10L+12$ for $N=L^2$ modes---the asymptotically optimal $O(\sqrt N)$---using $O(N\sqrt N)$ nearest-neighbor entangling gates and \textbf{no ancillas, mid-circuit measurements, or classical feedforward} [Figure~\ref{fig:summary}(a)]. The optimality is strong: the $\Omega(\sqrt N)$ bound from grid expansion holds in the LOCC model with $O(N)$ ancillas, measurements, and feedforward~\cite{devulapalli2022routing}, so our construction, which uses none of these, is optimal against protocols with strictly more power. The depth bound meets the $2L^2$ of the 1D \FSWAP{} network at $L=6$ ($N=36$) and is quadratically better beyond it. On the same grid we give an $O(\sqrt N)$-depth transformation between the JW, BK, and Parity encodings through a Hilbert space-filling curve layout~\cite{hilbert1935stetige}, so the result extends to all three encodings. Numerically, in standalone fermionic permutation, the 2D FFFT, and Trotterized sparse-SYK simulation, the crossovers fall near the analytical $L=6$ breakeven, varying with workload and metric, and at $N=400$ the folded construction reduces spacetime volume by $72$--$74\%$ against each application's baseline, with the highest estimated no-fault probability in the early fault-tolerant regime [Figure~\ref{fig:summary}(c)].

Two ideas make this possible [Figure~\ref{fig:summary}(b)]. First, rather than linearizing the modes and compiling a one-dimensional circuit onto the grid, we start from Hall's Row--Column--Row decomposition, a permutation factorization intrinsically adapted to the 2D architecture, and realize each of its three stages by odd--even transposition networks running on all rows or all columns in parallel. Second, we absorb the fermionic phase correction into this grid-local data movement with a diagonal unitary $\Gamma$, chosen so that conjugation by $\Gamma$ promotes every parity-uncorrected vertical swap of the column stage to the full fermionic swap at once. Our \emph{folded} construction of $\Gamma$ selects a representative equivalent to that of Ref.~\cite{gamma}, maps its dense quadratic phase into a local prefix basis by CNOT wavefronts that fold in from both ends of every row and column, applies a constant-depth midpoint, and unfolds. This removes the traveling ancillas and reduces the per-$\Gamma$ depth from $13L+O(1)$ to $2L+O(1)$, an approximately $85\%$ reduction in the leading constant; the coefficient two is optimal for a universal diagonal correction (Section~\ref{sec:optimality}). Concurrently and independently, Aigner et al.~\cite{aigner2026fermionlattices} also reached $O(\sqrt N)$-depth ancilla-free fermionic routing on the grid, by dynamically switching between row- and column-oriented JW encodings; their switch plays the same parity-correction role as $\Gamma$, and its displayed CNOT-ladder construction of depth $4L+O(1)$ gives $14\sqrt N+O(1)$ per permutation under our accounting (Table~\ref{tab:fp-comparison}). What distinguishes our construction is the depth-$2L+O(1)$ correction, which brings the total to $10\sqrt N+12$, and the matching $\Omega(\sqrt N)$ lower bound.

Fermionic permutation complements, rather than replaces, the static mapping optimized by Fermihedral~\cite{fermihedral} and HATT~\cite{hatt}: those fix one minimum-Pauli-weight layout for the whole computation, which suits Hamiltonian simulation but not interaction-driven primitives such as the FFFT, and Pauli weight controls depth only under all-to-all connectivity---on a grid a low-weight term can still be spread across distant qubits. Conversely, for a Hamiltonian already local under one layout (e.g.\ 2D Fermi--Hubbard) routing buys little, and specialized circuits~\cite{fswap1Dnetwork} or classical algorithms~\cite{majoranaprop} stay competitive. Our contributions are:
\begin{itemize}
    \item \textbf{Optimal depth without ancillas.} A fermionic permutation algorithm on the 2D nearest-neighbor grid with depth $10\sqrt N+12$, $O(N\sqrt N)$ entangling gates, and zero ancillas, measurements, or feedforward (Section~\ref{sec:method})---a quadratic depth and $\sqrt N$-factor gate-count improvement over the 1D \FSWAP{} network, and a depth-constant improvement from $32$ to $10$ over the ancilla-based construction of Ref.~\cite{gamma}.
    \item \textbf{Folded $\Gamma$.} A phase-polynomial synthesis that eliminates the $O(\sqrt N)$ ancillas of Ref.~\cite{gamma} and reduces the per-$\Gamma$ depth from $13L+O(1)$ to $2L+O(1)$, with an optimal leading coefficient (Sections~\ref{sec:gamma-impl} and~\ref{sec:optimality}).
    \item \textbf{BK and Parity encodings.} An $O(\sqrt N)$-depth conversion between JW, BK, and Parity on the grid via a Hilbert-curve layout for power-of-two $L$, lifting the result to all three encodings (Section~\ref{sec:extensions}).
    \item \textbf{Evaluation.} Benchmarks of standalone fermionic permutation, the 2D FFFT, and sparse-SYK Trotter simulation under circuit-level and early-fault-tolerant cost models (Section~\ref{sec:eval}), showing crossovers near the $L=6$ breakeven and $72$--$74\%$ spacetime-volume reductions at $N=400$.
\end{itemize}

Section~\ref{sec:background} fixes notation: the JW mapping on the snake-ordered grid, the \FSWAP{} gate, and the odd--even transposition network. Section~\ref{sec:method} presents the protocol---Hall's decomposition (Section~\ref{sec:hall}), the $\Gamma$ operator (Section~\ref{sec:gamma}), its ancilla-free circuit (Section~\ref{sec:gamma-impl}), and the optimality argument (Section~\ref{sec:optimality}). Section~\ref{sec:extensions} treats BK and Parity, Section~\ref{sec:eval} reports the experiments, and Section~\ref{sec:conclusion} concludes. The folded-$\Gamma$ proofs and the 2D compilations of the two $O(\polylog N)$ protocols are in the appendices.

\section{Background}
\label{sec:background}

\subsection{Majorana operators and the Jordan--Wigner mapping}
\label{sec:jw}

We consider $N = L^2$ fermionic modes mapped to an $L \times L$ nearest-neighbor qubit array.
Each fermionic mode~$j$ ($0 \le j \le N{-}1$) gives rise to two \emph{Majorana operators} $\gamma_{2j}$ and $\gamma_{2j+1}$ as self-adjoint operators satisfying the Clifford algebra
\begin{equation}
\{\gamma_i,\, \gamma_k\} \;=\; 2\,\delta_{ik}, \qquad \gamma_i = \gamma_i^\dagger,
\label{eq:majorana-car}
\end{equation}
for $i,k \in \{0, \ldots, 2N{-}1\}$.
The Majorana picture is the natural language for fermion-to-qubit encodings (Section~\ref{sec:ternary-tree}), where each encoding is specified by a ternary tree whose leaves carry Majorana operators.

Under the Jordan--Wigner transformation (JWT)~\cite{jordan1928paulische}, the $2N$ Majorana operators map to qubit operators as
\begin{equation}
\gamma_{2j} \;\mapsto\; X_j \prod_{\ell=0}^{j-1} Z_\ell, \qquad
\gamma_{2j+1} \;\mapsto\; Y_j \prod_{\ell=0}^{j-1} Z_\ell.
\label{eq:jwt}
\end{equation}
The standard fermionic creation and annihilation operators are recovered as
\begin{equation}
a_j^\dagger = \tfrac{1}{2}(\gamma_{2j} - i\gamma_{2j+1}), \qquad
a_j = \tfrac{1}{2}(\gamma_{2j} + i\gamma_{2j+1}),
\label{eq:majorana-to-ca}
\end{equation}
which under Eq.~\eqref{eq:jwt} reproduces the familiar JWT form $a_j^\dagger \mapsto \frac{1}{2}(X_j - iY_j)\prod_{\ell<j} Z_\ell$ and preserves the canonical anticommutation relations $\{a_j, a_k^\dagger\} = \delta_{jk}$.

The JWT requires choosing a linear ordering of the $N$ modes.
On our $L \times L$ grid, we thread the qubits into a one-dimensional chain using the \emph{snake} (boustrophedon) order shown in Figure~\ref{fig:snake}: even rows are traversed left-to-right, odd rows right-to-left, giving the JWT index $\mathrm{jw}(r,c) = rL + c$ for even $r$ and $rL + (L{-}1{-}c)$ for odd $r$.
This ordering makes every horizontal neighbor JWT-adjacent (indices differ by~1), but vertical neighbors are separated by up to $2L{-}1$ sites along the chain. This vertical separation is the source of the long parity strings that fermionic permutation algorithms must address.

\begin{figure}[t]
    \centering
    \includegraphics[width=\linewidth]{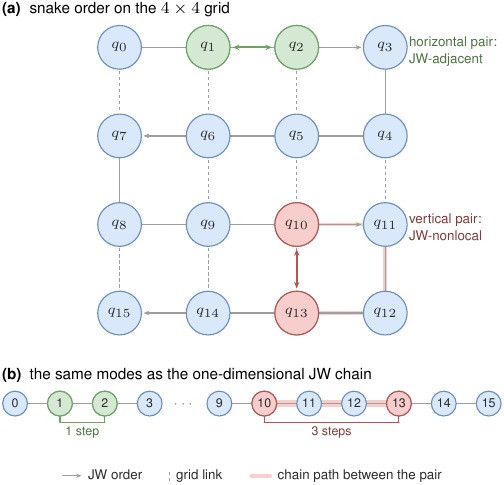}
    \caption{Snake JW order on a $4\times4$ grid. (a)~Qubits are threaded row by row, reversing direction at every row turn; arrows follow the JW index. A horizontal grid pair (green, $q_1,q_2$) is JW-adjacent. A vertical grid pair (red, $q_{10},q_{13}$) is JW-nonlocal: its chain path crosses the row turn, so a vertical swap needs a phase correction along every mode between them. (b)~The same modes as the one-dimensional JW chain: the horizontal pair is one step apart, the vertical pair three; in general vertical neighbors are up to $2L-1$ steps apart.}
    \label{fig:snake}
\end{figure}

More specifically, fixing a JW linear order over $N$ sites, for $\pi \in S_N$, the symmetric group of all $N!$ permutations of the $N$ modes, a fermionic permutation is defined as:
\begin{equation}
\Fop{\pi} \;=\; \Pop{\pi}\, \Vop{\pi},
\qquad
\Vop{\pi} \;=\; \prod_{(i,j) \in \Inv(\pi)} \CZ_{i,j},
\label{eq:fperm}
\end{equation}
where $\Pop{\pi}$ physically permutes the modes according to $\pi$, and
$\Inv(\pi) = \{(i,j) : i < j \;\wedge\; \pi(i) > \pi(j)\}$ is the inversion set. As data movement it is \emph{routing}---bringing interacting modes adjacent, as the FFFT does between its butterfly levels (Section~\ref{sec:eval-ffft}). Equivalently, $\Fop{\pi}$ is the unique unitary (up to global phase) that permutes the Majorana operators: $\Fop{\pi}\,\gamma_{2j+\alpha}\,\Fop{\pi}^\dagger = \gamma_{2\pi(j)+\alpha}$ for all modes $j$ and $\alpha \in \{0,1\}$. As the conjugation it is a \emph{fermionic Hamiltonian rewrite} that makes each term local in turn, which is how it accelerates Hamiltonian-simulation applications such as the sparse SYK Trotter steps of Section~\ref{sec:eval-syk}.

A concrete example shows why this primitive matters in Hamiltonian simulation. Take a single hopping term between two modes $j$ and $k$ that sit at opposite ends of the snake JW chain. Under the JW map of Eq.~\eqref{eq:jwt} it becomes a Pauli string of weight $\Theta(N)$,
\begin{equation}
a_j^\dagger a_k + a_k^\dagger a_j \;\mapsto\; \tfrac{1}{2}\big(X_j\,\overline{Z}\,X_k + Y_j\,\overline{Z}\,Y_k\big),\quad \overline{Z}=\!\!\prod_{\ell=j+1}^{k-1}\!\! Z_\ell,
\label{eq:hop-string}
\end{equation}
whose Trotter exponential compiles to the standard snake-order \CNOT staircase of depth $O(N)$ spanning the whole chain. Sandwiching the term in a fermionic permutation $\Fop{\pi}$ that brings $j$ and $k$ to adjacent JW positions leaves the physics unchanged but collapses the long string to a local weight-2 operator $\tfrac{1}{2}(XX{+}YY)$ in $O(1)$ depth. On the 2D grid our protocol supplies $\Fop{\pi}$ in $O(\sqrt N)$ depth, turning the $O(N)$ staircase into $O(\sqrt N)$ routing plus $O(1)$ interaction.

\subsection{Fermionic SWAP operations}
\label{sec:fswap-bg}

The $\FSWAP$ gate on two JWT-adjacent qubits $j, j{+}1$ acts as
\begin{equation}
\FSWAP_{j,j+1}\ket{s_j\, s_{j+1}} \;=\; (-1)^{s_j \cdot s_{j+1}} \ket{s_{j+1}\, s_j}.
\label{eq:fswap-adj}
\end{equation}
The $\FSWAP$ is locally equivalent to iSWAP and requires exactly 2 entangling gates, executed as $H_a\to\CNOT(a,b)\to\CNOT(b,a)\to H_b$~\cite{fswap}, strictly better than the naive $\SWAP$ (3~CNOTs) $+$ $\CZ$ (1~CNOT) $= 4$~CNOTs. Under the Clifford-tracked lattice-surgery cost convention the two Hadamards exchange the logical $X/Z$ labels and modify the subsequent parity measurements without separate Hadamard timesteps~\cite{litinski2018lattice}, so we count the $\FSWAP$ as two CNOT layers throughout.

For horizontal neighbors $(r,c)$ and $(r,c{\pm}1)$, the JWT indices differ by~1, so the $\FSWAP$ is a local 2-qubit gate. By contrast, for vertical neighbors $(r,c)$ and $(r{+}1,c)$ with JWT indices $j < k$ (WLOG, since $\Ffull{j,k} = \Ffull{k,j}$), the gap satisfies $k - j = 2|d| + 1$, where $d$ is the horizontal distance to the snake turning corner. The full fermionic SWAP then requires a \emph{parity correction}:
\begin{equation}
    \Ffull{j,k}\ket{\mathbf{s}} = (-1)^{s_j s_k \;+\; (s_j + s_k)\sum_{\ell=j+1}^{k-1} s_\ell}\; \ket{\mathbf{s}'},
\label{eq:full-fswap}
\end{equation}
where $\mathbf{s}'$ is obtained from $\mathbf{s}$ by exchanging the two occupations---$s'_j = s_k$, $s'_k = s_j$, and $s'_\ell = s_\ell$ otherwise. We write $P = \sum_{\ell=j+1}^{k-1} s_\ell$ for the \emph{parity string}, which is the core difficulty of 2D fermionic simulation.

For a vertical grid-neighbor pair with JWT indices $j < k$, the \emph{bare} FSWAP applies the standard FSWAP gate (Eq.~\eqref{eq:fswap-adj}) to the two physically adjacent qubits while \emph{ignoring} the parity string:
\begin{equation}
\Bswap{j,k}\ket{\mathbf{s}} = \FSWAP_{j,k}\ket{\mathbf{s}} = (-1)^{s_j s_k}\,\ket{\mathbf{s}'}.
\label{eq:bare-fswap}
\end{equation}
We denote by $\Ffull{j,k}$ the full fermionic SWAP of Eq.~\eqref{eq:full-fswap}, which additionally incorporates the parity correction $(-1)^{(s_j+s_k)P}$. Thus, the bare FSWAP is a simple nearest-neighbor gate, but it does not by itself supply the required parity correction for vertical moves. In Section~\ref{sec:gamma} we introduce a diagonal unitary~$\Gamma$ whose conjugation action converts every bare FSWAP into its full fermionic counterpart, thereby achieving the necessary correction using only nearest-neighbor gates and no ancillas.

Throughout this paper, we use the shorthand $\Bswap{t}$ and $\Ffull{t}$ to denote the (product of parallel) bare and full FSWAPs applied in round~$t$, respectively.

\subsection{Odd--even transposition sort}
\label{sec:oet-bg}

Any permutation on a line of $L$ elements can be realized by an \emph{odd--even transposition} (OET) network~\cite{oet, fswap1Dnetwork}: as shown in Fig.~\ref{fig:one-row}, in round $t\in\{0,\ldots,L-1\}$ the network conditionally swaps all pairs in
$\mathcal{M}_t=\{(i,i+1):0\le i<L-1,\ i\equiv t\pmod 2\}$.
Each round consists of pairwise-disjoint adjacent transpositions, hence all swaps within a round execute in parallel.

On a 1D chain of $N$ qubits under the JWT, an arbitrary fermionic permutation $\Fop{\pi}$ can be implemented by an OET sort using local FSWAPs.
On the 2D $L \times L$ grid ($N=L^2$), applying the same 1D approach along the snake order yields depth $2N = O(N)$ and $O(N^2)$ gates.
Our algorithm improves this to $10\sqrt{N} +O(1) = O(\sqrt{N})$ depth and $O(N\sqrt{N})$ gates.

\begin{figure}[t]
    \centering
    \includegraphics[width=\linewidth]{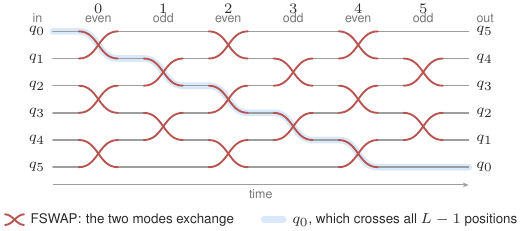}
    \caption{Odd--even transposition network on one row of $L=6$ modes, realizing the reversal $q_i\mapsto q_{L-1-i}$. Round $t$ applies \FSWAP{}s (red crossings) to every pair in $\mathcal M_t$, alternating between even and odd pairs; $L$ rounds suffice for any permutation of the row. The highlighted mode $q_0$ crosses all $L-1$ positions.}
    \label{fig:one-row}
\end{figure}

\section{Ancilla-Free Fermionic Permutation}
\label{sec:method}

\subsection{Protocol overview}
\label{sec:overview}

We present an ancilla-free protocol for implementing arbitrary fermionic permutations on an $L \times L$ grid in $10L + O(1) = O(L) = O(\sqrt{N})$ circuit depth using $O(N\sqrt{N})$ nearest-neighbor gates and zero ancilla qubits, which is a quadratic improvement in depth and a $\sqrt{N}$-factor improvement in gate count over the current best baseline.

The protocol exploits a classical result showing that any permutation on a 2D grid decomposes into three within-line stages: Row-Column-Row (Section~\ref{sec:hall}). Each stage is realized by a depth-$O(L)$ OET network of parallel FSWAPs (Section~\ref{sec:oet-bg}). Under the snake JWT, row-stage FSWAPs are natively JWT-adjacent and require no phase correction. The column stage, however, involves JWT-non-adjacent pairs whose full fermionic SWAPs demand parity-string corrections spanning up to $\Theta(L)$ intermediate qubits.

\begin{figure*}[t]
    \centering
    \includegraphics{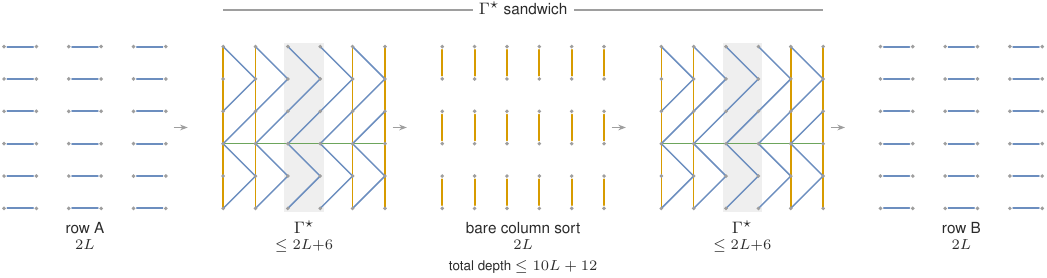}
    \caption{The five stages of Algorithm~\ref{alg:main}, drawn on a $6\times6$ grid.
    Each sorting stage is one OET network running on all lines in parallel; one round is drawn per sorting tile, and each stage runs $L$ rounds of depth-two \FSWAPs, giving the $2L$ below those tiles.
    The row stages use horizontal \FSWAPs (blue bars), which act on JWT-adjacent pairs and need no correction; the column stage uses bare vertical \FSWAPs (orange bars), which do.
    Conjugating that stage by the fixed phase operator $\Gamma^\star$ promotes every bare vertical \FSWAP to a full fermionic \SWAP at once.
    The two $\Gamma^\star$ tiles sketch its phase pattern in the folded basis, up to a constant number of local patches: thin strokes are phase edges, not gates, one on a diagonal of every cell (blue), one on every vertical edge outside the shaded central columns (orange), and one on each edge of a single horizontal path (green).
    Figure~\ref{fig:gamma-overview} gives the circuit that realizes it.}
    \label{fig:algorithm-overview}
\end{figure*}

\begin{algorithm}[H]
\caption{Ancilla-free fermionic permutation on an $L \times L$ grid}
\label{alg:main}
\begin{algorithmic}[1]
\Input Permutation $\pi \in S_N$ under a fixed snake JW order ($N = L^2$).
\Output A circuit implementing $\Fop{\pi}$ on the $N$ data qubits.
\State $(\pi^{\mathrm{rowA}}, \pi^{\mathrm{col}}, \pi^{\mathrm{rowB}}) \gets \textsc{Hall3StagePlan}(\pi)$
\Comment{classical}
\State \textsc{ExecuteRowStage}($\pi^{\mathrm{rowA}}$) \Comment{$O(\sqrt{N})$ depth}
\State \textsc{ApplyGamma}() \Comment{$O(\sqrt{N})$ depth, fixed $\Gamma^\star$}
\State \textsc{ExecuteBareColumnSort}($\pi^{\mathrm{col}}$) \Comment{$O(\sqrt{N})$ depth}
\State \textsc{ApplyGamma}() \Comment{same $\Gamma^\star$}
\State \textsc{ExecuteRowStage}($\pi^{\mathrm{rowB}}$) \Comment{$O(\sqrt{N})$ depth}
\end{algorithmic}
\end{algorithm}

To overcome this nonlocal phase requirement, we develop a diagonal unitary $\Gamma$ (Section~\ref{sec:gamma}); $\Gamma^\star$ denotes the fixed representative our synthesis uses, which is what Algorithm~\ref{alg:main} and Figure~\ref{fig:algorithm-overview} show. It is implementable with depth-$O(\sqrt{N})$ nearest-neighbor gates on the bare grid and zero ancillas, whose conjugation action converts every bare (parity-uncorrected) vertical FSWAP into its full fermionic counterpart:
\[
\Gamma \;\cdot\; \Bswap{j,k} \;\cdot\; \Gamma
\;=\;
\Ffull{j,k}.
\]
Because $\Gamma^2 = I$, inserting $\Gamma\Gamma$ between consecutive rounds of the column sort lets a single $\Gamma$ before and a single $\Gamma$ after the entire sort promote every bare round at once, yielding the five-stage pipeline of Algorithm~\ref{alg:main} and Figure~\ref{fig:algorithm-overview}.

\begin{figure*}[t]
    \centering
    \includegraphics[width=\textwidth]{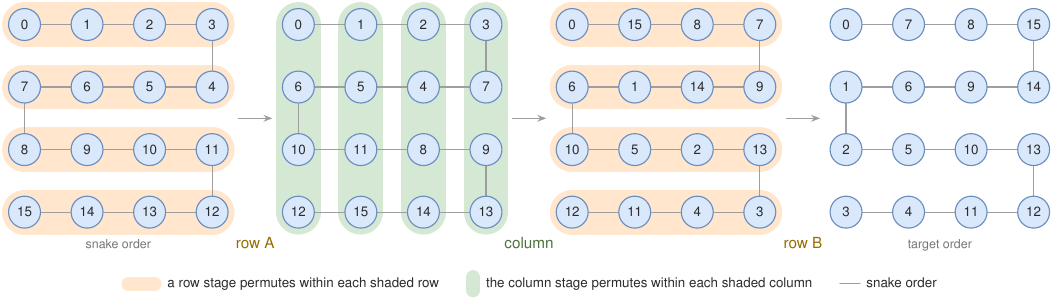}
    \caption{Hall's Row-Column-Row decomposition on a $4\times4$ grid: row~A permutes within each source row (shaded rows), the column stage moves each mode within its column to its destination row (shaded columns), and row~B permutes within each destination row to the final column. The grey path is the snake order.}
    \label{fig:3-stage}
\end{figure*}

\subsection{Hall's three-stage Row-Column-Row routing}
\label{sec:hall}

Hall's three-stage routing~\cite[Theorem~4]{hall-routing} decomposes any permutation $\pi$ on an $L \times L$ grid as $\pi = \pi^{\mathrm{rowB}} \circ \pi^{\mathrm{col}} \circ \pi^{\mathrm{rowA}}$ (Figure~\ref{fig:3-stage}), where $\pi^{\mathrm{rowA}}$ and $\pi^{\mathrm{rowB}}$ are within-row permutations and $\pi^{\mathrm{col}}$ is a within-column permutation. The construction builds a bipartite graph from source rows to destination rows, which is $L$-regular and hence admits a decomposition into $L$ perfect matchings by the edge coloring theorem~\cite{Lregularmatching}. These matchings determine intermediate column assignments: RowA places items into their intermediate columns, the column stage moves items to destination rows, and RowB routes items to their final columns. We treat \textsc{Hall3StagePlan} as a classical preprocessing step, computable in $O(N \log N)$ time via standard edge-coloring algorithms~\cite{edgecoloring}. We choose Row-Col-Row over Col-Row-Col because under the snake JWT, horizontal neighbors are JWT-adjacent, so row-stage FSWAPs are natively local; only the single column stage requires parity correction via $\Gamma$, giving two $\Gamma$ applications total instead of four.

\subsection{Row and column sorting stages}
\label{sec:row-col}

Both row and column stages sort within each line using the OET network of Section~\ref{sec:oet-bg}.
In the row stages, horizontal neighbors are JWT-adjacent, so each swap is a local FSWAP (2~CNOTs) requiring no parity correction.
All rows are JWT-disjoint and execute in parallel.
In the column stage, which operates inside the $\Gamma$ sandwich, each swap is a \emph{bare} FSWAP on a vertical neighbor pair---the same 2-CNOT gate applied without parity correction.
All columns are disjoint and execute in parallel.
In both cases, the OET network runs $L$ rounds; each FSWAP contributes depth~2, giving depth $2L = 2\sqrt{N}$ and $\le N\sqrt{N}$ entangling gates per stage. Sandwiched between two $\Gamma$ applications, the bare column sort is promoted to a full fermionic column sort.

\begin{figure*}[t]
    \centering
    \includegraphics{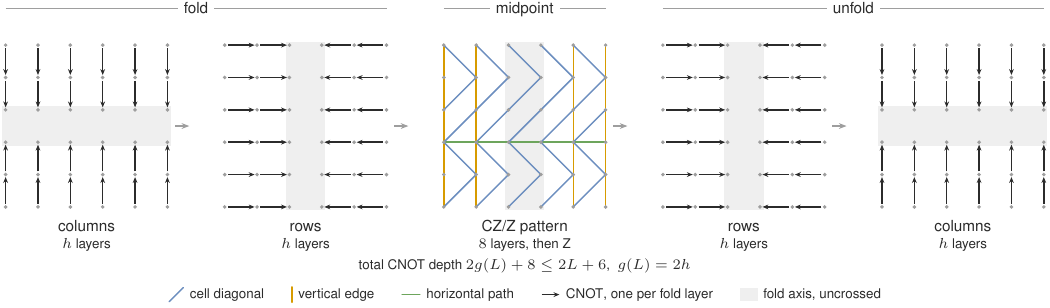}
    \caption{The ancilla-free $\Gamma^\star$ circuit of Algorithm~\ref{alg:gamma}, one tile per step, on a $6\times6$ grid ($m=3$, fold depth $h=m-1$ per axis, so the two folds together have depth $g(L)=2h$, which is $L-2$ for even $L$ and $L-1$ for odd $L$).
    The unfold repeats the fold's gates in reverse order, hence the identical arrow direction.
    The midpoint tile is the local edge set of Lemma~\ref{lem:gamma-local-factors}, with row $p$ the largest odd integer at most $m$.
    Two caveats on the drawing: $L=6$ is shown for compactness, whereas the eight-layer midpoint and the depth $2g(L)+8$ are the general construction (Appendix~\ref{app:gamma} proves $L\ge7$; $2\le L\le6$ use code-verified representatives); and the selected $\Gamma^\star$ differs from the pattern drawn by the residual \CZ gates of Appendix~\ref{app:gamma} ($4L-4$ edges and four cell patches, Fig.~\ref{fig:app-residual}), which Lemma~\ref{lem:app-zero-extra-placement} interleaves into unused slots of the row fold, midpoint, and row unfold layers at no extra depth.}
    \label{fig:gamma-overview}
\end{figure*}

\subsection{The $\Gamma$ operator}
\label{sec:gamma}

A valid $\Gamma$ correction is a diagonal unitary
\[
\Gamma\ket{\mathbf{s}} \;=\; (-1)^{f(\mathbf{s})}\ket{\mathbf{s}},
\]
where $f : \mathbb{F}_2^N \to \mathbb{F}_2$ is a degree-2 polynomial chosen using only the fixed snake JW geometry. Because every eigenvalue is $\pm 1$, $\Gamma$ is Hermitian and self-inverse: $\Gamma^\dagger = \Gamma = \Gamma^{-1}$. Its defining property is that conjugation converts each bare vertical \FSWAP into its full fermionic counterpart:
\begin{equation}
\Gamma \;\Bswap{j,k}\; \Gamma \;=\; \Ffull{j,k}.
\label{eq:gamma-conjugation}
\end{equation}
To see what condition this places on $f$, write $\gamma_{\mathbf{s}}=(-1)^{f(\mathbf{s})}$ and let $\mathbf{s}'$ be obtained from $\mathbf{s}$ by exchanging occupations $s_j$ and $s_k$. Since $\Bswap{j,k}$ and $\Ffull{j,k}$ already agree when $s_j=s_k$, only $s_j\oplus s_k=1$ constrains $\Gamma$. Comparing their phases in that case gives the \emph{parity-encoding condition}: for every vertical grid-neighbor pair with JW indices $j<k$ and every $\mathbf{s}\in\mathbb{F}_2^N$ with $s_j\oplus s_k=1$,
\begin{equation}
\gamma_{\mathbf{s}}\,\gamma_{\mathbf{s}'}
\;=\;(-1)^{\sum_{\ell=j+1}^{k-1}s_\ell}.
\label{eq:parity-encoding}
\end{equation}
This condition does not determine $f$ uniquely. We write $\Gamma^\star$ for the fixed representative used by our synthesis; it depends only on $L$ and the snake JW layout, not on $\pi$ or the active vertical swaps, and the same $\Gamma^\star$ is used on both sides of every column stage. Appendix~\ref{app:gamma} proves that the selected representative satisfies Eq.~\eqref{eq:parity-encoding} simultaneously for every vertical grid edge.

\subsection{Ancilla-free $\Gamma$ circuit}
\label{sec:gamma-impl}

For each $L$, we synthesize $\Gamma^\star$ by changing to a parity-XOR basis, applying a grid-wide nearest-neighbor \CZ/$Z$ phase pattern, and undoing the basis change. On a line of bits $x_0,\ldots,x_{L-1}$, a CNOT ladder stores cumulative parities. Our two-sided ladder uses $m=\lceil L/2\rceil$ and stores
\[
y_i=
\begin{cases}
\bigoplus_{j=0}^{i}x_j,&i<m,\\
\bigoplus_{j=i}^{L-1}x_j,&i\ge m.
\end{cases}
\]
We call this basis change a \emph{fold}: prefix- and suffix-XOR wavefronts propagate simultaneously from opposite boundaries toward the center, so each travels only about half a line. Folding first the columns and then the rows makes corner-anchored rectangular parities available across the grid. In this folded basis, the dense phase becomes a constant-depth local pattern of \CZ gates together with parallel \Z corrections spanning all data qubits. The cell diagonals of that pattern are not themselves nearest-neighbor pairs; eight midpoint layers synthesize them by \CNOT conjugation, after which a parallel layer of single-qubit \Z gates costs no further \CNOT depth. The pattern corrects every possible vertical edge, independent of the column-sort instance. Reversing the row and column folds restores the computational basis. Algorithm~\ref{alg:gamma} gives the circuit and Figure~\ref{fig:gamma-overview} shows its five blocks; Figures~\ref{fig:app-fold} and~\ref{fig:app-layers} in Appendix~\ref{app:gamma} illustrate the fold and draw every layer of the circuit for $L=7$.

The construction uses $O(N)$ nearest-neighbor \CNOT, \CZ, and \Z gates, no ancillas, and depth at most $2L+6$ for every $L\ge2$.

\begin{algorithm}[H]
\caption{\textsc{ApplyGamma}: ancilla-free $\Gamma^\star$ on an $L \times L$ grid}
\label{alg:gamma}
\begin{algorithmic}[1]
\State Fold all columns inward into the parity-XOR basis
\State Fold all rows inward into the parity-XOR basis
\State Apply the universal grid-wide nearest-neighbor \CZ/$Z$ phase pattern
\State Unfold all rows
\State Unfold all columns
\Statex Residual \CZ gates use idle slots of 2--4 (Lemma~\ref{lem:app-zero-extra-placement}).
\end{algorithmic}
\end{algorithm}

For every $L\ge2$, the fixed parity-correction operator $\Gamma^\star$ on the $L\times L$ row-snake grid uses zero ancillas and has nearest-neighbor CNOT depth
\begin{equation}
d_{\Gamma^\star}(L)\le
\begin{cases}
2L+4,&L\text{ even},\\
2L+6,&L\text{ odd}.
\end{cases}
\label{eq:gamma-depth}
\end{equation}
For $L\ge7$, Appendix~\ref{app:gamma} gives a closed-form phase and schedule attaining these bounds. For $2\le L\le6$, separately synthesized, code-verified circuits, their gate lists, and the verification scripts are provided in the GitHub repository (\texttt{common/gamma\_small.py}, \texttt{common/tests/test\_gamma.py}). Appendix~\ref{app:gamma} also proves correctness, and Section~\ref{sec:optimality} shows that the leading coefficient two is optimal.

\subsection{Correctness and resources}
\label{sec:full-alg}

Algorithm~\ref{alg:main} implements $\Fop{\pi}$ for any permutation $\pi \in S_N$. Fermionic permutations compose ($\Fop{\pi \circ \sigma} = \Fop{\pi} \cdot \Fop{\sigma}$), so it suffices to check each stage. In the row stages, horizontal neighbors are JWT-adjacent, so each \FSWAP is a valid fermionic adjacent transposition; rows are JWT-disjoint and execute in parallel. In the column stage, applying Eq.~\eqref{eq:gamma-conjugation} to each round and cancelling adjacent factors of $(\Gamma^\star)^2=I$ converts the bare vertical \FSWAPs into full fermionic \SWAPs, and columns are likewise parallel. Hall's routing gives $\pi = \pi^{\mathrm{rowB}} \circ \pi^{\mathrm{col}} \circ \pi^{\mathrm{rowA}}$, so the three stages compose to $\Fop{\pi^{\mathrm{rowB}}}\cdot\Fop{\pi^{\mathrm{col}}}\cdot\Fop{\pi^{\mathrm{rowA}}} = \Fop{\pi}$.

\paragraph{Resources.}
Each sorting stage has depth~$2L$ and $\le NL$ entangling gates.
Each folded $\Gamma^\star$ has depth at most $2L+6$ and $O(N)$ gates.
For every $L\ge2$, summing the three sorting stages and the two $\Gamma^\star$ blocks gives
\begin{equation}
d_{\mathrm{FP}}(L)\le10L+12,
\label{eq:fp-depth-exact}
\end{equation}
with $O(N\sqrt{N})$ entangling gates, zero ancillas, no mid-circuit measurements, and only nearest-neighbor interactions.
Classical preprocessing runs in $O(N\log N)$ time~\cite{edgecoloring}.
Table~\ref{tab:resources} gives the per-component breakdown.

\begin{table}[!htb]
\centering
\begin{tabular}{@{}lcc@{}}
\toprule
\textbf{Component} & \textbf{CNOT depth} & \textbf{CNOT gate count} \\
\midrule
Row A (OET sort)           & $2L$          & $\le NL$   \\
$\Gamma^\star$ (first)     & $\le2L+6$ & $O(N)$     \\
Bare Column Sort           & $2L$          & $\le NL$   \\
$\Gamma^\star$ (second)    & $\le2L+6$ & $O(N)$     \\
Row B (OET sort)           & $2L$          & $\le NL$   \\
\midrule
\textbf{Total}             & $\le10L+12$ & $O(N\sqrt{N})$ \\
\bottomrule
\end{tabular}
\caption{Resource breakdown for Algorithm~\ref{alg:main} ($N=L^2$). An \FSWAP contributes CNOT depth two.}
\label{tab:resources}
\end{table}

\subsection{Asymptotic optimality}
\label{sec:optimality}

Under a fixed JWT ordering on the $L\times L$ grid, worst-case fermionic permutation requires depth $\Omega(\sqrt{N})$ and $\Omega(N\sqrt{N})$ nearest-neighbor entangling gates. For the depth bound, we reduce ordinary qubit routing to fermionic permutation. In each row, choose $\lfloor L/2\rfloor$ disjoint horizontal pairs, order the two modes of each pair according to the fixed JWT, and encode one logical qubit as $\alpha\ket{01}+\beta\ket{10}$. Given any permutation $\sigma$ of these logical qubits, let $\widetilde{\sigma}$ move the corresponding mode pairs as blocks while preserving their internal order. Because every block contains exactly one fermion, each inversion of two blocks contributes the same minus sign. Therefore, on the encoded subspace, $\Fop{\widetilde{\sigma}}$ implements the ordinary logical-qubit permutation $\sigma$ up to the global phase $\operatorname{sgn}(\sigma)$. Contracting each pair gives an $L\times\lfloor L/2\rfloor$ grid $H$, so a depth-$d$ implementation of every fermionic permutation would yield an $O(d)$-depth protocol for arbitrary qubit routing on $H$. Devulapalli et al.~\cite{devulapalli2022routing} prove that such routing in the LOCC model satisfies $\mathrm{rt}_{\mathrm{LOCC}}(H)\ge 2/c(H)-1$. Since $c(H)=O(1/L)$, some fermionic permutation requires depth $\Omega(L)=\Omega(\sqrt{N})$, even with a constant number of locally attached ancillas per site, mid-circuit measurements, and classical feedforward. For the gate count, reflecting the logical blocks across the vertical axis transfers $\Theta(L\min(c,L-c))$ logical qubits across the cut after column $c$; summing this constant-capacity requirement over all cuts gives $\Omega(L^3)=\Omega(N\sqrt{N})$ gates.

For the $\Gamma$ correction itself, Appendix~\ref{app:gamma} gives a sharper light-cone bound. Any ancilla-free nearest-neighbor diagonal phase correction that satisfies Eq.~\eqref{eq:parity-encoding} must couple some pair of opposite corners. Consequently, conjugating a Pauli $X$ at one corner produces support at the other. A depth-$d$ local circuit can spread support by graph distance at most $d$, while opposite corners are distance $2L-2$ apart. Therefore $d\ge2L-2$, and the folded depth $2L+O(1)$ in Eq.~\eqref{eq:gamma-depth} is optimal up to an additive constant.

\section{Extensions to Bravyi--Kitaev and Parity Encodings}
\label{sec:extensions}

Our algorithm is formulated under the Jordan--Wigner encoding, but it extends naturally to the Bravyi--Kitaev (BK)~\cite{bk} and Parity~\cite{parity} encodings via efficient basis transformations. We assume $L$ is a power of two throughout this section. Converting between these encodings on the 2D grid can be performed in $O(\sqrt{N})$ depth by exploiting a Hilbert space-filling curve, so the overall depth for fermionic permutations under BK or Parity remains $O(\sqrt{N})$.

\subsection{Ternary tree encodings}
\label{sec:ternary-tree}

A broad class of fermion-to-qubit encodings can be described by \emph{ternary trees}~\cite{Jiang_2020, Yu_2025, Miller_2023, chiew2024ternarytreetransformationsequivalent, vlasov2024mutualtransformationsarbitraryternary}.
A full ordered ternary tree has $N$ parent nodes (each with exactly three children) and $2N{+}1$ leaves.
An encoding is specified by a tree shape, a labeling of the $N$ parent nodes by qubits, and a labeling of $2N$ leaves by Majorana operators, with one leaf unused (Figure~\ref{fig:ternary-tree-intro}).
The Pauli string for a given Majorana operator $\gamma_j$ is read off from its root-to-leaf path: at each parent node labeled by qubit~$k$, the left, middle, and right branches contribute $X_k$, $Y_k$, and $Z_k$ respectively.
Fermionic creation operators are recovered from pairs of Majorana operators via Eq.~\eqref{eq:majorana-to-ca}.

The JW, BK, and Parity encodings are all \emph{binary-shaped} ternary trees---meaning all qubit nodes lie in the binary subtree formed by left and right children---with qubits labeled by inorder traversal and Majorana leaves labeled left-to-right~\cite{Miller_2023}.
They differ only in tree shape: JW is a right spine, BK is a logarithmic-depth binary tree, and Parity is a left spine (Figure~\ref{fig:parity-bk-jw}).

\begin{figure}[t]
    \centering
    \includegraphics[width=\columnwidth]{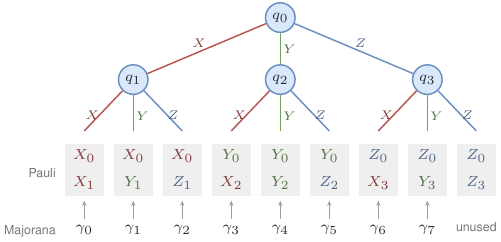}
    \caption{A ternary-tree encoding maps Majorana operators to Pauli strings via root-to-leaf paths. It is specified by the tree shape, a labeling of the $N$ parent nodes by qubits, and a labeling of $2N$ leaves by Majorana operators.}
    \label{fig:ternary-tree-intro}
\end{figure}

A $\CNOT_{j,k}$ gate between a parent--child pair in the tree corresponds to a standard tree rotation~\cite{Yu_2025, vlasov2024mutualtransformationsarbitraryternary}.
These rotations preserve both the inorder traversal of qubit nodes and the left-to-right ordering of Majorana leaves~\cite{constantinides2025lowdepthfermionroutingancillas}. They therefore transform between encodings without relabeling (Figure~\ref{fig:parity-bk-jw}).

\begin{figure*}[t]
    \centering
    \includegraphics[width=\textwidth]{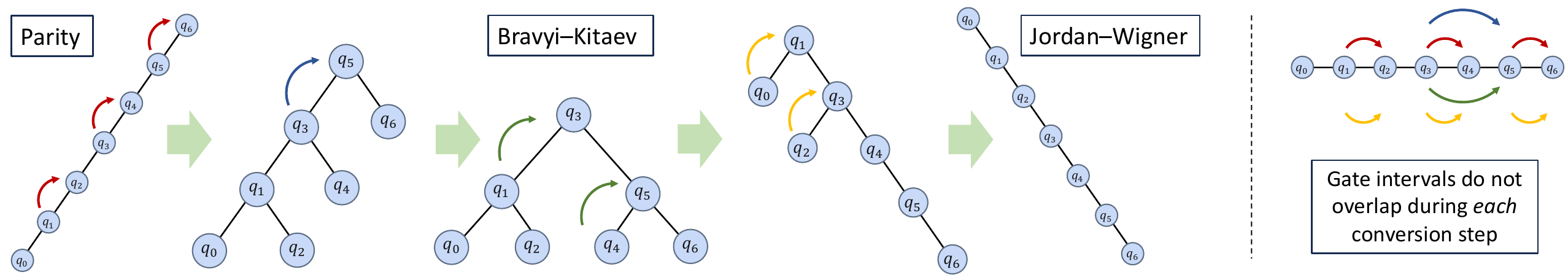}
    \caption{Parity $\to$ BK $\to$ JW transformation via parallel tree rotations.
    Each round consists of disjoint \CNOT gates whose intervals on the inorder-traversal line are non-overlapping.
    For BK$\to$JW, the maximum interval length decays geometrically; for Parity$\to$BK, it grows in the reverse order.
    On the Hilbert-curve-ordered grid, the per-round depth is $O(\sqrt{d_{\max}})$, and the geometric series sums to $O(\sqrt{N})$.}
    \label{fig:parity-bk-jw}
\end{figure*}

\subsection{Fermionic permutation under BK and Parity}
\label{sec:ferm-perm-bk-parity}

\begin{figure*}[t]
    \centering
    \includegraphics[width=\textwidth]{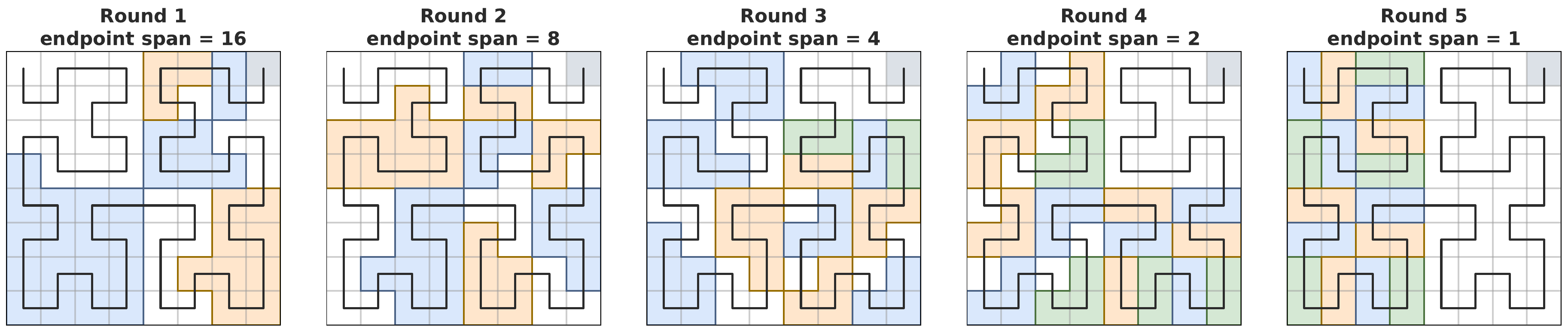}
    \caption{BK$\to$JW CNOT rounds for $N = 63$ on an $8 \times 8$ Hilbert-curve-ordered grid; the black path is the Hilbert curve and the one grey cell is the unused $64$th site. Each panel shows one round. Every outlined block is one inorder-traversal interval, coloured so that no two touching blocks share a colour (four colours always suffice, so a colour is reused by blocks that do not touch). Each interval lies in a pairwise-disjoint dyadic container whose Hilbert image is a compact rectangle, so all route--CNOT--unroute circuits in one round execute in parallel.}
    \label{fig:hilbert-bk-jw}
\end{figure*}

Both BK-to-JW and Parity-to-BK conversions proceed by $O(\log N)$ rounds of parallel tree rotations, each round consisting of disjoint \CNOT gates~\cite{constantinides2025lowdepthfermionroutingancillas}. Within each round, (1)~each qubit appears in at most one gate, and (2)~the inorder-traversal intervals spanned by different gates are non-overlapping. Appendix~\ref{app:hilbert-disjoint} proves that these intervals admit pairwise-disjoint dyadic containers of comparable length.

To execute these rounds efficiently on the 2D grid, we lay out qubits according to the \emph{Hilbert space-filling curve}: inorder-traversal index~$i$ maps to grid cell~$H(i)$. Under this layout, each dyadic container of length $O(d)$ becomes a compact rectangle of diameter $O(\sqrt{d})$. Moving one endpoint along a shortest path, applying the tree-rotation \CNOT, and reversing the path implements the rotation without ancillas and restores all intervening data. Because the rectangles are disjoint, all rotations in one round execute concurrently; Figure~\ref{fig:hilbert-bk-jw} shows the rounds of a BK$\to$JW conversion for $N=63$.

For BK$\to$JW, the total depth sums as
\[
\sum_{r=1}^{\log_2 N} O\!\left(\sqrt{N/2^r}\right)
\;=\; O(\sqrt{N}).
\]
Parity$\to$BK uses the reflected Hilbert placement of Corollary~\ref{cor:parity-jw-depth} and the same series in reverse, also $O(\sqrt{N})$; a Hall permutation then restores the standard Hilbert placement before BK$\to$JW, and restoring the snake layout by Hall routing costs another $O(\sqrt{N})$ depth~\cite{hall-routing}. Each conversion contains $O(N)$ rotations, each routed with at most $O(L)$ nearest-neighbor gates in the Hilbert layout, while changing back to the snake layout uses $O(NL)$ gates; thus the total gate count is $O(N\sqrt{N})$.

Any fermionic permutation in the Bravyi--Kitaev or Parity encoding on an $L\times L$ grid can therefore be implemented in depth $O(\sqrt{N})$ with $O(N\sqrt{N})$ nearest-neighbor gates and zero ancillas, in four steps: (1)~convert from the source encoding to JW using the tree-rotation \CNOT layers on the Hilbert layout; (2)~reroute the Hilbert layout into the snake layout---an ordinary qubit permutation with no fermionic phase correction---using standard 2D routing~\cite{hall-routing}; (3)~apply Algorithm~\ref{alg:main}; and (4)~reverse steps (2) and (1) to return to the original encoding. Each step has $O(\sqrt{N})$ depth, so the total remains $O(\sqrt{N})$.

Constantinides et al.~\cite{constantinides2025lowdepthfermionroutingancillas} show that any product-preserving ternary tree encoding~\cite{chiew2024ternarytreetransformationsequivalent} can be converted to JW in $O(\log^2 N)$ depth under all-to-all connectivity. Extending the $O(\sqrt{N})$ grid bound beyond the BK and Parity cases above---to arbitrary grid sizes and to general product-preserving ternary trees---is the subject of forthcoming work by the authors.

\section{Evaluation}
\label{sec:eval}

We evaluate our protocol on three applications: standalone fermionic permutations (\S\ref{sec:eval-fp}), the 2D fast fermionic Fourier transform (\S\ref{sec:eval-ffft}), and Trotter simulation of the sparse SYK model (\S\ref{sec:eval-syk}).

\subsection{Experimental setup, metrics, and baselines}
\label{sec:eval-setup}

Circuits are constructed on $L\times L$ grids of \texttt{cirq.GridQubit}s with nearest-neighbor connectivity~\cite{cirq} under the snake JW ordering. Every \FSWAP is compiled into exactly two CNOTs (Section~\ref{sec:fswap-bg}). All simulations were conducted on a PC with a 12-core CPU and 32 GB of RAM.

We use CNOT depth $D$ and logical spacetime volume $S=QD$. For the circuit schedule, the estimated no-fault probability is
\begin{equation}
\begin{aligned}
P_{\mathrm{nf}}
&=(1-p)^G(1-p)^I,\\
I&=QD-2G.
\end{aligned}
\label{eq:no-fault-estimate}
\end{equation}
Here $G$ counts CNOT locations and $I$ counts the remaining non-entangling patch locations in the CNOT schedule, including locations with single-qubit gates or idle locations. Thus each \FSWAP contributes two gate-error opportunities. In fault-tolerant operation, syndrome extraction continues on both active and idle data patches; for standalone FP we therefore use the uniform logical error approximation $p\in\{10^{-4},10^{-5}\}$ for CNOT, single-qubit, and idle locations~\cite{Fowler_2012,google2025quantum,beverland2022assessing}. Apart from the Hadamards inside each $\FSWAP$ (Section~\ref{sec:fswap-bg}), the only single-qubit gates in this benchmark are Pauli-$Z$ corrections, which are free Pauli-frame updates~\cite{mckay2017efficient,riesebos2017pauli}; the simulation applies the exact $\FSWAP$ unitary.

For the Clifford FP benchmark, in each CNOT-equivalent layer we apply two-qubit depolarizing noise of rate \(p\) to every active pair and independent single-qubit depolarizing noise of rate \(p\) to every idle qubit. We estimate the process fidelity $F_e=\Pr(P_{\mathrm{data}}=I)$ by propagating the sampled Pauli errors through the noisy forward circuit with Stim's frame simulator~\cite{stim} ($10^6$ shots per circuit) and counting a shot as successful when the final Pauli on every data qubit is the identity; this detects bit and phase errors alike and credits faults that cancel. The other benchmarks are non-Clifford, so we use the estimated no-fault probability of Eq.~\eqref{eq:no-fault-estimate} instead.

For the FFFT and sparse SYK benchmarks, we use a block-volume proxy motivated by FLASQ~\cite{huggins2025flasq}. One block is one logical patch over one logical timestep of $d$ surface-code syndrome-extraction cycles. In our proxy, each scheduled \CNOT layer takes one such timestep; hence $S=QD$ charges every active or idle patch, including ancilla patches. For $d=11$, $p_{\mathrm{phys}}=10^{-3}$, and $p_{\mathrm{th}}=10^{-2}$, the standard surface-code fit gives the logical failure probabilities per patch over one syndrome-extraction cycle and one logical block, respectively~\cite{Fowler_2012,google2025quantum,beverland2022assessing}:
\begin{equation}
\begin{aligned}
p_{\mathrm{cyc}}&=0.03\left(\frac{p_{\mathrm{phys}}}{p_{\mathrm{th}}}\right)^{(d+1)/2}=3\times10^{-8},\\
p_B&=1-(1-p_{\mathrm{cyc}})^d\simeq3.3\times10^{-7}.
\end{aligned}
\label{eq:logical-rate}
\end{equation}
An exact $\pi/4$ rotation consumes one $T$ state, whereas a generic rotation synthesized to accuracy $10^{-7}$ consumes $c_T=70$~\cite{ross2016optimal}, giving
\begin{equation}
n_T=n_{T}^{\mathrm{exact}}+c_Tn_{\mathrm{synth}},\qquad c_T=70,
\label{eq:magic-count}
\end{equation}
We assign one block per $T$-state consumption and $V_T=\alpha n_T$ blocks to magic-state preparation, where $\alpha$ is used to describe the cost ratio of a magic $T$ vs a standard \CNOT. We use $\alpha=1$ in both experiments, motivated by low-overhead magic-state preparation~\cite{litinski2019magic,gidney2024cultivation}. Applying $p_B$ independently to $S$, the $n_T$ consumption blocks, and the $V_T$ preparation blocks gives
\begin{equation}
F_{\mathrm{FT}}=(1-p_B)^{S+n_T+V_T}.
\label{eq:ft-fidelity}
\end{equation}
Its exponent $S+n_T+V_T=S+(1+\alpha)n_T$ ($S+2n_T$ at $\alpha=1$) is the full charged volume behind the no-fault probability, and it is the total we report alongside the routing volume $S$ below.

\paragraph{Baselines.}
We numerically benchmark three FP implementations in this section and compare their asymptotic scaling in Table~\ref{tab:fp-comparison}.
The \emph{1D FSWAP baseline} sorts modes along the snake chain via an OET network, achieving $O(N)$ depth and $O(N^2)$ gates with zero ancillas.
\emph{FP with ancillas} pairs our Hall RCR decomposition with the ancilla-based $\Gamma$ construction of Jiang et al.~\cite{gamma}, which places $L$ traveling ancilla qubits in an extra column adjacent to the grid; this reaches $O(\sqrt{N})$ depth at $13L + 4$ per $\Gamma$ application but spends $L = \sqrt{N}$ ancillas.
\emph{FP without ancillas} (this work) uses the folded circuit of Section~\ref{sec:gamma-impl}: for every $L\ge2$, each $\Gamma^\star$ has depth at most $2L+6$, giving total depth at most $10L+12$.
For asymptotic context, Table~\ref{tab:fp-comparison} also lists our 2D-NN compilations of recent $O(\polylog N)$ fermionic routing protocols~\cite{maskara2025fastsimulationfermionsreconfigurable,constantinides2025lowdepthfermionroutingancillas}, derived in Appendices~\ref{app:reconf-2dnn} and~\ref{app:constantinides-2dnn}. These bounds are listed for analytical context; the two protocols are not benchmarked numerically.

\begin{table}[t]
\centering
\resizebox{\columnwidth}{!}{%
{\setlength{\tabcolsep}{2.5pt}
\begin{tabular}{@{}lccc@{}}
\toprule
\textbf{Method} & \textbf{Depth} & \textbf{Gate count} & \textbf{Ancillas} \\
\midrule
1D \FSWAP baseline  & $2N$          & $O(N^2)$          & 0 \\
FP w/ ancillas~\cite{gamma} & $32\sqrt{N} + 8$ & $O(N^{3/2})$ & $\sqrt{N}$ \\
\textbf{FP w/o ancillas (ours)} & {\boldmath$10\sqrt{N}+12$} & {\boldmath$O(N^{3/2})$} & \textbf{0} \\
\midrule
Reconf 2D$^*$~\cite{maskara2025fastsimulationfermionsreconfigurable} & $\tfrac{17}{2}\sqrt{N}\log_2 N$ & $O(N^{3/2}\log N)$ & $N+\sqrt{N}$ \\
All-to-all 2D~\cite{constantinides2025lowdepthfermionroutingancillas} & $27\sqrt{N}(\log_2 N)^2$ & $O(N^{3/2}\log^2 N)$ & 0 \\
Dynamic switching (concurrent)$^{\ddagger}$~\cite{aigner2026fermionlattices} & $14\sqrt{N}$ & $O(N^{3/2})$ & 0 \\
\bottomrule
\end{tabular}
}%
}
\caption{Fermionic permutation circuit comparison on the $L\times L$ grid ($N=L^2$). Bottom rows show leading-order depth; $^*$ requires measurement and feedforward; $^{\ddagger}$our accounting of the displayed construction (three odd--even stages of depth $2L$ and two encoding switches of depth $4L-2$).}
\label{tab:fp-comparison}
\end{table}

Since $2L^2-(10L+12)=2(L-6)(L+1)$, the bounds in Table~\ref{tab:fp-comparison} break even at $L=6$ ($N=36$), and ours is quadratically smaller for $L>6$. Practical crossover points can differ because scheduled depths may lie below these worst-case guarantees.

\subsection{Fermionic permutation benchmarking}
\label{sec:eval-fp}

\begin{figure}[t]
    \centering
    \includegraphics[width=\columnwidth]{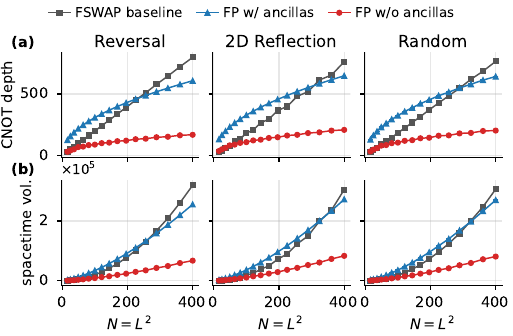}
    \caption{FP circuit resources for reversal, 2D reflection, and random permutations (20 random instances; mean $\pm$ standard deviation).
    The three curves are the 1D \FSWAP{} network (FSWAP baseline), the ancilla-based $\Gamma$ of Ref.~\cite{gamma} (FP w/ ancillas), and our ancilla-free folded $\Gamma$ (FP w/o ancillas). (a)~CNOT depth after scheduling.
    (b)~Spacetime volume. The ancilla-based construction incurs an additional penalty because its $\Theta(\sqrt{N})$ ancillas contribute even while idle.}
    \label{fig:fp-depth-spacetime}
\end{figure}
We benchmark on three permutation families: reversal, 2D reflection (matrix transpose), and random permutations (20 instances per $L$), for every integer $L=4,5,\ldots,20$ ($N=16$ to $400$).

Figure~\ref{fig:fp-depth-spacetime}(a) confirms the expected scaling: the 1D \FSWAP network grows as $O(N)$ while both 2D methods grow as $O(\sqrt{N})$, with folded $\Gamma$ achieving the lowest depth because its per-$\Gamma$ depth is $2L+O(1)$ rather than $13L+O(1)$.

At the depth-bound breakeven $N=36$ ($L=6$), folded $\Gamma$ becomes strictly shallower than 1D and remains so for every larger $L$, with a quadratically growing separation.
Figure~\ref{fig:fp-depth-spacetime}(b) shows that eliminating ancillas compounds the improvement; at \(N=400\), our method reduces the random-permutation mean spacetime volume by \(70\%\) relative to the ancilla method and \(74\%\) relative to the 1D baseline.

\begin{figure}[t]
    \centering
    \includegraphics[width=\columnwidth]{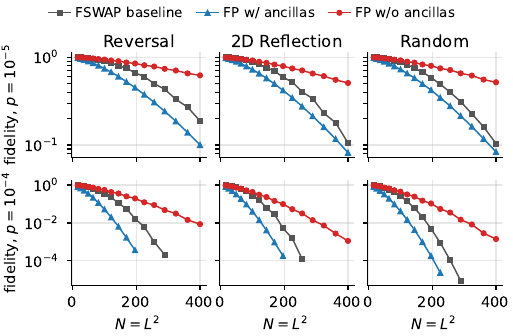}
    \caption{Process fidelity $F_e$ of the noisy FP circuit from Stim frame simulation ($10^6$~shots per circuit), for $p\in\{10^{-5},10^{-4}\}$; curves as in Fig.~\ref{fig:fp-depth-spacetime}, and random points are means over 20 instances. Points with fewer than 100 surviving shots are withheld. The 95\% finite-shot intervals, propagated through the instance means, lie within the markers at every plotted point and are therefore not drawn. Beyond the crossover region, folded $\Gamma$ gives the largest process fidelity.}
    \label{fig:fp-stim}
\end{figure}
Figure~\ref{fig:fp-stim} shows the same family-dependent crossover: folded $\Gamma$ first becomes best for reversal at $N=25$, for random permutations at $N=36$, and for transpose at $N=64$. At $p=10^{-5}$ and $N=400$, its process fidelity remains above $0.50$ for every family, compared with $0.10$--$0.19$ for 1D and $0.08$--$0.10$ with ancillas; at $p=10^{-4}$ all methods decay earlier.

\begin{figure*}[t]
  \centering
  \includegraphics[width=\textwidth]{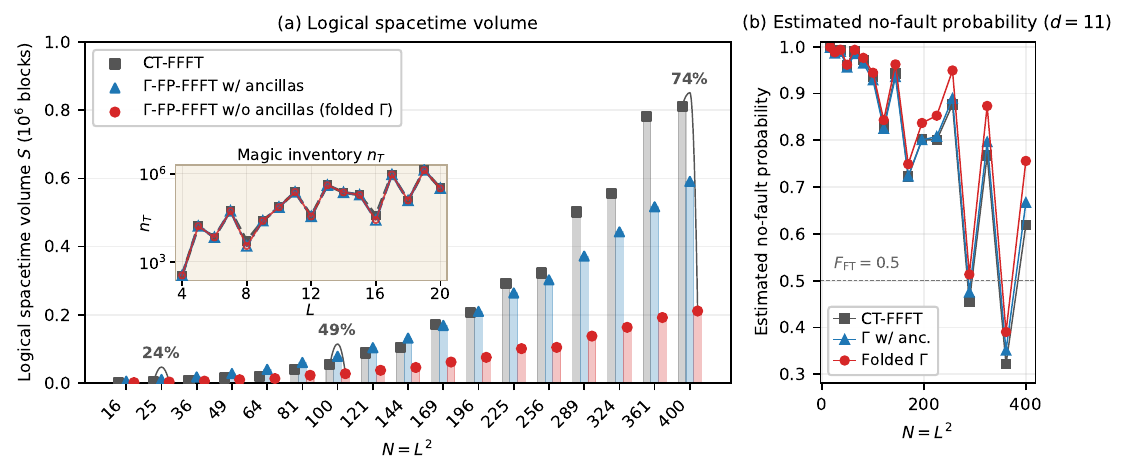}
  \caption{Early-fault-tolerant resources for the mode-preserving FFFT implementations.
  \textbf{(a)}~Logical spacetime volume $S=QD$; markers and percentage annotations compare folded-$\Gamma$ with CT-FFFT using $S$. The inset gives the magic inventory $n_T$ for all three methods; $\alpha=1$ makes $V_T=n_T$, so magic charges $2n_T$ blocks in Eq.~\eqref{eq:ft-fidelity}.
  \textbf{(b)}~Estimated no-fault probability from Eq.~\eqref{eq:ft-fidelity} at code distance $d=11$.}
  \label{fig:ffft-ft}
\end{figure*}

\subsection{2D fast fermionic Fourier transform}
\label{sec:eval-ffft}

The fermionic Fourier transform (FFFT) maps between the position and momentum bases of a fermionic system, acting on creation operators as
\begin{equation}
  \mathit{FFFT}^\dagger\, \tilde{a}_k^\dagger\, \mathit{FFFT}
  \;=\;
  \frac{1}{\sqrt{N}} \sum_{n=0}^{N-1} e^{-i\frac{2\pi}{N} n k}\, \hat{a}^\dagger_n,
  \label{eq:ffft-def}
\end{equation}
and is a standard primitive in fermionic quantum simulation~\cite{Verstraete_2009, ffft, maskara2025fastsimulationfermionsreconfigurable, constantinides2025lowdepthfermionroutingancillas}. We use the nearest-neighbor CT-FFFT implemented in OpenFermion~\cite{ffft-ct, openfermion} as the 1D baseline. It applies one length-$N$ transform along the snake chain with $O(N)$ depth and $O(N^2)$ two-qubit gates. Prior $O(\sqrt{N})$-depth constructions on a 2D grid require either $\Theta(\sqrt{N})$ traveling ancillas~\cite{gamma} or $O(N)$ compact-encoding ancillas~\cite{compactencoding}.

\paragraph{Construction.}
We use the mode-preserving 2D construction throughout. Starting from the canonical snake layout, we first reverse the odd rows to obtain raster order. We then apply $\Gamma^\star$, all $L$ column FFFTs in parallel, and $\Gamma^\star$ again, followed by the single-qubit twiddle rotations and all $L$ row FFFTs in parallel. A final FP restores every Fourier mode to the canonical snake layout. Each row and column transform is the same length-$L$ OpenFermion CT-FFFT used in the baseline. The column gates conserve occupation, so Eq.~\eqref{eq:parity-encoding} supplies their missing JW phases; since $(\Gamma^\star)^2=I$, the same outer pair corrects the complete column transform.

Each length-$L$ line transform has $O(L)$ depth and at most $O(L^2)$ two-qubit gates. Including the layout changes, $\Gamma^\star$ pair, and final FP, the complete construction therefore has $O(L)=O(\sqrt{N})$ depth and $O(L^3)=O(N\sqrt{N})$ two-qubit gates. Its non-Clifford rotation count is also at most $O(L^3)$; these rotations come only from the line FFFTs and twiddle phases, while the layout, $\Gamma^\star$, and FP stages are Clifford. Table~\ref{tab:ffft-comparison} summarizes the asymptotic comparison.

\begin{table}[t]
    \centering
    \resizebox{\columnwidth}{!}{%
    \begin{tabular}{@{}lccc@{}}
    \toprule
    \textbf{Method} & \textbf{Depth} & \textbf{Gates} & \textbf{Ancillas} \\
    \midrule
    CT-FFFT~\cite{ffft-ct, openfermion}      & $O(N)$         & $O(N^2)$        & 0 \\
    Compact encoding~\cite{compactencoding}  & $O(\sqrt{N})$  & $O(N\sqrt{N})$  & $O(N)$ \\
    Gamma-FP-FFFT w/ ancillas~\cite{gamma}   & $O(\sqrt{N})$  & $O(N\sqrt{N})$  & $\sqrt{N}$ \\
    \textbf{Gamma-FP-FFFT (ours)}            & $O(\sqrt{N})$  & $O(N\sqrt{N})$  & \textbf{0} \\
    \bottomrule
    \end{tabular}
    }
    \caption{FFFT circuit comparison on the $L\times L$ grid ($N=L^2$).}
    \label{tab:ffft-comparison}
\end{table}

\paragraph{Results.}
We compare three mode-preserving implementations for every integer $L=4,5,\ldots,20$: the 1D OpenFermion CT-FFFT, the 2D construction above with the ancilla-based $\Gamma$, and the same construction with our folded ancilla-free $\Gamma$.

Figure~\ref{fig:ffft-ft}(a) shows that folded $\Gamma$ first beats CT-FFFT in both $S$ and the charged volume $S+n_T+V_T$ at $L=5$ ($N=25$), one grid size before the $L=6$ circuit-level breakeven, and remains better through $L=20$. This application-level crossover comes from executing the row and column FFFTs in parallel instead of repeatedly reshuffling the full transform along one line. The ancilla method becomes consistently better than CT-FFFT in $S$ from $L=15$ ($N=225$). At $L=5$, including magic changes the reduction from $23.7\%$ in $S$ to $2.4\%$ in $S+n_T+V_T$. At $L=20$, folded $\Gamma$ reduces $S$ by $74.0\%$ and $S+n_T+V_T$ by $41.6\%$.

The inset shows the magic inventory $n_T$. The two 2D methods coincide because they differ only in their Clifford $\Gamma$ circuits. Their magic comes from the shared length-$L$ line FFFTs and twiddle rotations, whereas CT-FFFT applies one length-$N$ transform. The variation in $n_T$ therefore comes from the FFFT arithmetic rather than fermionic routing.

Figure~\ref{fig:ffft-ft}(b) gives the corresponding estimated no-fault probability. Folded $\Gamma$ again wins from $L=5$ onward and reaches $F_{\mathrm{FT}}=0.756$ at $L=20$, compared with $0.619$ for CT-FFFT and $0.667$ for the ancilla method. The nonmonotonicity follows from the same rotation inventory as Panel~(a).

\begin{figure*}[t]
  \centering
  \includegraphics[width=\textwidth]{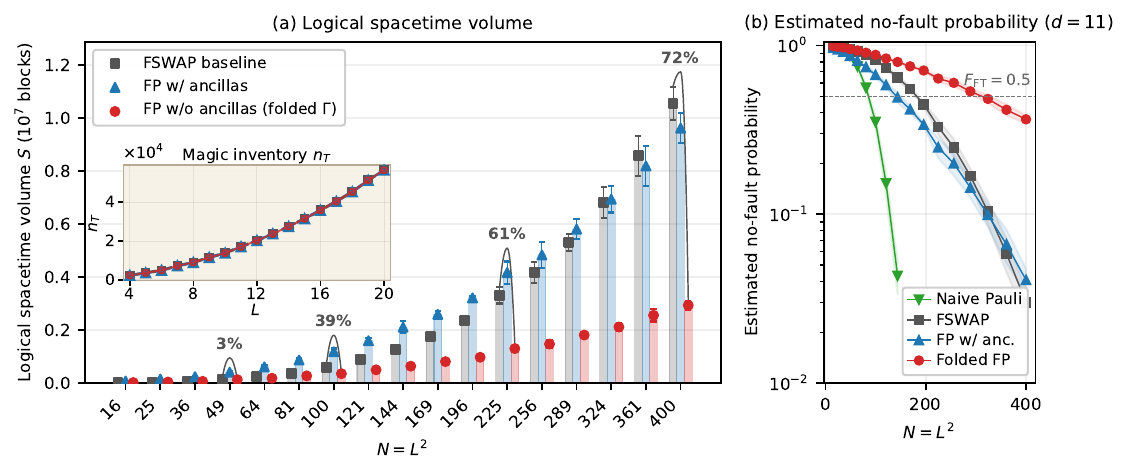}
  \caption{Fault-tolerant resources for one sparse-SYK Trotter step at $k=1$, averaged over 10 random instances for every integer $L=4,5,\ldots,20$.
  \textbf{(a)}~Mean logical spacetime volume $S=QD$ for the 1D \FSWAP network, ancilla-based $\Gamma$~\cite{gamma}, and folded ancilla-free $\Gamma$; markers, error bars, and percentage annotations compare folded-$\Gamma$ with the 1D \FSWAP baseline using $S$. The inset gives the raw magic volume cost $n_T$ for all three methods, which coincides because their interaction workload is common.
  \textbf{(b)}~Estimated no-fault probability from Eq.~\eqref{eq:ft-fidelity}, including the naive-Pauli baseline (no coloring or FP, $O(N^2)$ depth and $O(N^3)$ volume). The logarithmic scale omits points below $10^{-2}$ rather than extending them as flat tails. At $N=400$, the folded method retains $F_{\mathrm{FT}}=0.366$, compared with $0.030$ for 1D and $0.041$ with ancillas.}
  \label{fig:syk}
\end{figure*}

\subsection{Sparse SYK simulation}
\label{sec:eval-syk}

The sparse Sachdev--Ye--Kitaev (SYK) model is a strongly interacting fermionic Hamiltonian
\begin{equation}
  H = \sum_{\{i < j < k < l\} \in \mathcal{S}} J_{ijkl}\, \gamma_i \gamma_j \gamma_k \gamma_l,
  \label{eq:syk}
\end{equation}
which has long-range four-body interactions and is believed to be dual to toy models of quantum gravity~\cite{syk, Sachdev_1993, xu2020sparsemodelquantumholography}.
Here $\gamma_0, \ldots, \gamma_{2N-1}$ are the Majorana operators of Section~\ref{sec:jw}, $\mathcal{S}$ is a random sparse subset with each quartet included independently with probability $k \cdot 2N / \binom{2N}{4}$ (yielding $2kN$ terms in expectation), and couplings $J_{ijkl}\sim\mathcal{N}(0,6/(2N)^3)$ set an illustrative energy scale.  Their magnitudes do not affect the coloring, routing circuits, or resource counts reported here.
We fix $k=1$ as a representative constant-sparsity workload, for which the expected Hamiltonian term count is linear in $N$.

\paragraph{Construction.}
We implement one first-order Trotter step.
Under JW, each term becomes a multi-qubit Pauli rotation implemented by a nearest-neighbor \CNOT ladder and inverse around a $Z$ rotation. Unsupported sites in the span are dirty bridges restored by the inverse sweep; with $s$ supported and $b$ bridge sites, the two sweeps use $2(s-1+2b)$ \CNOTs. We color mode-conflicting terms and pack each color class so that the complete snake intervals, including bridges, are disjoint. A Trotter step then applies a forward FP, the parallel rotations, and the inverse FP for each color class. The rotations and color classes are identical across FP methods, isolating routing.
\paragraph{Results.}
\emph{Logical spacetime volume (Panel a).} Figure~\ref{fig:syk}(a) shows that folded $\Gamma$ has lower mean logical spacetime volume than the 1D network for every sampled $L>6$, while the ancilla method first does so at $N=361$. At $N=400$, folded $\Gamma$ reduces $S$ by $72.2\%$ relative to 1D and $69.5\%$ relative to the ancilla method.

For folded $\Gamma$, the magic charge falls from $(n_T+V_T)/S=27.0\%$ at $N=16$ to $3.8\%$ at $N=400$; at the latter size, the charged volume $S+n_T+V_T$ is lower by $71.4\%$ than 1D and $68.7\%$ than the ancilla method.

\emph{No-fault probability (Panel b).} This volume gap caps the reachable system size. Under the no-fault model of Eq.~\eqref{eq:ft-fidelity}, the folded method has mean $F_{\mathrm{FT}}=0.366$ at $N=400$, compared with $0.030$ for 1D and $0.041$ for the ancilla-based method. Its last sampled mean above $0.5$ is $N=289$, compared with $N=169$ for 1D, $N=121$ for the ancilla-based method, and $N=81$ for naive Pauli.

\section{Conclusion}
\label{sec:conclusion}
Implementing arbitrary fermionic permutations on 2D nearest-neighbor hardware is the dominant routing bottleneck in fermionic simulation, yet no prior method simultaneously saturates the $\Omega(\sqrt{N})$ lower bound and avoids ancillas. We close this gap via algorithm--architecture co-design: Hall's Row--Column--Row decomposition tailors the routing to the grid, while a folded phase-polynomial $\Gamma$ realizes the universal diagonal parity correction in depth $2L+O(1)$. The resulting ancilla-free protocol has depth $10L+O(1)$ and remains asymptotically optimal even against LOCC protocols granted $O(N)$ ancillas and feedforward. For power-of-two $L$, a Hilbert-curve layout extends the $O(\sqrt{N})$ result to the Bravyi--Kitaev and Parity encodings. The depth bounds intersect at $L=6$ ($N=36$), the benchmarks cross over near that size depending on workload and metric, and our method is quadratically more efficient for larger system sizes. The folded-$\Gamma$ improvement changes routing volume without changing qubit count or magic-state inventory, so its effect is largest in routing-intensive fermionic applications such as the FFFT and sparse SYK. Across all three benchmarks, our ancilla-free method achieves the lowest logical spacetime volume and highest estimated no-fault probability at scale; at \(N=400\), it reduces spacetime by \(72\text{--}74\%\) relative to the existing 1D baseline, without requiring routing ancillas.

Two directions follow. The Hilbert-layout argument of Section~\ref{sec:extensions} covers the BK and Parity trees for power-of-two $L$; extending the $O(\sqrt N)$ grid bound to every product-preserving ternary-tree encoding~\cite{chiew2024ternarytreetransformationsequivalent} and to arbitrary $L$ is the subject of forthcoming work by the authors. Second, fermionic permutation is one route to locality on the grid; the other is to pay qubits for it up front, as the superfast encodings do~\cite{bk,setia2018superfast}, and an end-to-end comparison of total simulation cost on 2D hardware under one fault-tolerant accounting would locate the regime in which each approach wins.

\section*{Data and Code Availability}
The code, benchmark data, and plotting scripts are available at \url{https://github.com/dtlics/fast-fermionic-permutation}; the version used for this article is archived on Zenodo at \href{https://doi.org/10.5281/zenodo.22646475}{doi:10.5281/zenodo.22646475}.

\begin{acknowledgments}
We thank the PennyLane team for helpful discussions on implementing the numerical experiments. This work is primarily supported by the U.S. Department of Energy, Office of Science, National Quantum Information Science Research Center, Co-design Center for Quantum Advantage (C2QA) under Contract No. DE-SC0012704. YD acknowledges partial support by the National Science Foundation (under awards CCF-2312754 and CCF-2338063), by Quantum CT (under NSF Engines award ITE-2302908), by Air Force Office of Scientific Research MURI (FA9550-26-1-B036), by Boehringer Ingelheim, and NSF NQVL-ERASE (under award OSI-2435244). External interest disclosure: YD is a consultant and equity holder of D-Wave Quantum, Inc.
\end{acknowledgments}

\appendix

\section{Folded \texorpdfstring{$\Gamma$}{Gamma} Construction and Proof}
\label{app:gamma}

This appendix proves the two claims about the universal correction
$\Gamma^\star$ used in the main construction.  First, conjugation by
$\Gamma^\star$ supplies the Jordan--Wigner parity for \emph{every} vertical
nearest-neighbor exchange.  Second, $\Gamma^\star$ has an ancilla-free
nearest-neighbor implementation of depth
\[
2L+4\quad\text{for even }L,
\qquad
2L+6\quad\text{for odd }L.
\]
The closed-form construction is given for $L\ge 7$; the finitely many cases
$2\le L\le 6$ use separately synthesized representatives, as stated in
Section~\ref{sec:gamma-impl}.  We conclude with a matching leading-order lower
bound of $2L-2$.

The proof has three conceptual steps.  We first construct a dense quadratic
phase whose change under a vertical exchange is exactly the required snake-JW
parity.  We then apply a two-sided cumulative-parity transform along both grid
axes; in this folded basis, the dense phase becomes a local pattern consisting
of one diagonal in each grid cell together with a sparse set of ordinary grid
edges.  Finally, we realize the cell diagonals in eight central layers and pack
all remaining phases into otherwise unused locations of the fold and unfold.

Throughout the appendix, all polynomial arithmetic is over $\mathbb F_2$, and
rows and columns are indexed by $0,\ldots,L-1$.  For sums of Boolean
polynomials, $\oplus$ denotes addition modulo two.  For sets of phase edges, the
same symbol denotes symmetric difference.

For a binary matrix $M$, define the Boolean quadratic form
\begin{equation}
q_M(z)=z^TMz
=\operatorname{diag}(M)^Tz
\oplus\!\bigoplus_{u<v}(M_{uv}\oplus M_{vu})z_uz_v,
\label{eq:app-boolean-quadratic}
\end{equation}
where $z_u^2=z_u$.  Its off-diagonal terms can be implemented by \CZ gates on
\[
E(M)=\bigl\{\{u,v\}:u<v,\ M_{uv}\oplus M_{vu}=1\bigr\},
\]
while $\operatorname{diag}(M)$ specifies single-qubit $Z$ phases.

\subsection{A universal phase polynomial}

We begin by writing the parity that a vertical exchange must acquire.  Let
$\tau_{r,c}$ exchange the occupations at $(r,c)$ and $(r+1,c)$.  In the
row-snake JW ordering, the sites strictly between these two endpoints lie

\begin{itemize}
    \item in rows $r$ and $r+1$ to the right of column $c$, when $r$ is even;
    \item in rows $r$ and $r+1$ to the left of column $c$, when $r$ is odd.
\end{itemize}

Accordingly, define
\begin{equation}
\chi_{r,c}(x)=
\begin{cases}
\displaystyle\bigoplus_{d>c}(x_{r,d}\oplus x_{r+1,d}),&r\text{ even},\\[2mm]
\displaystyle\bigoplus_{d<c}(x_{r,d}\oplus x_{r+1,d}),&r\text{ odd}.
\end{cases}
\label{eq:app-required-vertical-parity}
\end{equation}
The parity-encoding condition of Eq.~\eqref{eq:parity-encoding} is therefore
satisfied by a phase polynomial $f$ whenever
\[
f(x)\oplus f(\tau_{r,c}x)
=(x_{r,c}\oplus x_{r+1,c})\chi_{r,c}(x).
\]
The prefactor makes the identity automatic when the two exchanged occupations
are equal.

Let $\mathbb I(\mathcal P)$ denote the indicator of a statement $\mathcal P$,
and define
\begin{equation}
\begin{aligned}
U_{c,d}&=\mathbb I(c<d),\\
S_{r,s}&=\mathbb I(s<r)
\oplus\mathbb I(r=s\text{ and }r\text{ is odd}).
\end{aligned}
\label{eq:app-gamma-us}
\end{equation}
Thus row $r$ of $S$ contains ones strictly to the left of the diagonal when
$r$ is even, and up to and including the diagonal when $r$ is odd.  In
row--column tensor order, set
\begin{equation}
f_0(x)=q_{S\otimes U}(x)
=\bigoplus_{c<d}\ \bigoplus_{r,s}S_{r,s}x_{r,c}x_{s,d}.
\label{eq:app-gamma-f0}
\end{equation}

\begin{lemma}[The canonical phase has the required vertical derivative]
\label{lem:gamma-vertical-derivative}
For every vertical grid edge,
\begin{align}
f_0(x)\oplus f_0(\tau_{r,c}x)
={}&(x_{r,c}\oplus x_{r+1,c})\nonumber\\[-1mm]
&{}\times
\begin{cases}
\displaystyle\bigoplus_{d>c}(x_{r,d}\oplus x_{r+1,d}),&r\text{ even},\\[2mm]
\displaystyle\bigoplus_{d<c}(x_{r,d}\oplus x_{r+1,d}),&r\text{ odd}.
\end{cases}
\label{eq:app-gamma-derivative}
\end{align}
Consequently, $\Gamma_0\ket{x}=(-1)^{f_0(x)}\ket{x}$ satisfies
Eq.~\eqref{eq:parity-encoding} simultaneously for all vertical grid edges.
\end{lemma}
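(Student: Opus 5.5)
The plan is a direct computation of the discrete derivative of the quadratic form $f_0$ of Eq.~\eqref{eq:app-gamma-f0} under the exchange $\tau_{r,c}$. Write $a=x_{r,c}$ and $b=x_{r+1,c}$. Every monomial of $f_0$ has the form $x_{u,c'}x_{s,d}$ with $c'<d$, so the two endpoints of a given monomial always lie in distinct columns. In particular, no monomial contains both $a$ and $b$, and no monomial is a single-site (linear) term. Monomials that involve neither exchanged site are invariant under $\tau_{r,c}$. The remaining monomials contain exactly one of $a,b$ and split into two families: those with column $c$ as the smaller column index ($d>c$) and those with column $c$ as the larger one ($d<c$).

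\textbf{Step 1 (right of column $c$).} For each $d>c$ and each row $s$, the contribution is $(S_{r,s}a\oplus S_{r+1,s}b)\,x_{s,d}$. After the exchange it is $(S_{r,s}b\oplus S_{r+1,s}a)\,x_{s,d}$. The difference is $(a\oplus b)(S_{r,s}\oplus S_{r+1,s})\,x_{s,d}$.

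\textbf{Step 2 (left of column $c$).} For each $d<c$ the contribution is $x_{s,d}(S_{s,r}a\oplus S_{s,r+1}b)$. By the same argument its difference is $(a\oplus b)(S_{s,r}\oplus S_{s,r+1})\,x_{s,d}$.

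\textbf{Step 3 (row and column differences of $S$).} This is the only genuine content. From Eq.~\eqref{eq:app-gamma-us}, rows $r$ and $r+1$ of $S$ agree except possibly at $s=r$ and $s=r+1$.
\begin{itemize}
\item At $s=r$: $S_{r,r}\oplus S_{r+1,r}=\mathbb I(r\text{ odd})\oplus 1=\mathbb I(r\text{ even})$.
\item At $s=r+1$: $S_{r,r+1}\oplus S_{r+1,r+1}=0\oplus\mathbb I(r+1\text{ odd})=\mathbb I(r\text{ even})$.
\end{itemize}
Hence $S_{r,s}\oplus S_{r+1,s}=\mathbb I(r\text{ even})\,(\delta_{s,r}\oplus\delta_{s,r+1})$.

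For the column difference $S_{s,r}\oplus S_{s,r+1}$, the cases are as follows.
\begin{itemize}
\item For $s>r+1$ both entries equal $1$, so the difference is $0$.
\item For $s<r$ both entries equal $0$, so the difference is $0$.
\item At $s=r+1$: the difference is $1\oplus\mathbb I(r\text{ even})=\mathbb I(r\text{ odd})$.
\item At $s=r$: the difference is $\mathbb I(r\text{ odd})\oplus 0=\mathbb I(r\text{ odd})$.
\end{itemize}
Hence $S_{s,r}\oplus S_{s,r+1}=\mathbb I(r\text{ odd})\,(\delta_{s,r}\oplus\delta_{s,r+1})$.

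\textbf{Combining the steps.} Substituting Step 3 into Steps 1 and 2, the total derivative is
\[
(a\oplus b)\Bigl[\mathbb I(r\text{ even})\bigoplus_{d>c}(x_{r,d}\oplus x_{r+1,d})\;\oplus\;\mathbb I(r\text{ odd})\bigoplus_{d<c}(x_{r,d}\oplus x_{r+1,d})\Bigr].
\]
This is exactly Eq.~\eqref{eq:app-gamma-derivative}.

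\textbf{The consequence for Eq.~\eqref{eq:parity-encoding}.} Since $\gamma_{\mathbf s}\gamma_{\mathbf s'}=(-1)^{f_0(x)\oplus f_0(\tau_{r,c}x)}$, it remains to compare with the snake-JW parity string. When $s_j\oplus s_k=1$, the prefactor $a\oplus b$ equals $1$. The sites strictly between the JW indices $j<k$ of $(r,c)$ and $(r+1,c)$ are the positions $d>c$ in rows $r,r+1$ when $r$ is even, and $d<c$ when $r$ is odd. This set is exactly the index set in $\chi_{r,c}$ of Eq.~\eqref{eq:app-required-vertical-parity}, so the two phases coincide. No other vertical edge enters, so the condition holds for all vertical edges simultaneously.

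The main obstacle is the boundary bookkeeping in Step 3: the odd-row diagonal convention in $S$ must produce the parity contribution on the correct side of column $c$ for each row parity. To guard against an off-by-one, I would double-check the JW-between-set against the indexing $\mathrm{jw}(r,c)$ of Section~\ref{sec:jw} on a small example such as $L=3$.
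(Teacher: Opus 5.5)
Your proof is correct and follows the paper's argument essentially step for step. You split the monomials containing $x_{r,c}$ or $x_{r+1,c}$ according to whether the other column lies right or left of $c$, then use the row difference $S_{r,*}\oplus S_{r+1,*}=\mathbb I(r\text{ even})(e_r\oplus e_{r+1})^T$ and the column difference $S_{*,r}\oplus S_{*,r+1}=\mathbb I(r\text{ odd})(e_r\oplus e_{r+1})$ to show that exactly one side survives. Your entrywise checks of these identities and of the snake-JW between-set only make explicit what the paper reads off directly from the definitions.
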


\begin{figure}[t]
  \centering
  \includegraphics{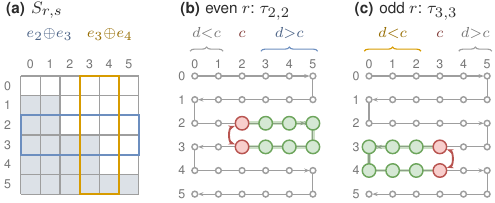}
  \caption{Exchange derivative of the canonical phase $f_0=q_{S\otimes U}$
  (Lemma~\ref{lem:gamma-vertical-derivative}) for $L=6$; the strict upper
  triangle $U_{c,d}=\mathbb I(c<d)$ is not shown.
  (a)~The matrix $S$ of Eq.~\eqref{eq:app-gamma-us}, shaded cells being ones.
  Adjacent rows differ only for even $r$ [blue, $S_{2,*}\oplus S_{3,*}=e_2\oplus
  e_3$, Eq.~\eqref{eq:app-S-row-difference}], and adjacent columns $s,s+1$
  differ only for odd $s$ [orange, $S_{*,3}\oplus S_{*,4}=e_3\oplus e_4$,
  Eq.~\eqref{eq:app-S-column-difference}].
  (b)~Exchanging $(2,2)\leftrightarrow(3,2)$ (red pair, even $r$): only the
  columns $d>c$ contribute, with the row difference as coefficient.
  (c)~Exchanging $(3,3)\leftrightarrow(4,3)$ (odd $r$): only the columns $d<c$
  contribute, with the column difference as coefficient.
  In both cases the gray brace marks the columns that contribute nothing, and
  the surviving variables (green) are the sites strictly between the exchanged
  pair along the snake order, i.e., $\chi_{r,c}$ of
  Eq.~\eqref{eq:app-required-vertical-parity}.}
  \label{fig:app-derivative}
\end{figure}

\begin{proof}
Write $a=x_{r,c}$ and $b=x_{r+1,c}$.  Only monomials containing $a$ or $b$
can change when these two variables are exchanged.  We separate those
monomials according to whether their other column lies to the right or to the
left of $c$; Figure~\ref{fig:app-derivative} illustrates both cases for $L=6$.

For a column $d>c$, the changing part is determined by the two rows
$S_{r,*}$ and $S_{r+1,*}$.  Directly from Eq.~\eqref{eq:app-gamma-us},
\begin{equation}
S_{r,*}\oplus S_{r+1,*}
=\mathbb I(r\text{ is even})(e_r\oplus e_{r+1})^T.
\label{eq:app-S-row-difference}
\end{equation}
Hence the total contribution from all columns to the right is
\[
(a\oplus b)\,\mathbb I(r\text{ is even})
\bigoplus_{d>c}(x_{r,d}\oplus x_{r+1,d}).
\]

For a column $d<c$, the variable in column $d$ is the first factor in
Eq.~\eqref{eq:app-gamma-f0}, so the relevant difference is between columns of
$S$ rather than rows.  Again from Eq.~\eqref{eq:app-gamma-us},
\begin{equation}
S_{*,r}\oplus S_{*,r+1}
=\mathbb I(r\text{ is odd})(e_r\oplus e_{r+1}).
\label{eq:app-S-column-difference}
\end{equation}
The contribution from all columns to the left is therefore
\[
(a\oplus b)\,\mathbb I(r\text{ is odd})
\bigoplus_{d<c}(x_{r,d}\oplus x_{r+1,d}).
\]
Exactly one of these two expressions is nonzero, which proves
Eq.~\eqref{eq:app-gamma-derivative}.  When $a\oplus b=1$, the remaining factor
is precisely the snake-JW parity in Eq.~\eqref{eq:app-required-vertical-parity};
when $a=b$, both sides vanish.  This is Eq.~\eqref{eq:parity-encoding}.
\end{proof}

The required exchange derivatives do not determine a unique phase polynomial.
If $h(x)$ is invariant under arbitrary permutations of the rows within each
column, then $h(x)=h(\tau_{r,c}x)$ for every vertical exchange, and
$f_0\oplus h$ is equally valid.  We will use this freedom to choose a
representative whose folded circuit is especially easy to schedule.  Importantly,
this choice depends only on $L$ and the fixed snake ordering, not on the
particular column permutation being implemented.

\subsection{The two-sided fold and the resulting local phase}

Let $m=\lceil L/2\rceil$.  On a one-dimensional register, define the two-sided
cumulative-parity transform $A$ by
\begin{equation}
(Ax)_i=
\begin{cases}
\displaystyle\bigoplus_{j=0}^{i}x_j,&i<m,\\[1mm]
\displaystyle\bigoplus_{j=i}^{L-1}x_j,&i\ge m.
\end{cases}
\label{eq:app-fold-map}
\end{equation}
The left half computes prefix parities from the left boundary, while the right
half computes suffix parities from the right boundary.  The two CNOT wavefronts
move toward the center in parallel.  We call this basis change a
\emph{two-sided fold}.  Figure~\ref{fig:app-fold} shows the fold of one line
together with the matrices $A$ and $D=A^{-1}$.

\begin{figure}[t]
  \centering
  \includegraphics{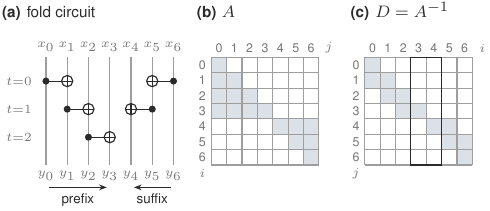}
  \caption{The two-sided fold of one line for $L=7$, $m=4$.
  (a)~Two \CNOT wavefronts accumulate prefix parities from the left boundary
  and suffix parities from the right boundary in $\max\{m-1,L-m-1\}=3$ layers
  $t=0,1,2$.
  (b)~The matrix $A$ of Eq.~\eqref{eq:app-fold-map}, $y=Ax$; shaded cells are
  the entries equal to one, so row $i$ lists the original bits contained in
  the folded bit $y_i$.  Throughout, $i$ indexes the folded bits $y$ and $j$
  the original bits $x$, so $D$ is written $D_{j,i}$.
  (c)~Its inverse $D$ [Eq.~\eqref{eq:app-fold-inverse}], $x=Dy$: every column
  $d_i$ is supported on two neighboring coordinates, except the two singletons
  $d_{m-1}=e_{m-1}$ and $d_m=e_m$ at the framed fold columns $m-1$, $m$.}
  \label{fig:app-fold}
\end{figure}

Let $D=A^{-1}$.  Its columns are
\begin{equation}
d_i=
\begin{cases}
e_i\oplus e_{i+1},&i<m-1,\\
e_i,&i\in\{m-1,m\},\\
e_i\oplus e_{i-1},&i>m.
\end{cases}
\label{eq:app-fold-inverse}
\end{equation}
Apply $A$ first along every column and then along every row, and write
\[
y=(A\otimes A)x=:Gx
\]
for the fully folded variables.  Each axis requires
$\max\{m-1,L-m-1\}$ CNOT layers, so the complete fold has depth
\begin{equation}
g(L)=2\max\{m-1,L-m-1\}
=\begin{cases}L-2,&L\text{ even},\\L-1,&L\text{ odd}.
\end{cases}
\label{eq:app-gather-depth}
\end{equation}

In the folded basis, the canonical phase is
\[
f_0(G^{-1}y)=q_{R\otimes C}(y),
\qquad
R=D^TSD,
\quad
C=D^TUD,
\]
because
\begin{equation}
(D\otimes D)^T(S\otimes U)(D\otimes D)=R\otimes C.
\label{eq:app-fold-congruence}
\end{equation}
The next lemma gives the exact sparsity pattern of these two factors.

For later geometric statements, let $h_{r,c}$ denote the horizontal edge from
$(r,c)$ to $(r,c+1)$ and let $v_{r,c}$ denote the vertical edge from $(r,c)$
to $(r+1,c)$.  A \emph{row band} $r$ is the pair of adjacent rows $r,r+1$.  A
\emph{cell} $(r,c)$ is the unit square with upper-left corner $(r,c)$, where
$0\le r,c<L-1$.

\begin{lemma}[Exact local form of the folded canonical phase]
\label{lem:gamma-local-factors}
Let $p$ be the largest odd integer not exceeding $m$.  Then
\begin{align}
C={}&\bigoplus_{i=0}^{L-2}e_ie_{i+1}^T
\oplus\!\bigoplus_{\substack{0\le i<L\\i<m-1\ {\rm or}\ i>m}}e_ie_i^T,
\label{eq:app-c-closed}\\
R={}&e_pe_p^T\oplus
\bigoplus_{r=0}^{L-2}
\begin{cases}
e_re_{r+1}^T,&s_r=+1,\\
e_{r+1}e_r^T,&s_r=-1,
\end{cases}
\label{eq:app-r-closed}
\end{align}
where the sign $s_r$ records the orientation of the surviving row link:
\begin{equation}
s_r=
\begin{cases}
(-1)^r,&r<m-1,\\
-1,&r=m-1,\\
(-1)^{r+1},&r\ge m.
\end{cases}
\label{eq:app-row-slopes}
\end{equation}

Equivalently, the off-diagonal phase edges of $q_{R\otimes C}$ are exactly:
(i) one diagonal in every cell, with its orientation determined by $s_r$;
(ii) every vertical edge except those in columns $m-1$ and $m$; and
(iii) the full horizontal path $\{h_{p,c}:0\le c<L-1\}$.
\end{lemma}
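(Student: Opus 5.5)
The plan is to compute $C=D^TUD$ and $R=D^TSD$ entrywise, using the fact that every column $d_i$ of $D$ in Eq.~\eqref{eq:app-fold-inverse} has support on at most two adjacent coordinates. Each entry $(D^TMD)_{ik}=d_i^TMd_k$ is then a sum of at most four entries of $M$, that is, a discrete mixed second difference of $M$. The claim is an exact matrix identity, not only an identity of quadratic forms. This distinction matters, because the coefficient of a tensor edge $\{(r,c),(s,d)\}$ in $q_{R\otimes C}$ is $R_{rs}C_{cd}\oplus R_{sr}C_{dc}$. That coefficient depends on both orientations separately, not just on the symmetrizations of $R$ and $C$. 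So I would establish Eqs.~\eqref{eq:app-c-closed}--\eqref{eq:app-row-slopes} as exact identities and only then read off the edge set.

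For $C$, I would split into cases on the positions of $i,k$ relative to the fold columns $m-1,m$: both left of $m-1$, both right of $m$, one on each side, or at least one in $\{m-1,m\}$. The following formulas give a quick consistency check. Write $\mathbf 1$ for the all-ones vector, $J=\mathbf 1\mathbf 1^T$ for the all-ones matrix, and $I$ for the identity. Then $U\oplus U^T=J\oplus I$, so
\[
C_{ik}\oplus C_{ki}=(\mathbf 1^Td_i)(\mathbf 1^Td_k)\oplus d_i^Td_k .
\]
Here $\mathbf 1^Td_i=1$ exactly for $i\in\{m-1,m\}$, and $d_i^Td_k\neq0$ only for $|i-k|\le1$. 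This shows the symmetric part is supported on the tridiagonal band, and it explains why the pair $(m-1,m)$ still carries a link even though $d_{m-1}$ and $d_m$ are singletons. The diagonal $C_{ii}=d_i^TUd_i$ equals $U_{i,i+1}$ or $U_{i-1,i}$ for two-point columns, which gives $1$, and $0$ for the singletons. This matches Eq.~\eqref{eq:app-c-closed}. The orientation (upper bidiagonal) then follows from $U$ being strictly upper triangular, since $d_i^TUd_{i+1}$ picks up a surviving entry while $d_{i+1}^TUd_i$ cancels. I would verify this case by case.

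For $R$ the same case split applies, but now the parity structure of $S$ matters. Here I would use Eqs.~\eqref{eq:app-S-row-difference} and~\eqref{eq:app-S-column-difference}: for a two-point column, $d_i^TS$ is a difference of two adjacent rows of $S$, which is nonzero only for even $i$ and equals $e_i\oplus e_{i+1}$. Likewise $Sd_k$ is a difference of two adjacent columns, nonzero only for odd $k$. Left of the fold, $d_i=e_i\oplus e_{i+1}$ gives these row and column differences, and each entry collapses to a single indicator. That should produce one link per adjacent pair, with orientation alternating as $(-1)^r$. Right of the fold, $d_i=e_i\oplus e_{i-1}$ shifts the parity, giving $(-1)^{r+1}$. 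The singletons $m-1,m$ pick up raw rows and columns of $S$, which give the special sign $s_{m-1}=-1$ and the lone diagonal term $e_pe_p^T$. The parity of $m$ decides whether $p=m$ or $p=m-1$, so I would treat $L$ even and odd separately here. I expect this bookkeeping around the fold, with the mod-2 parity of $m$ and $m-1$ interacting with the odd-diagonal convention of $S$, to be the main obstacle. I would first check small $L$ (say $L=7,8$) by hand to pin down the signs, and then write the general identity.

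Finally, for the edge description I would expand $q_{R\otimes C}$ using Eq.~\eqref{eq:app-boolean-quadratic}. A pair $(r,c),(s,d)$ has coefficient $R_{rs}C_{cd}\oplus R_{sr}C_{dc}$ when the two sites are distinct, and diagonal phase $R_{rr}C_{cc}$ otherwise. I would then handle the cases one at a time.
\begin{itemize}
\item For $s=r\pm1$ and $d=c\pm1$, exactly one orientation survives, since $C$ is upper bidiagonal and $R$ has a single oriented link per band. This gives one diagonal per cell with orientation fixed by $s_r$; this is item (i).
\item For $s=r$ and $d=c\pm1$, only $R_{pp}$ is nonzero, which gives the horizontal path in row $p$; this is item (iii).
\item For $d=c$ and $s=r\pm1$, the coefficient is $C_{cc}(R_{rs}\oplus R_{sr})=C_{cc}$. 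This is $1$ except in columns $m-1,m$, giving item (ii).
\item All other pairs vanish because both $R$ and $C$ are banded.
\end{itemize}
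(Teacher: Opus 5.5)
Your proposal is correct and follows essentially the paper's route. Both evaluate $D^TMD$ entrywise using the two-point supports of the $d_i$, then read the edges off $R_{rs}C_{cd}\oplus R_{sr}C_{dc}$. The one variation is that the paper obtains $R$ by writing $S=U^T\oplus E$, so that $R=C^T\oplus D^TED$, while you use the adjacent row and column differences of $S$; both work.

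One small slip: the parity of $m=\lceil L/2\rceil$ is not the parity of $L$, so splitting on $L$ even or odd would not settle $p$. No split is needed, though, because $R_{ii}=S_{ii}=\mathbb I(i\text{ odd})$ holds uniformly for both singletons $i\in\{m-1,m\}$.
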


\begin{figure}[t]
  \centering
  \includegraphics{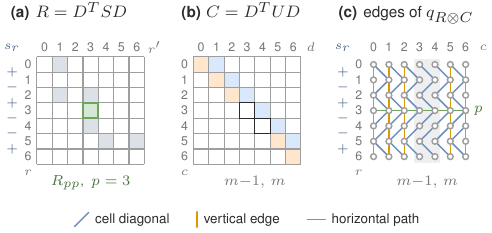}
  \caption{The folded factors of Lemma~\ref{lem:gamma-local-factors} for
  $L=7$ ($m=4$, $p=3$).
  (a)~$R=D^TSD$ has one entry per row link, above the diagonal when $s_r=+1$
  and below it when $s_r=-1$ (the sign $s_r$ is written between rows $r$ and
  $r+1$), plus the single diagonal entry $R_{pp}$ (green).
  (b)~$C=D^TUD$ has the full superdiagonal (blue) and the diagonal (orange)
  except at the two framed entries of the fold columns $m-1$, $m$.
  (c)~The off-diagonal edges of $q_{R\otimes C}$: a row link times the
  superdiagonal of $C$ gives one cell diagonal with orientation $s_r$ (blue); a
  row link times the diagonal of $C$ gives the vertical edges (orange), absent
  in the fold columns $m-1$, $m$ (gray band); $R_{pp}$ times the superdiagonal
  gives the horizontal path in row $p$ (green).}
  \label{fig:app-factors}
\end{figure}

\begin{proof}
Figure~\ref{fig:app-factors} shows the two factors and the resulting edge set
for $L=7$.  It is useful to read $u^TUv$ as
$\bigoplus_{a<b}u_av_b$.  Every vector $d_i$ in
Eq.~\eqref{eq:app-fold-inverse} is supported on two neighboring coordinates,
except for the two singleton vectors at the fold.  Evaluating
$C_{ij}=d_i^TUd_j$ therefore gives
\begin{equation}
C_{ij}=1
\quad\Longleftrightarrow\quad
\bigl(j=i+1\bigr)
\ \text{or}\
\bigl(i=j\notin\{m-1,m\}\bigr),
\label{eq:app-C-entrywise}
\end{equation}
which is Eq.~\eqref{eq:app-c-closed}.  For example, away from the fold,
$d_i$ contains two adjacent coordinates.  Pairing it with itself contributes
one ordered pair, pairing it with $d_{i+1}$ contributes three ordered pairs,
and pairing it with a more distant $d_j$ contributes an even number of
ordered pairs; modulo two, only the diagonal and first superdiagonal survive.  The two
singleton vectors remove the two central diagonal entries.

Next write
\[
S=U^T+E,
\qquad
E_{rr}=\mathbb I(r\text{ is odd}).
\]
Then
\[
R=C^T+D^TED.
\]
Both terms are tridiagonal, so $R$ is tridiagonal.  Its diagonal is especially
simple.  Away from the fold, $d_i$ contains exactly one odd coordinate, so
$(D^TED)_{ii}=1$, which cancels the unit diagonal of $C^T$.  At $i=m-1$ and
$i=m$, the vectors are singletons; exactly the odd one of these two indices
survives.  This index is $p$, proving that $R_{pp}=1$ is the only diagonal
entry.

For every adjacent pair $i,i+1$,
\begin{align*}
R_{i,i+1}\oplus R_{i+1,i}
&=d_i^T(S+S^T)d_{i+1}\\
&=1,
\end{align*}
because $S+S^T$ is the all-ones matrix with zero diagonal and the neighboring
vectors $d_i,d_{i+1}$ have odd cross-pairing.  Thus exactly one orientation of
each row link survives.  Evaluating $D^TED$ on the neighboring supports gives
rightward orientation for even $r<m-1$, leftward orientation for odd
$r<m-1$, leftward orientation at $r=m-1$, and the reversed alternation below
the fold.  This is precisely Eq.~\eqref{eq:app-row-slopes}, and proves
Eq.~\eqref{eq:app-r-closed}.

Finally, combine the supports of $R$ and $C$.  An off-diagonal entry of
$R\otimes C$ can arise in only three ways.  An oriented off-diagonal of $R$
combined with the superdiagonal of $C$ gives one of the two diagonals of a
cell.  The same row link combined with a diagonal entry of $C$ gives a
vertical edge, except in the two columns where $C$ has no diagonal.  The sole
diagonal entry $R_{pp}$ combined with the superdiagonal of $C$ gives the
horizontal path in row $p$.  No other off-diagonal terms are possible.
\end{proof}

\subsection{Packing the folded phase into depth \texorpdfstring{$2g(L)+8$}{2g(L)+8}}
\label{app:gamma-zero-extra}

We now construct the circuit for $L\ge7$.  It has five consecutive blocks:
a column fold, a row fold, eight midpoint layers, an inverse row fold, and an
inverse column fold.  The fold and unfold contribute $2g(L)$ layers.  The purpose of this subsection
is to show that every required phase can be inserted into the eight midpoint
layers or into unused locations of the fold/unfold layers, so no additional
two-qubit layer is needed.

Put $h=m-1$.  Label the horizontal row-fold layers, in execution order, by
$H_0,\ldots,H_{h-1}$; label the eight midpoint layers by
$B_0,\ldots,B_7$; and label the inverse-row-fold layers, again in execution
order, by $U_0,\ldots,U_{h-1}$.  A ``slot'' below means one of these existing
two-qubit layers.  Gates assigned to the same slot must form a matching.
Figure~\ref{fig:app-layers} draws every layer of the circuit for $L=7$.

\begin{figure*}[t]
  \centering
  \includegraphics{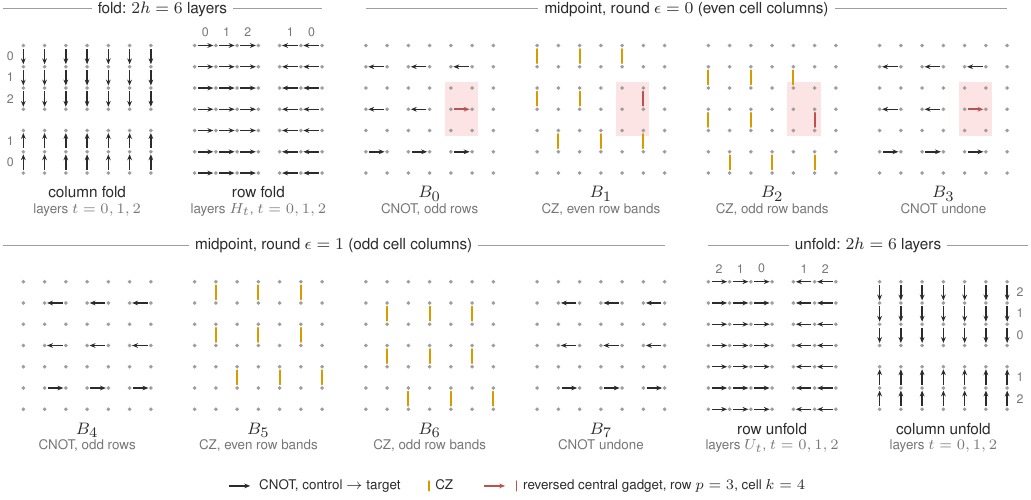}
  \caption{Every two-qubit layer of the folded $\Gamma^\star$ circuit for
  $L=7$ ($h=3$), with gates drawn as edges.  The column fold and the row fold
  consist of three \CNOT layers each, and the unfolds apply the same layers in
  reverse order; the small digits give the layer index $t=0,1,2$ of each
  wavefront in execution order (layers $H_t$ of the row fold and $U_t$ of the
  row unfold), innermost first in the unfolds.  The eight midpoint layers
  $B_0,\ldots,B_7$ form two four-layer gadget rounds, one per cell-column
  parity $\epsilon$, with \CNOT gates in the odd rows and \CZ gates on the
  even and odd row bands at the target columns.  The reversed central gadget
  on row $p=3$, cell $k=4$ is drawn in red; the shaded block marks the cell
  pair $\mathcal G_L=\{(p-1,k),(p,k)\}$ adjacent to its horizontal \CNOT,
  whose phase the reversal changes (cf.\ Fig.~\ref{fig:app-residual}).
  Lemma~\ref{lem:app-zero-extra-placement} inserts the residual \CZ gates into
  unused locations of the row fold, the midpoint layers, and the row unfold.}
  \label{fig:app-layers}
\end{figure*}

\subsubsection{The eight-layer midpoint circuit}

\begin{figure}[tb]
  \centering
  \includegraphics{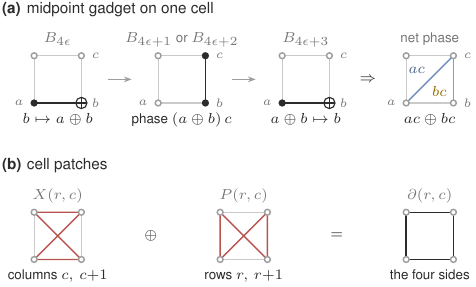}
  \caption{The midpoint gadget and the cell patches on one unit cell
  (independent of $L$).
  (a)~A horizontal \CNOT in the odd row (drawn pointing right, $t_q=+1$; a
  left-pointing \CNOT puts the \CZ on the left side and produces the other
  diagonal), a vertical \CZ at its target column in $B_{4\epsilon+1}$ for the
  even row band drawn (for an odd band the \CZ sits in $B_{4\epsilon+2}$ and
  the cell is mirrored vertically), and the inverse \CNOT restore the bits
  and, by Eq.~\eqref{eq:app-cnot-cz-cnot-identity}, leave the phase
  $(a\oplus b)c=ac\oplus bc$: one cell diagonal (blue) and one vertical side
  (orange).
  (b)~The cell patches $X(r,c)$ and $P(r,c)$, the $K_{2,2}$ on the two cell
  columns and on the two cell rows: both contain the two diagonals, so their
  symmetric difference is the cell boundary $\partial(r,c)$
  [Eq.~\eqref{eq:app-cell-boundary}].}
  \label{fig:app-gadget}
\end{figure}

The midpoint uses the elementary identity
\begin{equation}
(a\oplus b)c=ac\oplus bc.
\label{eq:app-cnot-cz-cnot-identity}
\end{equation}
Thus a horizontal \CNOT, followed by a vertical \CZ at its target, followed by
the inverse \CNOT, creates the desired cell diagonal together with one vertical
side of that cell, while restoring the folded bits.
Figure~\ref{fig:app-gadget}(a) shows one gadget.

For each odd row $q$, define
\[
t_q=
\begin{cases}
s_q,&q<L-1,\\-s_{q-1},&q=L-1.
\end{cases}
\]
For a cell-column parity $\epsilon\in\{0,1\}$, layer $B_{4\epsilon}$ applies
the horizontal \CNOT matching in every odd row on cells
$c\equiv\epsilon\pmod2$.  The gates point right when $t_q=+1$ and left when
$t_q=-1$.  Layers $B_{4\epsilon+1}$ and $B_{4\epsilon+2}$ apply the vertical
\CZ matchings on, respectively, the even-indexed and odd-indexed row bands, at
the target column of the neighboring horizontal \CNOT.  Layer
$B_{4\epsilon+3}$ undoes the horizontal matching.  Denote this initial
midpoint circuit by $N_L^{(0)}$.

The two values of $\epsilon$ cover all cell columns, and the two vertical-
matching layers cover all row bands.  Therefore $N_L^{(0)}$ produces the
required diagonal or anti-diagonal in every cell.  By
Eq.~\eqref{eq:app-cnot-cz-cnot-identity}, it also produces one vertical edge
per cell gadget.

A small modification near the fold will make the remaining phases easier to
pack.  Define
\begin{equation}
p=
\begin{cases}m,&m\text{ odd},\\m-1,&m\text{ even},\end{cases}
\qquad
k=
\begin{cases}m-2,&m\text{ odd},\\m,&m\text{ even}.
\end{cases}
\label{eq:app-central-gadget}
\end{equation}
In the gadget on row $p$ and cell $k$, reverse both the opening and closing
horizontal \CNOTs and move the two neighboring vertical \CZ gates from the old
target column to the new target column.  Call the modified midpoint circuit
$N_L$.  It still acts as the identity on the folded computational-basis bits;
only its phase polynomial changes.

\subsubsection{A schedule-friendly representative}

Let
\[
\sigma_c=\bigoplus_{r=0}^{L-1}x_{r,c}
\]
be the parity of original column $c$.  For $c<d$, define
\begin{equation}
J(c,d)=\sigma_c\sigma_d
=\bigoplus_{r,s=0}^{L-1}x_{r,c}x_{s,d}.
\label{eq:app-j-block}
\end{equation}
Because $J(c,d)$ depends only on whole-column parities, it is invariant under
all within-column row permutations.  Adding such terms therefore does not
change any vertical exchange derivative.

We choose
\begin{equation}
\begin{aligned}
f^\star={}&f_0
\oplus\!\bigoplus_{\substack{d\in\{m-2,m-1\}\\c<d}}J(c,d)\\
&{}\oplus
\begin{cases}
\displaystyle\bigoplus_{\substack{c\in\{m,m+1\}\\d>c}}J(c,d),&L\text{ even},\\
0,&L\text{ odd}.
\end{cases}
\end{aligned}
\label{eq:app-gamma-representative}
\end{equation}
By Lemma~\ref{lem:gamma-vertical-derivative},
$\Gamma^\star\ket{x}=(-1)^{f^\star(x)}\ket{x}$ is a universal vertical
parity correction.

Although Eq.~\eqref{eq:app-gamma-representative} is dense in the original
basis, its effect after folding is confined to a constant-size neighborhood of
the center.  In the fully folded variables $y=Gx$, set
\[
T_c=y_{m-1,c}\oplus y_{m,c}.
\]
Using Eq.~\eqref{eq:app-fold-inverse} first in the row direction and then in the
column direction gives
\begin{equation}
\sigma_c=
\begin{cases}
T_0,&c=0,\\
T_{c-1}\oplus T_c,&1\le c\le m-1,\\
T_c\oplus T_{c+1},&m\le c\le L-2,\\
T_{L-1},&c=L-1.
\end{cases}
\label{eq:app-column-parity-fold}
\end{equation}
The telescoping is now explicit.  For $d\le m-1$,
$\bigoplus_{c<d}\sigma_c=T_{d-1}$; for $c\ge m$,
$\bigoplus_{d>c}\sigma_d=T_{c+1}$.  Hence, after discarding linear terms, each
selected sum in Eq.~\eqref{eq:app-gamma-representative} reduces to a product
of neighboring $T$ variables.  Writing $q\equiv_2q'$ when two Boolean
polynomials have the same off-diagonal quadratic part, we obtain
\begin{equation}
\begin{aligned}
f^\star\oplus f_0\equiv_2{}&
T_{m-3}T_{m-2}\oplus T_{m-2}T_{m-1}\\
&{}\oplus
\begin{cases}
T_mT_{m+1}\oplus T_{m+1}T_{m+2},&L\text{ even},\\
0,&L\text{ odd}.
\end{cases}
\end{aligned}
\label{eq:app-star-telescope}
\end{equation}
For example,
$\sigma_{m-2}\left(\bigoplus_{c<m-2}\sigma_c\right)
=(T_{m-3}\oplus T_{m-2})T_{m-3}
\equiv_2T_{m-3}T_{m-2}$, and the other three terms are identical telescoping
calculations.

\subsubsection{The sparse residual phase}

We next compare the phase produced by the midpoint circuit with the target
folded phase.  The following cell notation makes the comparison transparent.
For a cell $(r,c)$, define two four-edge sets:

\begin{itemize}
    \item $X(r,c)$ is the $K_{2,2}$ whose bipartition is the two cell columns;
    it contains the top and bottom horizontal edges and both diagonals;
    \item $P(r,c)$ is the $K_{2,2}$ whose bipartition is the two cell rows;
    it contains the left and right vertical edges and both diagonals.
\end{itemize}

The two sets share exactly the two diagonals.  Their symmetric difference is
therefore the boundary of the cell:
\begin{equation}
X(r,c)\oplus P(r,c)
=h_{r,c}\oplus h_{r+1,c}\oplus v_{r,c}\oplus v_{r,c+1}
=: \partial(r,c).
\label{eq:app-cell-boundary}
\end{equation}
This identity is the geometric reason for the particular representative in
Eq.~\eqref{eq:app-gamma-representative}; Figure~\ref{fig:app-gadget}(b)
shows the three edge sets.

Let $E_{\rm fold}(f)$ denote the off-diagonal edge set of $f$ in the fully
folded variables, and let $E(N)$ denote the phase-edge set produced by a
midpoint circuit $N$.  First compare the canonical target with the unmodified
midpoint:
\[
B_L:=E_{\rm fold}(f_0)\oplus E(N_L^{(0)}).
\]
By Lemma~\ref{lem:gamma-local-factors}, both sets contain the same chosen
diagonal in every cell, so all cell diagonals cancel.  In row band $r$, the
vertical sides generated by the midpoint gadgets cover every column except one
outer column $o_r$.  In contrast, the canonical target contains every vertical
edge except those in columns $m-1$ and $m$, and it contains the horizontal path
in row $p$.  Therefore
\begin{equation}
\begin{aligned}
B_L={}&\{h_{p,c}:0\le c<L-1\}\\
&{}\cup\{v_{r,c}:0\le r<L-1,\ c\in\{m-1,m,o_r\}\},
\end{aligned}
\label{eq:app-canonical-residual}
\end{equation}
where
\[
\begin{aligned}
o_r&=
\begin{cases}
L-1,&r<m-1,\\
0,&r>m-1,
\end{cases}
&& (r\ne m-1),\\[1mm]
o_{m-1}&=
\begin{cases}
L-1,&m\text{ even},\\
0,&m\text{ odd}.
\end{cases}
\end{aligned}
\]
Thus $B_L$ has a simple exact description: three vertical edges in each row
band and one horizontal edge across every column gap.

Equation~\eqref{eq:app-star-telescope} toggles the column-bipartite patches
$X(r,c)$ on the cell set
\begin{equation}
\begin{aligned}
\mathcal F_L={}&\{(m-1,m-3),(m-1,m-2)\}\\
&{}\cup
\begin{cases}
\{(m-1,m),(m-1,m+1)\},&L\text{ even},\\
\varnothing,&L\text{ odd}.
\end{cases}
\end{aligned}
\label{eq:app-target-cells}
\end{equation}
The central gadget reversal changes the midpoint phase by the row-bipartite
patches on
\[
\mathcal G_L=\{(p-1,k),(p,k)\}.
\]
Indeed, moving a gadget's vertical couplers from one horizontal-CNOT target to
the other toggles the two diagonals and the two vertical sides in each adjacent
cell, which is exactly $P(r,c)$.

Set
$\mathcal P_L=\mathcal F_L\mathbin{\triangle}\mathcal G_L$.
Substituting $X=P\oplus\partial$ from
Eq.~\eqref{eq:app-cell-boundary} gives
\begin{align*}
E_{\rm fold}(f^\star)\oplus E(N_L)
={}&B_L
\oplus\bigoplus_{(r,c)\in\mathcal F_L}X(r,c)\\
&{}\oplus\bigoplus_{(r,c)\in\mathcal G_L}P(r,c)\\
={}&\left(B_L\oplus
\bigoplus_{(r,c)\in\mathcal F_L}\partial(r,c)\right)\\
&{}\oplus\bigoplus_{(r,c)\in\mathcal P_L}P(r,c).
\end{align*}
Therefore
\begin{align}
E_{\rm fold}(f^\star)\oplus E(N_L)
&=R_L\oplus\bigoplus_{(r,c)\in\mathcal P_L}P(r,c),
\label{eq:app-zero-extra-residual}\\
R_L&=B_L\oplus\bigoplus_{(r,c)\in\mathcal F_L}\partial(r,c).
\label{eq:app-rail-boundary}
\end{align}

All cell boundaries in Eq.~\eqref{eq:app-rail-boundary} lie in the central row
band.  The cells in $\mathcal F_L$ occur in adjacent pairs.  Within each pair,
the shared vertical side cancels.  The horizontal path is merely shifted
between the two central rows as it crosses these cells; it still contains
exactly one horizontal edge for each of the $L-1$ column gaps.  The vertical
part still contains exactly three edges in each of the $L-1$ row bands.
Consequently,
\begin{equation}
|R_L|=(L-1)+3(L-1)=4L-4.
\label{eq:app-residual-size}
\end{equation}
This is the complete residual: a constant number of $P$-patches plus the
$O(L)$ individual edges in $R_L$.  Figure~\ref{fig:app-residual} shows $R_L$
and the patches on $\mathcal P_L$ for $L=7$.

\begin{figure}[tb]
  \centering
  \includegraphics{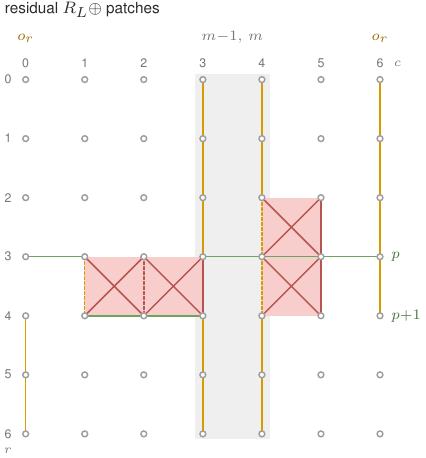}
  \caption{The residual of Eq.~\eqref{eq:app-zero-extra-residual} for $L=7$
  ($m=4$, $p=3$).  $R_L$ consists of three vertical edges (orange) in every
  row band, at columns $m-1$, $m$ and the outer column $o_r$, except in the
  central band $r=m-1$, where for odd $L$ the boundaries of $\mathcal F_L$
  move the edge from column $m-1$ to column $m-3$; and of one horizontal edge
  (green) across every column gap, in row $p$ except across the two cells of
  $\mathcal F_L$, where it is shifted to row $p+1$; $4L-4=24$ edges in total.
  The gray band marks the fold columns $m-1$, $m$.  The row-bipartite patches
  $P(r,c)$ (red) are drawn on the four shaded cells of $\mathcal P_L$: the left
  pair is $\mathcal F_L$ and the right pair, in cell column $k=4$, is
  $\mathcal G_L$.  The sets combine by symmetric difference, so a side that
  occurs twice cancels and is dashed: the dashed orange sides in columns $1$
  and $4$ belong to $R_L$ and to a patch, the dashed red side in column $2$ to
  both patches of $\mathcal F_L$.}
  \label{fig:app-residual}
\end{figure}

\begin{lemma}[The residual requires no additional two-qubit layer]
\label{lem:app-zero-extra-placement}
For every $L\ge7$, every patch and every individual edge in
Eq.~\eqref{eq:app-zero-extra-residual} can be inserted into an unused location
of the row fold, the eight midpoint layers, or the inverse row fold.  Gates
assigned to the same layer are pairwise disjoint.
\end{lemma}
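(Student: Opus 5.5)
The proof is an explicit, constructive placement argument. I will inventory the free locations in each available slot, then assign every residual gate to one of them. The residual in Eq.~\eqref{eq:app-zero-extra-residual} has three parts:
\begin{itemize}
\item the $L-1$ horizontal rail edges of $R_L$, which lie in row $p$ or row $p+1$;
\item the $3(L-1)$ vertical edges of $R_L$, which lie in columns $m-1$, $m$ and $o_r$;
\item a constant number of patches $P(r,c)$ with $(r,c)\in\mathcal P_L$.
\end{itemize}

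First I will make the occupancy of the candidate slots precise. Each row-fold layer $H_t$ consists of horizontal \CNOT{}s on the edges $h_{r,t}$ and $h_{r,L-2-t}$ of every row, so its gates move inward. Hence in $H_t$ every qubit in columns strictly between $t+1$ and $L-2-t$ is idle, and the same holds, reversed, for $U_t$. In particular, the central fold columns $m-1$ and $m$ are idle in every row-fold layer except the innermost one.

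Second, a placement is only valid if the circuit still realizes $f^\star$. A \CZ inserted inside the fold is not a \CZ on the folded variables, and for the same reason a vertical \CZ in an early $H_t$ acts on partially folded bits. My plan is to insert \CZ gates only on pairs whose bits are already final with respect to the row fold. Since column-$c$ bits are last modified at layer $\min(c,L-1-c)$, a \CZ on two qubits in the same column is legal once that column has finished folding. For columns $m-1$ and $m$ and for the outer column $o_r\in\{0,L-1\}$, this holds from the layer after that column's last \CNOT. The outer columns $0$ and $L-1$ are finished after $H_0$. Alternatively, a \CZ can be placed symmetrically by noting that a \CZ commuting through the \CNOT{}s changes it by a computable edge set.

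The vertical edges are then scheduled as follows. The $L-1$ edges in a given column form a path, which splits into two matchings (even and odd bands). The three columns $m-1$, $m$, $o_r$ thus need about six matching-slots in total. These fit among $H_1,\ldots,H_{h-1}$, $U_0,\ldots,U_{h-2}$ for the outer column, and the late fold and early unfold layers for the central columns. This uses $h\ge3$, i.e.\ $L\ge7$, which is why the lemma assumes it.

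The horizontal rail edges in rows $p$ and $p+1$ are, after the row fold, nearest-neighbor pairs of fully folded bits. I will place them in the midpoint layers $B_1,B_2,B_5,B_6$. These layers contain only vertical \CZ{}s, so the rail qubits there are free except where a gadget's target column meets rows $p$ and $p+1$. The rail is also a path, so it splits into two matchings, each assigned to a different midpoint layer. I will check that each rail edge avoids the vertical \CZ{}s at its endpoints by choosing the layer by edge parity. Any conflicting edge will be moved into $U_0$ or $H_{h-1}$ next to the center.

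The main obstacle is the constant-size collection of $P$-patches together with the shifted rail near $\mathcal F_L$ and $\mathcal G_L$. All of this sits in the congested central $2\times 4$ region, where the fold columns, the reversed gadget and rows $p,p+1$ meet. Each $P(r,c)$ is a $K_{2,2}$ containing two diagonals, which are not nearest-neighbor. I would realize each patch as another CNOT--\CZ--CNOT gadget: by Eq.~\eqref{eq:app-cnot-cz-cnot-identity}, $P(r,c)=(x\oplus x')(y\oplus y')$, which needs two horizontal \CNOT{}s and one vertical \CZ. I would absorb these \CNOT{}s into existing midpoint \CNOT{} layers, or replace the patch by toggling gadget orientation as was done for $\mathcal G_L$. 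Because this region is constant-size, I would first settle it by explicit case analysis on the parity of $m$ and the parity of $L$, checking disjointness layer by layer. I would then confirm this case analysis against the $L=7$ picture and the code-verified schedules.
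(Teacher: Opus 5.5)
Your proposal has genuine gaps: the patches have no zero-depth realization, and both bulk placement rules fail.

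\textbf{Patches.} Your \CNOT--\CZ--\CNOT gadget for $(x\oplus x')(y\oplus y')$ cannot be absorbed into existing layers. The midpoint \CNOT{}s act only in odd rows, so each patch needs new \CNOT{}s in an even row. Those \CNOT{}s also change the inputs of the gadget \CZ{}s already at that column. Reversing a gadget in odd row $q$ does not help either: it toggles $P(q-1,c)\oplus P(q,c)$, a vertically stacked pair, so it cannot produce the horizontally adjacent cells of $\mathcal F_L$ without creating new patches.

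The missing idea is the effect you treated as a nuisance. Immediately before $H_0$ the row fold has not started, so each row stores $Dy$. By the form of the columns $d_i$ of $D$, an interior physical qubit in column $j$ holds $y_{j-1}\oplus y_j$ left of the fold and $y_j\oplus y_{j+1}$ right of it. Hence a single physical vertical \CZ in band $r$ at column $j=c+1$ (for $c<m-1$) or $j=c$ (for $c\ge m$) realizes the whole patch $P(r,c)$ at no cost. The paper places every patch this way in $H_0$. The one exception is $P(m-1,m)$ when $L\equiv3\pmod4$: it would share a qubit with $P(m-2,m)$, so it goes to $U_{h-1}$, where the $Dy$ basis has been restored.

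\textbf{Central vertical edges.} Column $m-1$ is the target of the last fold layer $H_{h-1}$ and of $U_0$; so is column $m$ when $L$ is even. It therefore carries its folded value only during $B_0,\dots,B_7$. There is no late-fold or early-unfold window: a \CZ on $v_{r,m-1}$ in an earlier $H_t$ yields the patch $P(r,m-2)$, not the edge. Inside the midpoint, $B_0,B_3,B_4,B_7$ occupy every central odd-row qubit, and every vertical edge meets an odd row. So these edges must go into $B_1,B_2,B_5,B_6$, in the round where the column is a gadget source. Those are exactly the layers you reserved for the rail, and this is the choice the paper's slot pairs $(A,B)$ make.

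\textbf{The rail.} The rail cannot live in $B_1,B_2,B_5,B_6$. All gadgets in a row share the orientation $t_q$ (apart from the one reversed gadget), so in each round the targets alternate along the row. Every edge $h_{p,c}$ away from the boundary therefore has a target endpoint. That endpoint is occupied by a gadget \CZ in both layers of the round, and in the odd row $p$ it stores $a\oplus b$ rather than a folded bit. The even row is blocked the same way by its neighbours' \CZ{}s. No parity choice avoids this, so essentially every rail edge conflicts, not only central ones.

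The paper instead puts the noncentral rail into the row fold at $H_{t(c)}$, once both endpoints are final. It places the constant number of seam edges using explicit $L\bmod4$ tables, drawing on $B_0$ and $B_3$ (where even rows are idle), the pair of adjacent sources that the reversed central gadget creates in row $p$, and $H_{h-1}$, $U_0$.

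Your outer-column placements in $H_1,\dots,H_{h-1},U_0,\dots,U_{h-2}$ are valid, and falling back to $H_{h-1}$ and $U_0$ could absorb the noncentral rail. But the central region you defer to an unspecified case analysis is where most of the lemma's content lies, and your stated rules for it do not work.
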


\begin{proof}
We place the residual in four classes.  In each class we verify two facts: the
chosen physical \CZ multiplies the intended folded parity variables, and its
endpoints are unused by every other two-qubit gate in the same layer.

\medskip\noindent\emph{Row-bipartite cell patches.}
Immediately before $H_0$, the horizontal fold has not yet started.  Along each
row, the physical register therefore stores $Dy$.  By
Eq.~\eqref{eq:app-fold-inverse}, an interior physical column $j$ stores
$y_{j-1}\oplus y_j$ on the left of the fold and
$y_j\oplus y_{j+1}$ on the right.  A vertical physical \CZ in row band $r$
therefore realizes $P(r,c)$ by choosing
\begin{equation}
j=
\begin{cases}
c+1,&c<m-1,\\c,&c\ge m.
\end{cases}
\label{eq:app-patch-physical-column}
\end{equation}
Assign every patch in $\mathcal P_L$ to $H_0$, except
$P(m-1,m)$ when $L\equiv3\pmod4$; assign that single exceptional patch to
$U_{h-1}$, where the same $Dy$ basis has been restored.  Patches placed in the
same layer use distinct physical columns.  Even when two patches occupy
adjacent row bands, their distinct columns keep their endpoints disjoint.  For
$L\ge7$, these vertical \CZ gates also avoid the outer horizontal \CNOTs in
$H_0$ or $U_{h-1}$.

\medskip\noindent\emph{Noncentral vertical edges.}
For every row band $r\ne m-1$, the three vertical edges of $R_L$ are still at
the columns shown in Eq.~\eqref{eq:app-canonical-residual}.  Define two pairs
of midpoint slots
\begin{equation}
(A,B)=
\begin{cases}
((B_1,B_2),(B_5,B_6)),&m\text{ even},\\
((B_5,B_6),(B_1,B_2)),&m\text{ odd}.
\end{cases}
\label{eq:app-rail-slot-pairs}
\end{equation}
Above the fold ($r<m-1$), use pair $A$ for column $m-1$ and pair $B$ for
columns $m$ and $L-1$.  Below the fold ($r>m-1$), use pair $B$ for column
$m-1$ and pair $A$ for columns $0$ and $m$.  Within either pair, an even row
band uses the first slot and an odd row band uses the second.

At each selected midpoint slot, both endpoints are either untouched by
the active horizontal \CNOTs or are CNOT sources.  Its stored bit is therefore the
required folded variable.  The midpoint circuit's own vertical \CZ uses the
corresponding target column, so the inserted gate is disjoint from the base
gate.  Alternating the two slots by the parity of $r$ prevents adjacent row
bands from sharing a qubit, and the three residual columns are distinct.
Thus every assigned slot remains a matching.

\medskip\noindent\emph{Noncentral horizontal edges.}
For each horizontal edge of $R_L$ outside the finite fold-seam cases listed
below, define
\begin{equation}
t(c)=
\begin{cases}
c+2,&c\le m-2,\\L-c,&c\ge m.
\end{cases}
\label{eq:app-horizontal-rail-time}
\end{equation}
Place the edge crossing columns $c,c+1$ in layer $H_{t(c)}$.  This is the first
later row-fold layer at which both endpoints already contain their final folded
values and the CNOT wavefront has moved at least two columns away.  For the
noncentral range, $2\le t(c)<h$.  At a fixed value of $t$, there is at most one
assigned edge on the left half and one on the right half; because $t<h$, their
endpoint intervals are disjoint.

\medskip\noindent\emph{The fold seam.}
Only a constant number of edges near the central columns are not covered by
the preceding uniform rules.  Their local CNOT orientation depends only on
$L\bmod4$, so four finite cases suffice.  In the first table, an entry
$(c,B_j)$ means that the central-band vertical edge $v_{m-1,c}$ is placed in
midpoint layer $B_j$; $H$ abbreviates $H_{h-1}$.
\begin{center}
\begin{tabular}{@{}cl@{}}
\toprule
$L\bmod4$ & central-band vertical placements \\
\midrule
$3$ & $(m-3,B_2),(m,B_6),(L-1,B_0)$ \\
$0$ & $(m-3,B_2),(m+2,B_6),(L-1,B_2)$ \\
$1$ & $(0,B_1),(m-3,B_1),(m,H)$ \\
$2$ & $(0,B_1),(m-3,B_1),(m+2,B_5)$ \\
\bottomrule
\end{tabular}
\end{center}

In the next table, $(r,c,B_j)$ means that the horizontal edge $h_{r,c}$ is
placed in $B_j$; $U$ abbreviates $U_0$.
\begin{center}
\small
\begin{tabular}{@{}cl@{}}
\toprule
$L\bmod4$ & central horizontal placements \\
\midrule
$3$ & $(m-1,m-1,B_2),(m-1,m,U)$ \\
    & $(m,m-3,B_0),(m,m-2,B_3)$ \\
\addlinespace[2pt]
$0$ & $(m-1,m-1,B_2),(m,m-3,B_0)$ \\
    & $(m,m-2,B_3),(m,m,B_0),(m,m+1,B_3)$ \\
\addlinespace[2pt]
$1$ & $(m-1,m-3,B_0),(m-1,m-2,B_3)$ \\
    & $(m,m-1,B_5),(m,m,U)$ \\
\addlinespace[2pt]
$2$ & $(m-1,m-3,B_0),(m-1,m-2,B_3)$ \\
    & $(m-1,m,B_0),(m-1,m+1,B_3),(m,m-1,B_5)$ \\
\bottomrule
\end{tabular}
\end{center}

To verify the tables, substitute the four values of $L\bmod4$ into
Eqs.~\eqref{eq:app-row-slopes} and~\eqref{eq:app-central-gadget}.  In every
entry, each endpoint is either unchanged in the indicated layer or is a CNOT
source, so the inserted \CZ realizes the named folded edge.  Entries sharing a
slot have disjoint endpoints and avoid both the base midpoint matching and the
bulk placements above.  All listed indices exist for $L\ge7$.  This completes
the locality, phase, and collision checks for every residual term.
\end{proof}

\subsubsection{Removing the induced linear terms}

The preceding edge accounting matches the off-diagonal quadratic part of the
target phase.  A \CZ applied to two parity masks can additionally generate
linear terms because Boolean variables satisfy $x_i^2=x_i$.  Immediately before
a \CZ gate $e$, write its two endpoint values as $a_e\cdot x$ and
$b_e\cdot x$.  Then
\begin{equation}
(a_e\cdot x)(b_e\cdot x)
=(a_e\mathbin{\odot}b_e)\cdot x
\oplus\bigoplus_{i<j}
(a_{e,i}b_{e,j}\oplus a_{e,j}b_{e,i})x_ix_j,
\label{eq:app-boolean-product}
\end{equation}
where $\odot$ is entrywise multiplication.  Therefore the total unwanted
linear mask is
\begin{equation}
z=\bigoplus_{e\in\mathcal E_{\rm CZ}}
\bigl(a_e\mathbin{\odot}b_e\bigr),
\label{eq:app-zero-extra-z}
\end{equation}
with $\mathcal E_{\rm CZ}$ the set of all \CZ gates in the construction.

After the inverse fold, the CNOT network has restored the original variables
$x$.  Every monomial in $f^\star$ joins two distinct columns, so the target
polynomial has no linear term in this basis.  Applying $Z_i$ exactly when
$z_i=1$ therefore cancels all induced linear terms.  These $Z$ gates form one
parallel single-qubit layer and do not change the two-qubit or CNOT depth.

\begin{theorem}[Folded synthesis]
\label{thm:app-gamma-zero-extra}
For every $L\ge7$, the circuit above implements the selected universal
representative $\Gamma^\star$ using $O(N)$ gates, no ancillas, and
nearest-neighbor CNOT depth
\begin{equation}
d_{\Gamma^\star}(L)\le2g(L)+8
=\begin{cases}
2L+4,&L\text{ even},\\
2L+6,&L\text{ odd}.
\end{cases}
\label{eq:app-zero-extra-depth}
\end{equation}
\end{theorem}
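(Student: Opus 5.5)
The proof is an assembly of results already in place. I would split it into three parts: functional correctness, depth, and resource counts.

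\emph{Correctness.} The circuit is the composition $G^{-1}\circ \Phi \circ G$. Here $G=(A\otimes A)$ is the column fold followed by the row fold, and $\Phi$ is everything placed between and inside the fold layers: the midpoint $N_L$, the inserted residual \CZ gates, and the final $Z$ layer. I first check that the CNOT part is a pure basis change. The midpoint $N_L$ acts as the identity on computational bits, since each horizontal \CNOT is undone within the same $\epsilon$ round and the reversed central gadget is still self-inverse. The inserted gates are diagonal. So the CNOT network alone composes to the identity on bits, and the whole circuit is diagonal with phase $(-1)^{F(x)}$.

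To identify $F$, I write the phase in the folded variables. Lemma~\ref{lem:app-zero-extra-placement} guarantees that each inserted \CZ multiplies exactly the intended folded parity variables. The residual \CZ gates therefore contribute the edge set $R_L\oplus\bigoplus_{\mathcal P_L}P(r,c)$. By Eq.~\eqref{eq:app-zero-extra-residual}, together with $E(N_L)$ this reproduces $E_{\rm fold}(f^\star)$, so the off-diagonal quadratic part of $F$ equals that of $f^\star$. The linear discrepancy is removed by the $Z$ layer from Eq.~\eqref{eq:app-zero-extra-z}, because $f^\star$ has no linear terms in the original basis. Hence $F=f^\star$ up to a global constant. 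Lemma~\ref{lem:gamma-vertical-derivative}, together with the invariance of the added $J(c,d)$ terms, then shows that $\Gamma^\star$ satisfies Eq.~\eqref{eq:parity-encoding} for every vertical edge.

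\emph{Depth.} The two-qubit layers are counted directly:
\begin{itemize}
\item the column and row folds and their inverses contribute $2g(L)$ layers by Eq.~\eqref{eq:app-gather-depth};
\item the midpoint contributes $8$ layers;
\item every residual gate sits in an existing slot, and each slot is still a matching by Lemma~\ref{lem:app-zero-extra-placement};
\item the $Z$ layer is single-qubit and adds no CNOT depth.
\end{itemize}
This gives $2g(L)+8$. Substituting $g(L)=L-2$ for even $L$ and $L-1$ for odd $L$ yields $2L+4$ and $2L+6$ respectively.

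\emph{Gate count and ancillas.} Each fold layer has $O(L)$ gates per line over $L$ lines, and there are $O(L)$ layers in total. Per axis the total is $O(N)$, since each line's fold uses $L-2$ \CNOTs. The midpoint uses $O(N)$ gates, the residual uses $|R_L|+O(1)=O(L)$ by Eq.~\eqref{eq:app-residual-size}, and the $Z$ layer uses at most $N$. No ancillas appear anywhere in the construction.

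\emph{Main obstacle.} The delicate step is justifying that "intended folded variable" is meaningful at each insertion slot. The insertion is mid-fold, where the physical qubits store partially folded parities rather than $y$, so a \CZ there realizes the target edge only up to extra quadratic terms among the intermediate masks. I would handle this by relying on the slot condition in Lemma~\ref{lem:app-zero-extra-placement}: each endpoint is untouched by, or is the source of, any \CNOT in that layer, so its stored mask equals its final value, and Eq.~\eqref{eq:app-boolean-product} gives exactly the named edge plus linear terms. Those linear terms are precisely what the mask $z$ collects.
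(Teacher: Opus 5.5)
Your proposal is correct and follows the paper's own proof essentially step for step. The phase is identified through Eq.~\eqref{eq:app-zero-extra-residual}, Lemma~\ref{lem:app-zero-extra-placement}, and the $Z$ mask of Eq.~\eqref{eq:app-zero-extra-z}; the data basis is restored because every gadget \CNOT and both folds are uncomputed; and depth and gate count are tallied the same way.

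One correction to your closing remark. At the patch slots $H_0$ and $U_{h-1}$, the endpoints do not hold final folded values. They deliberately store the two-term row masks $y_{r,j-1}\oplus y_{r,j}$ (or $y_{r,j}\oplus y_{r,j+1}$) of the unfolded $Dy$ basis, which is exactly why one vertical \CZ there produces the four-edge patch $P(r,c)$. The justification should therefore be the class-by-class check in the proof of Lemma~\ref{lem:app-zero-extra-placement}, not a blanket ``stored mask equals final value'' rule.
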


\begin{proof}
\emph{Correct phase.}
Starting from the complete cell-diagonal pattern of $N_L^{(0)}$, the modified
midpoint $N_L$ differs only by the explicitly identified local $P$-patches.
Equations~\eqref{eq:app-zero-extra-residual} and
\eqref{eq:app-rail-boundary} identify the exact remaining off-diagonal phase,
and Lemma~\ref{lem:app-zero-extra-placement} inserts every one of those terms
into existing layers.  Equation~\eqref{eq:app-zero-extra-z} removes the induced
linear terms.  The resulting Boolean phase is therefore exactly $f^\star$,
which is a universal vertical correction by
Eq.~\eqref{eq:app-gamma-representative} and
Lemma~\ref{lem:gamma-vertical-derivative}.

\emph{Restoration of the data basis.}
Every midpoint horizontal \CNOT is uncomputed inside its four-layer gadget,
and the modified central gadget is likewise opened and closed with inverse
CNOTs.  The inverse row and column folds then undo $G$.  Hence the circuit is
diagonal in the original computational basis and acts as
$\ket{x}\mapsto(-1)^{f^\star(x)}\ket{x}$.

\emph{Depth.}
The forward fold has depth $g(L)$, the midpoint has eight layers, and the
inverse fold has depth $g(L)$.  Lemma~\ref{lem:app-zero-extra-placement} adds no
new two-qubit layer, and the final $Z$ correction is single-qubit.  Thus the
depth is $2g(L)+8$, which evaluates to the two cases in
Eq.~\eqref{eq:app-zero-extra-depth}.

\emph{Gate count.}
The fold, inverse fold, and eight-layer midpoint each contain $O(N)$ gates.
The residual contains $O(L)$ two-qubit phase gates by
Eq.~\eqref{eq:app-residual-size}, and the final repair uses at most $N$
single-qubit $Z$ gates.  The total gate count is therefore $O(N)$.
\end{proof}

\subsection{A matching light-cone lower bound}
\label{app:gamma-lower-bound}

\begin{theorem}[Optimal leading coefficient]
\label{thm:gamma-lb}
Consider an ancilla-free nearest-neighbor unitary circuit on an $L\times L$
data grid, with no measurement or feedforward.  Suppose it implements a
diagonal phase correction $\Gamma$ with $\Gamma^2=I$ and satisfies the
universal vertical parity-correction condition.  For $L\ge2$, its two-qubit
layer depth is at least $2L-2$.
\end{theorem}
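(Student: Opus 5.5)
The plan is to combine a Heisenberg-picture light-cone bound with a structural statement about every valid phase polynomial. First I fix the form of $\Gamma$. It is diagonal in the computational basis and satisfies $\Gamma^2=I$, so every diagonal entry is $\pm1$. Hence $\Gamma\ket{x}=(-1)^{f(x)}\ket{x}$ for some Boolean function $f$, not necessarily quadratic. For any site $u$, conjugation gives $\Gamma X_u\Gamma = X_u\,(-1)^{D_uf(x)}$, where $D_uf(x)=f(x)\oplus f(x\oplus e_u)$. The support of this operator is $\{u\}$ together with every site $w$ on which $D_uf$ genuinely depends.

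On the other side, a depth-$d$ circuit of nearest-neighbor two-qubit gates with no ancillas, measurements or feedforward obeys the standard light-cone bound. Conjugating $X_u$ by one layer enlarges its support by at most graph distance one, so $\Gamma X_u\Gamma$ is supported in the radius-$d$ ball around $u$. It therefore suffices to exhibit, for every valid $f$, a pair of opposite corners $u,v$ (grid distance $2L-2$) such that $D_uf$ depends on $x_v$. This immediately gives $d\ge 2L-2$.

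Next I need to control all valid $f$, not just $f_0$. Let $f$ satisfy Eq.~\eqref{eq:parity-encoding} for every vertical edge. Then $h:=f\oplus f_0$ satisfies $h(x)=h(\tau_{r,c}x)$ for every vertical transposition, by Lemma~\ref{lem:gamma-vertical-derivative}. Adjacent transpositions generate each column's symmetric group, so $h$ depends only on the column Hamming weights $(w_0,\dots,w_{L-1})$. This holds for arbitrary Boolean $h$. Thus $D_uf=D_uf_0\oplus D_uh$.

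For $f_0$ I read off the derivative explicitly from Eq.~\eqref{eq:app-gamma-f0}. At $u=(L-1,0)$ it is $\bigoplus_{d>0}\bigoplus_sS_{L-1,s}x_{s,d}$ (the derivative of a quadratic is affine). Its coefficient on the opposite corner $v=(0,L-1)$ is $S_{L-1,0}=1$ for every $L\ge2$.

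The key step, and the one I expect to be the main obstacle, is to show that the column-symmetric correction $D_uh$ cannot cancel this dependence on $x_v$. My first attempt is by contradiction. Suppose $D_uf$ does not depend on $x_v$. Then $D_uh\oplus x_v$ is independent of $x_v$ on a suitable restriction. Since $D_uh$ is symmetric under permuting the entries of column $L-1$ (flipping $u$, which lies in column $0$, commutes with those permutations), the same must hold for $D_uh\oplus x_{s,L-1}$ for every row $s$.

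I then compare this with the exact coefficients $S_{L-1,s}$ in column $L-1$, using Eq.~\eqref{eq:app-gamma-us}. When some $S_{L-1,s}=0$, the dependence of $D_uf$ on $x_{s,L-1}$ must come entirely from $h$. This forces $D_uh$ to depend on that variable with the "wrong" coefficient, contradicting symmetry. Where one coefficient pattern does not directly give a contradiction (for example when all $S_{L-1,s}$ for $s<L-1$ equal one), I will switch to another corner pair, such as $u=(0,0)$ and $v=(L-1,L-1)$, or to a reflected pair. I will choose the pair by the parity of $L$, so that the $f_0$ coefficients are non-constant across the far column and the symmetry argument bites.

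As a fallback, I will restrict to inputs with all other columns fixed and column $L-1$ of fixed weight, and evaluate $D_uf$ on two inputs that differ by swapping $x_v$ with another entry of that column. $h$ cannot distinguish these two inputs, while $f_0$ can whenever the corresponding $S$ coefficients differ. Once the dependence of $D_uf$ on the opposite corner is established, the light-cone bound from the first paragraph completes the argument.
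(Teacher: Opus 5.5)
Your framework is the same as the paper's: the light-cone bound on $\Gamma X_u\Gamma$, the reduction of $h=f\oplus f_0$ to a function of the column weights, and the coefficient $S_{L-1,0}=1$. The key step, however, has a genuine gap. No corner pair fixed in advance, whether by the parity of $L$ or by any other property of $L$, works for every valid $f$, and the ``contradiction with symmetry'' you expect does not exist.

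\begin{itemize}
\item For the pair $(0,0),(L-1,L-1)$, the canonical phase itself defeats you. Row $0$ of $S$ vanishes, so $D_{(0,0)}f_0\equiv 0$.
\item For the pair $u=(L-1,0)$, $v=(0,L-1)$, take the valid phase $f=f_0\oplus\sigma_0\sigma_{L-1}$, which is column-symmetric and hence allowed. Then $D_uf=\bigoplus_{j>0}\bigoplus_s S_{L-1,s}x_{s,j}\oplus\sigma_{L-1}$, whose coefficient on $x_v$ is $S_{L-1,0}\oplus 1=0$. So $D_uf$ is independent of $x_v$.
\end{itemize}

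Running your own chain for odd $L$ with this $f$, you only conclude that $D_uf$ depends on $x_{L-1,L-1}$. That contradicts nothing, and this site is only at distance $L-1$ from $u$. Your fallback swap has the same weakness: it shows that $D_uf$ depends on $x_v$ \emph{or} on the other swapped site, not on $x_v$ specifically.

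The missing idea is to let the choice of corners depend on $h$.

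\begin{enumerate}
\item Restrict to inputs whose inner columns are zero and whose two outer columns each contain at most one occupied site. Put $a=(r,0)$ and $b=(s,L-1)$.
\item Column symmetry makes $\lambda=\partial_a\partial_b h(0)$ independent of $r$ and $s$. Hence $\partial_a\partial_b f(0)=S_{r,s}\oplus\lambda$.
\item Since $S_{L-1,0}=1$ but $S_{0,L-1}=0$, choose $(a,b)=((L-1,0),(0,L-1))$ if $\lambda=0$ and $(a,b)=((0,0),(L-1,L-1))$ if $\lambda=1$. In either case $\partial_a\partial_b f(0)=1$, i.e.\ $D_af(0)\neq D_af(e_b)$.
\end{enumerate}

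Then $\Gamma X_a\Gamma$ acts nontrivially on the opposite corner $b$, and your light-cone step yields $d\ge 2L-2$.
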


\begin{proof}
Write
\[
\Gamma\ket{x}=(-1)^{f(x)}\ket{x}
\]
for the Boolean phase implemented by an arbitrary valid correction, and let
$h=f\oplus f_0$.  The exchange derivatives of $f$ and $f_0$ agree on every
vertical edge.  Hence $h(x)=h(\tau_{r,c}x)$ for every adjacent exchange within
a column.  Such exchanges generate all permutations of the rows within that
column, so $h$ depends on each column only through its Hamming weight.

Now set every column except $0$ and $L-1$ to zero, and restrict each of these
two outer columns to contain at most one occupied site.  Let
$a=(r,0)$ and $b=(s,L-1)$.  On this restricted set, the four values
$h(0)$, $h(e_a)$, $h(e_b)$, and $h(e_a\oplus e_b)$ depend only on whether the
two outer columns have weights $(0,0),(1,0),(0,1)$, or $(1,1)$, and not on the
rows $r,s$.  Therefore the mixed Boolean derivative
\begin{equation}
\partial_a\partial_b h(0)
=h(0)\oplus h(e_a)\oplus h(e_b)\oplus h(e_a\oplus e_b)
=:\lambda
\label{eq:app-lower-h-lambda}
\end{equation}
is independent of $r$ and $s$.

For the canonical phase $f_0$, the coefficient of
$x_{r,0}x_{s,L-1}$ is exactly $S_{r,s}$, because $0<L-1$ in
Eq.~\eqref{eq:app-gamma-f0}.  Thus
\begin{equation}
\partial_a\partial_b f(0)=S_{r,s}\oplus\lambda.
\label{eq:app-lower-f-derivative}
\end{equation}
If $\lambda=0$, choose
$a=(L-1,0)$ and $b=(0,L-1)$; then $S_{L-1,0}=1$.  If $\lambda=1$, choose
$a=(0,0)$ and $b=(L-1,L-1)$; then $S_{0,L-1}=0$.  In either case,
$\partial_a\partial_bf(0)=1$, and $a,b$ are opposite corners of the grid.

It remains to translate this nonzero derivative into operator support.  Define
\[
O_a=\Gamma X_a\Gamma.
\]
Because $\Gamma^2=I$,
\begin{equation}
O_a\ket{x}
=(-1)^{f(x)\oplus f(x\oplus e_a)}\ket{x\oplus e_a}.
\label{eq:app-lower-conjugated-X}
\end{equation}
The condition $\partial_a\partial_bf(0)=1$ says that the phase in
Eq.~\eqref{eq:app-lower-conjugated-X} changes when the occupation at $b$ is
flipped.  Therefore $O_a$ acts nontrivially on site $b$; otherwise this matrix
element could not depend on the state of $b$.

Conjugating a single-site operator through one nearest-neighbor two-qubit layer
can enlarge its support by at most one graph edge.  Hence a depth-$d$ circuit
can spread the support of $X_a$ only to the graph-distance-$d$ neighborhood of
$a$.  The opposite corners $a$ and $b$ have Manhattan distance $2L-2$, so
$d\ge2L-2$.
\end{proof}

\section{2D Nearest-Neighbor Compilation of the Reconfigurable Merge-Sort Permutation}
\label{app:reconf-2dnn}

This appendix gives an explicit 2D nearest-neighbor (2D-NN) compilation of the merge-sort fermionic permutation of Maskara et al.~\cite{maskara2025fastsimulationfermionsreconfigurable}. We report its asymptotic depth in Table~\ref{tab:fp-comparison} of Section~\ref{sec:eval-setup}. The original algorithm targets reconfigurable architectures. In this model, any subset of qubits can move to arbitrary positions in one $O(1)$-depth ``reconfigure'' step, independent of distance. On a fixed grid, each such step requires an explicit nearest-neighbor routing circuit. A direct translation loses much of the original asymptotic advantage. We identify four uses of reconfiguration and give a compatible 2D-NN implementation for each. Mid-circuit measurement and classical feedforward already work with 2D-NN connectivity, so we keep them unchanged.

The original algorithm has $\log_2 N$ levels of recursive merging. Each merge combines two sorted halves through parity-cascade builds, reconfigure steps, and one \CZ layer (\textsc{ApplyPhases}) that applies the inversion phases. We use the depth-$1$ \textsc{ApplyPhases} construction from Appendix~A.2 of~\cite{maskara2025fastsimulationfermionsreconfigurable}, rather than the depth-up-to-$L$ construction in their main-text Figure~2.

We use an $L \times (2L{+}1)$ physical grid and assume for simplicity that $L$ is a power of two. The data occupy the left $L \times L$ block, and the ancillas occupy the right $L \times (L{+}1)$ strip. The extra ancilla column ensures that, after the ancillas move in, every data column lies between two ancilla columns. We use the following \CNOT-depth conventions: a \FSWAP layer has depth $2$, a \SWAP layer has depth $3$, and a \CZ layer has depth $1$. Single-qubit gates and classical processing are free.

\subsection{Compilation choices}
\label{app:reconf-choices}

\paragraph{Horizontal-only recursion to single rows.}
The original recursion halves sub-blocks down to single elements. In our 2D-NN compilation, we instead stop at one row of $L$ qubits. We sort that row with the textbook 1D \FSWAP odd--even-transposition (OET) network of depth $2L$ (Section~\ref{sec:oet-bg}), which matches the row diameter up to a constant. At each higher level, we cut the sub-block horizontally. This halves its number of rows while preserving the row direction, so the same column-interleaved ancilla layout can be used at every level. All $L(L{+}1)$ ancillas---$L{+}1$ per row---move in or out in parallel, with depth $2L$ for each move.

\paragraph{Efficient SWAP for ancilla transport.}
The ancillas start and return to $\ket{0}$ around each move. A \SWAP can therefore use two \CNOT gates instead of three, so each move has depth $2L$.

\paragraph{Short--long--short three-stage routing.}
The original implements an arbitrary permutation on each $X \times L$ sub-block with one $O(1)$ reconfigure step. We replace this step with the Hall RCR routing of Section~\ref{sec:hall}, which decomposes the permutation into row, column, and row stages. We orient the grid so that the repeated stage uses the shorter axis. Thus, on an $X \times L$ sub-block, the stages are short--long--short (col $\to$ row $\to$ col). Horizontal recursion always gives $X \le L$, with $X=L$ only at the top level and $X=2$ at the bottom. Short--long--short routing uses $2X+L$ \SWAP layers, whereas long--short--long routing uses $2L+X$. This saves $L-X$ \SWAP layers per route. Since each \SWAP layer has \CNOT depth $3$, each route has \CNOT depth $3(2X+L)$.

\paragraph{Snake-corner cascade compilation.}
The original parity build is a depth-$2$ cascade gadget that uses mid-circuit measurement and classical feedforward~\cite{baumer2025measurement}. It maps the data to the parity basis, where \textsc{ApplyPhases} has depth $1$. Each cascade-\CNOT acts on a pair that is adjacent in the parity ordering. The reconfigurable model brings each such pair together directly. Our column-interleaved layout preserves the snake structure of this adjacency graph at every recursion level. Every cascade-\CNOT within a row is therefore physically local.

The only exception occurs at a row turn of a multi-row sub-block. There, two ancilla cells are stacked vertically between consecutive logical positions on the snake. We treat this stacked pair as one logical edge. Each cascade-\CNOT at a turn is compiled into four physical \CNOTs: hop in, hop across, hop out, and undo the temporary transfer. This sequence skips the middle ancilla. The gadget still has two logical \CNOT layers, but in the worst case each layer expands to depth at most $4$ when it includes a turn. The two parity-group builds run in parallel on disjoint qubit and ancilla sets. The resource count below therefore uses
\begin{equation}
\kappa \;:=\; c_{\mathrm{parity}} + c_{\mathrm{phase}} + c_{\mathrm{undo}} \;\le\; 8 + 1 + 8 \;=\; 17.
\label{eq:reconf-kappa}
\end{equation}

\subsection{Algorithm and resource analysis}
\label{app:reconf-algo}

Algorithm~\ref{alg:reconf-2dnn} gives the recursion. The merge subroutine \textsc{Interleave2D} applies the four substitutions of Section~\ref{app:reconf-choices} to Figure~2 of~\cite{maskara2025fastsimulationfermionsreconfigurable}; Figure~\ref{fig:reconf-2dnn-merge} shows the compiled result, and Table~\ref{tab:reconf-merge} gives its \CNOT-depth accounting.

\begin{figure*}[t]
  \centering
  \includegraphics[width=\textwidth]{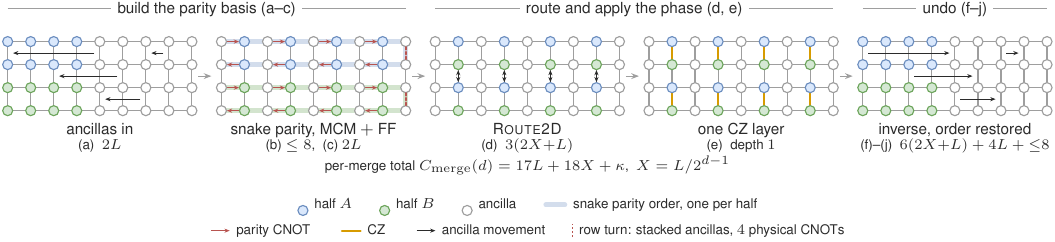}
  \caption{The 2D-NN-compiled merge subroutine \textsc{Interleave2D}, adapted from Figure~2 of Maskara et al.~\cite{maskara2025fastsimulationfermionsreconfigurable} using the four substitutions in Section~\ref{app:reconf-choices}, drawn one tile per step for the top recursion level, where $X=L$: the $X \times L$ sub-block has $L=4$ data columns per row, $L{+}1=5$ interleaved ancilla columns and $X=4$ rows, two per half.
    Table~\ref{tab:reconf-merge} gives the \CNOT depth of every stage; the depths under the tiles are the general expressions.
    \textbf{(a)} The $L{+}1$ ancillas of each row move in parallel from the right strip into the data block, which places every data column between two ancilla columns.
    \textbf{(b)} That layout preserves the snake parity ordering, so each half is built at once along its own snake---threading data and ancilla cells alike and turning at the row end---by the depth-$2$ cascade of~\cite{baumer2025measurement}, which uses mid-circuit measurement and classical feedforward. The two halves build in parallel on disjoint qubit and ancilla sets, and every cascade \CNOT within a row is physically local. The only exceptions are the row turns, where two ancilla cells are stacked between consecutive snake positions; each such \CNOT compiles into four physical \CNOT s, so the two logical layers expand to depth at most~$8$.
    \textbf{(c)} The ancillas move back out.
    \textbf{(d)} \textsc{Route2D} is the Hall RCR 2D-NN permutation of Section~\ref{sec:hall}; it brings parity-basis qubits from the two halves into vertically adjacent pairs, exchanging the two middle rows.
    \textbf{(e)} \textsc{ApplyPhases} is then one depth-$1$ \CZ layer.
    \textbf{(f)--(j)} The inverse: the route is undone, the parity unbuilt, and a final route restores the order.}
  \label{fig:reconf-2dnn-merge}
\end{figure*}

\begin{algorithm}[H]
\caption{2D-NN compiled merge-sort fermionic permutation}
\label{alg:reconf-2dnn}
\begin{algorithmic}[1]
\Procedure{FermPerm}{$\mathrm{grid}: L_{\mathrm{eff}} \times L,\; \pi$}
\If{$L_{\mathrm{eff}} = 1$}
\State \textsc{FSWAP-RowSort}($\mathrm{grid}, \pi$)
       \Comment{depth $2L$}
\State \Return
\EndIf
\State $(\mathrm{top}, \mathrm{bot}) \gets$ horizontal split of grid at $L_{\mathrm{eff}}/2$
\State \Call{FermPerm}{$\mathrm{top}, \pi|_{\mathrm{top}}$}
       \Comment{this and the next call execute in parallel}
\State \Call{FermPerm}{$\mathrm{bot}, \pi|_{\mathrm{bot}}$}
\State \textsc{Interleave2D}($\mathrm{top} \cup \mathrm{bot}, \pi$)
       \Comment{merge}
\EndProcedure
\end{algorithmic}
\end{algorithm}
\begin{table}[!t]
\centering
\renewcommand{\arraystretch}{1.05}
\begin{tabular}{@{}clc@{}}
\toprule
\# & Stage & \CNOT depth \\
\midrule
(1)  & \textsc{MoveAncillasIn}                          & $2L$ \\
(2a) & \textsc{BuildParity} (group $B$)                 & $\le 8$ \\
(2b) & \textsc{BuildPartialParity} (group $A_j$)        & $\le 8$ \\
(3)  & \textsc{MoveAncillasOut}                         & $2L$ \\
(4)  & \textsc{Route2D}($\pi_1$, $X \times L$)          & $3(2X + L)$ \\
(5)  & \textsc{ApplyPhases} (one \CZ layer)           & $1$ \\
(6)  & \textsc{Route2D}($\pi_1^{-1}$, $X \times L$)     & $3(2X + L)$ \\
(7)  & \textsc{MoveAncillasIn}                          & $2L$ \\
(8a) & \textsc{UnbuildPartialParity} ($A_j$)            & $\le 8$ \\
(8b) & \textsc{UnbuildParity} ($B$)                     & $\le 8$ \\
(9)  & \textsc{MoveAncillasOut}                         & $2L$ \\
(10) & \textsc{Route2D}($\pi_{\mathrm{final}}$, $X \times L$) & $3(2X + L)$ \\
\midrule
\multicolumn{2}{l}{\textbf{Per-merge total} $C_{\mathrm{merge}}(d)$} & $17 L + 18 X + \kappa$ \\
\bottomrule
\end{tabular}
\caption{Stage-by-stage \CNOT depth of \textsc{Interleave2D} at recursion level $d$ on an $X \times L$ sub-block, where $X=L/2^{d-1}$. The stages run from top to bottom. Stages (2a)/(2b) and (8a)/(8b) run in parallel on disjoint qubit and ancilla sets.}
\label{tab:reconf-merge}
\end{table}

There are $\log_2 L$ recursion levels above the base case. At level $d$, the $2^{d-1}$ active merges occupy disjoint sub-grids and run in parallel. The depth of the level therefore equals the depth of one merge on an $X \times L$ sub-block, where $X=L/2^{d-1}$.

We now sum the entries in Table~\ref{tab:reconf-merge}. The four ancilla moves contribute $4\cdot2L=8L$. The three routes contribute $3\cdot3(2X+L)=18X+9L$. The parity/phase/undo block contributes $\kappa$ from Eq.~\eqref{eq:reconf-kappa}. With $X=L/2^{d-1}$,
\begin{equation}
C_{\mathrm{merge}}(d) \;=\; 17 L + 18 X + \kappa \;=\; 17 L + \frac{36 L}{2^{d}} + \kappa.
\label{eq:reconf-cmerge}
\end{equation}
We add the base-case row sort of depth $2L$ and sum over all levels. Using $\sum_{d=1}^{\log_2 L}2^{-d}=1-1/L$ gives
\begin{align}
D(L) &= 2L + \sum_{d=1}^{\log_2 L} C_{\mathrm{merge}}(d) \notag \\
     &= 2L + 17L\log_2 L + 36L\!\left(1 - \tfrac{1}{L}\right) + \kappa\log_2 L \notag \\
     &= (17L + \kappa)\log_2 L + 38L - 36. \notag \\
\intertext{Substituting $\kappa \le 17$ from~\eqref{eq:reconf-kappa},}
D(L) &\le 17(L+1)\log_2 L + 38L - 36 \notag \\
     &= \tfrac{17}{2}\sqrt{N}\log_2 N + 38\sqrt{N} + o(\sqrt{N}). \label{eq:reconf-DL-bound}
\end{align}
This gives the $O(\sqrt{N}\log N)$ scaling cited in Section~\ref{sec:intro}. Compared with our $10\sqrt{N}+O(1)$ result in Section~\ref{sec:full-alg}, this baseline has an additional multiplicative $\log_2N$ factor in its asymptotic depth. The compiled circuit uses $L^2$ data qubits and $L(L{+}1)$ ancillas. It therefore uses $L(2L{+}1)=2N+\sqrt{N}$ physical qubits in total.

\section{2D Nearest-Neighbor Compilation of the All-to-All Ancilla-Free Staircase Permutation}
\label{app:constantinides-2dnn}

This appendix gives an explicit 2D nearest-neighbor (2D-NN) compilation of the ancilla-free staircase fermionic permutation of Constantinides et al.~\cite{constantinides2025lowdepthfermionroutingancillas}. We report its asymptotic depth in Table~\ref{tab:fp-comparison} of Section~\ref{sec:eval-setup}. The original algorithm assumes all-to-all connectivity and has depth $O(\log^2 N)$. It uses no ancillas, mid-circuit measurements, or classical feedforward. On a fixed grid, however, every long-range gate needs nearest-neighbor routing. A direct translation therefore loses the original asymptotic advantage. We replace each all-to-all step with a compatible 2D-NN circuit while keeping the construction ancilla-free and free of measurement and feedforward.

The original algorithm first factors any $\pi \in S_N$ classically into $\log_2 N$ sequential levels of \emph{staircase} permutations by recursive bisection~\cite[Lemma~1]{constantinides2025lowdepthfermionroutingancillas}. A staircase acts on a contiguous range $T$ of length $w := |T|$. It is an ordered set of $k$ pairwise-disjoint transpositions $(m_1,n_1),\ldots,(m_k,n_k)$ satisfying
\[
m_1 < \cdots < m_k < n_1 < \cdots < n_k.
\]
The name refers to the step pattern formed by these transpositions in a permutation diagram. At level $d$, there are $2^{d-1}$ staircases on disjoint sub-ranges of length $N/2^{d-1}$, and they run in parallel. Each level handles the transpositions whose endpoints lie on opposite sides of the current midpoint. The algorithm then recurses independently on the two halves, much like one merge level in Appendix~\ref{app:reconf-2dnn}.

Within one staircase, the long-range \FSWAPs can be reordered so that all \SWAPs form one final layer. The remaining circuit has three gadgets built from \CZ-fanout layers~\cite{constantinides2025lowdepthfermionroutingancillas}:
\begin{equation}
\begin{aligned}
\textsc{Staircase}_T &\;=\; \textsc{FinalSWAPs} \cdot G_3 \cdot G_2 \cdot G_1, \\
G_i &\;=\; P^\dagger \cdot L_2 \cdot L_1 \cdot P,
\end{aligned}
\label{eq:constantinides-staircase}
\end{equation}
Each gadget $G_i$ applies the prefix-XOR (parity) transform $P$, two non-overlapping \CZ-fanout layers $L_1$ and $L_2$, and then $P^\dagger$ on their common target sub-range~\cite[Lemma~2]{constantinides2025lowdepthfermionroutingancillas}. The final layer $\textsc{FinalSWAPs}$ contains $k \le w/2$ disjoint \SWAPs. Each of $P$, $L_1$, $L_2$, and $P^\dagger$ is ancilla-free and has depth at most $2\log_2 w$ in \emph{logical CNOT layers}~\cite{moore1999quantum, fang2003quantum, remaud2025ancilla}. These are layers in the original all-to-all circuit, before 2D-NN routing. Thus one staircase contains at most $3 \cdot 4 \cdot 2\log_2 w = 24\log_2 w$ logical \CNOT layers, followed by $\textsc{FinalSWAPs}$.

We use an $L \times L$ grid, where $L$ is a power of two and $N=L^2$. Qubits follow the snake row-major order. Every contiguous sub-range $[a,b] \subseteq [0,N{-}1]$ therefore occupies a contiguous part of the grid. We use the \CNOT-depth conventions of Appendix~\ref{app:reconf-2dnn}: \CZ layer $=1$, \FSWAP layer $=2$, and \SWAP layer $=3$.

\subsection{Compilation choices}
\label{app:constantinides-choices}

\paragraph{Horizontal recursion.}
We index the recursion by $d=1,\ldots,\log_2 L$, as in Appendix~\ref{app:reconf-2dnn}. At level $d$, there are $2^{d-1}$ disjoint sub-grids of shape $X_d \times L$, where $X_d=L/2^{d-1}$. Their staircases run in parallel. The top level has $X_1=L$, while the deepest level above the base has $X_{\log_2 L}=2$. Under the snake layout, each staircase range $T$ has length $w_d \le X_dL=N/2^{d-1}$ and lies within its sub-grid. All routing at level $d$ can therefore remain inside that sub-grid.

\paragraph{Short--long--short routing.}
Before each logical \CNOT layer, we apply a Hall RCR confined to the same sub-grid (Section~\ref{sec:hall}). As in Appendix~\ref{app:reconf-2dnn}, we orient it short--long--short. Its depth is
\[
H_d \;=\; 3(2 X_d + L) \;\le\; 9L,
\]
whose $6X_d$ term halves at each level. The trailing $\textsc{FinalSWAPs}$ operation is also a permutation within the sub-grid, so one more Hall RCR implements it.

\paragraph{Stopping at single rows.}
After $\log_2 L$ levels, each $2 \times L$ sub-grid splits into two $1 \times L$ rows. Each row then contains exactly the modes assigned to it. The remaining task is a 1D fermionic permutation of $L$ modes in each row. We perform these permutations in parallel with the 1D \FSWAP odd--even-transposition (OET) network of Section~\ref{sec:oet-bg}, which has depth $2L$. Continuing the staircase recursion for another $\log_2 L$ levels would instead cost $\Theta(L\log L)$ depth on each row. We therefore stop at single rows, as in Appendix~\ref{app:reconf-2dnn}.

\subsection{Algorithm and resource analysis}
\label{app:constantinides-resources}

Algorithm~\ref{alg:constantinides-2dnn} gives the recursive structure. Each non-base call applies the staircase for its sub-grid, splits the sub-grid horizontally, and recurses on the two halves. The two recursive calls run in parallel. At the base case, each row uses a 1D \FSWAP-OET. At level $d$, the $2^{d-1}$ calls act at the same time on disjoint $X_d \times L$ sub-grids. The depth of the level is therefore the depth of one staircase on one such sub-grid. Table~\ref{tab:constantinides-stages} lists the depth of each part.

\begin{algorithm}[H]
\caption{2D-NN compiled all-to-all staircase fermionic permutation}
\label{alg:constantinides-2dnn}
\begin{algorithmic}[1]
\Procedure{FermPerm}{$\mathrm{grid}: L_{\mathrm{eff}} \times L,\; \pi$}
\If{$L_{\mathrm{eff}} = 1$}
\State \textsc{FSWAP-RowSort}($\mathrm{grid}, \pi$) \Comment{depth $2L$}
\State \Return
\EndIf
\State $\sigma \gets$ staircase from Lemma~1 of~\cite{constantinides2025lowdepthfermionroutingancillas} applied to $\pi$
\State \textsc{ApplyStaircase}($\mathrm{grid}, \sigma$)
       \Comment{$\textsc{FinalSWAPs} \cdot G_3 \cdot G_2 \cdot G_1$}
\State $(\mathrm{top}, \mathrm{bot}) \gets$ horizontal split of grid at $L_{\mathrm{eff}}/2$
\State \Call{FermPerm}{$\mathrm{top}, \pi|_{\mathrm{top}}$}
       \Comment{this and the next call execute in parallel}
\State \Call{FermPerm}{$\mathrm{bot}, \pi|_{\mathrm{bot}}$}
\EndProcedure
\end{algorithmic}
\end{algorithm}
\begin{table}[t]
\centering
\renewcommand{\arraystretch}{1.15}
\begin{tabular}{@{}lc@{}}
\toprule
\textbf{Component} & \textbf{CNOT depth} \\
\midrule
Any of $P$, $L_1$, $L_2$, $P^\dagger$                                                                 & $\le 2\, (\log_2 w_d)(H_d + 1)$ \\
Gadget $G_i = P^\dagger\!\cdot\!L_2\!\cdot\!L_1\!\cdot\!P$                                            & $\le 8\, (\log_2 w_d)(H_d + 1)$ \\
\begin{tabular}[c]{@{}l@{}}Level-$d$ staircase\\ $(\textsc{FinalSWAPs} \cdot G_3 \cdot G_2 \cdot G_1)$\end{tabular} & $\le 24\, (\log_2 w_d)(H_d + 1) + H_d$ \\
\bottomrule
\end{tabular}
\caption{Stage-by-stage \CNOT-depth budget at level $d$, with $w_d \le X_d L = N/2^{d-1}$ and $H_d = 3(2X_d + L) \le 9L$. In the first row, each logical \CNOT layer costs $H_d+1$: one Hall RCR of depth $H_d$, followed by one \CNOT layer of depth one.}
\label{tab:constantinides-stages}
\end{table}

The bound $w_d \le X_dL$ gives $\log_2 w_d \le 2\log_2 L+1-d$, and $H_d=3L(1+2^{2-d})$. Substituting these expressions into Table~\ref{tab:constantinides-stages}, summing over $d=1,\ldots,\log_2 L$, and adding the base-case depth $2L$ gives
\begin{align}
D(L) &\le 2L + \sum_{d=1}^{\log_2 L} \!\bigl[24(2\log_2 L + 1 - d)(H_d + 1) + H_d\bigr] \notag \\
     &= 108\, L\, (\log_2 L)^2 + 615\, L \log_2 L - 274\, L \notag \\
     &\qquad + 36\, (\log_2 L)^2 - 276\, \log_2 L + 276 \notag \\
     &= 27\,\sqrt{N}\,(\log_2 N)^2 + \tfrac{615}{2}\,\sqrt{N}\log_2 N \notag \\
     &\qquad - 274\sqrt{N} + o(\sqrt{N}). \label{eq:constantinides-DL-bound}
\end{align}
The resulting 2D-NN compilation has depth $O(\sqrt{N}\log^2 N)$. It is a factor of $\log_2 N$ above the $O(\sqrt{N}\log N)$ merge-sort compilation of Appendix~\ref{app:reconf-2dnn}. The merge-sort compilation is itself a factor of $\log_2 N$ above our $10\sqrt{N}+O(1)$ result in Section~\ref{sec:full-alg}. The constant $27$ comes from the geometric decrease of the $X_d$-dependent part of $H_d$. It is smaller than a bound that charges every logical \CNOT layer the worst-case full-grid Hall RCR depth $9L$. The compiled circuit uses $L^2=N$ data qubits and no ancillas, mid-circuit measurements, or classical feedforward. It therefore matches the original in these resources. Its $O(N\sqrt{N}\log^2 N)$ entangling-gate count is dominated by the Hall RCRs at each level. Our construction in Section~\ref{sec:full-alg} uses $O(N\sqrt{N})$ entangling gates.

\section{Hilbert-Curve Disjointness for BK\texorpdfstring{$\to$}{ to }JW Rounds}
\label{app:hilbert-disjoint}

This appendix proves the claim used in Section~\ref{sec:ferm-perm-bk-parity}. In each round of the BK$\to$JW conversion, the \CNOT intervals on the inorder line have pairwise-disjoint dyadic containers of comparable length. Here, $J$ denotes one such access interval. Under the Hilbert layout, these containers become pairwise-disjoint axis-aligned rectangles of diameter $O(\sqrt{|J|})$. For the fully occupied $N=2^k$ layout, we implement a remote \CNOT without ancillas by swapping one endpoint along a shortest path, applying the \CNOT, and reversing those swaps. Because the rectangles are disjoint, these routes run in parallel. Summing their depths gives a total conversion depth of $O(\sqrt{N})$.

We consider the standard BK trees for $N=2^k$ and $N=2^k-1$, where $k\ge2$ is even. We index the cells of the order-$k/2$ square Hilbert traversal $H_k$ by $\{0,1,\dots,2^k-1\}$. For $N=2^k$, every cell is occupied. For the perfect-tree case $N=2^k-1$, qubit~$i$ occupies $H_k(i)$ and the final cell is vacant. The first case gives the $L\times L$ grid used in the main result, with $L=2^{k/2}$ and $N=L^2$. The second includes the $N=63$ example in Figure~\ref{fig:hilbert-bk-jw}.

\subsection{Dyadic intervals and the Hilbert layout}
\label{app:dyadic}

A \emph{dyadic interval of order $q$} is an interval
\[
[a2^q,\,(a+1)2^q-1]\subseteq [0,2^k-1].
\]
It has length $2^q$. We use two facts about these intervals: aligned translations preserve them, and the Hilbert layout maps them to rectangles.

\begin{lemma}[Translation invariance]
\label{lem:translate-dyadic}
Let $D=[a2^q,(a+1)2^q-1]$ be a dyadic interval of order $q$, and let $t$ be a nonnegative multiple of $2^q$. Then
\[
t+D\;=\;[t+a2^q,\,t+(a+1)2^q-1]
\]
is a dyadic interval of order $q$.
\end{lemma}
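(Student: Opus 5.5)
The claim is elementary, so the plan is a direct computation from the definition of a dyadic interval of order $q$. Write $t=b2^q$ for some integer $b\ge 0$; this is possible because $t$ is a nonnegative multiple of $2^q$. Then translate the two endpoints:
\[
t+a2^q=(a+b)2^q,\qquad t+(a+1)2^q-1=(a+b+1)2^q-1 .
\]
Hence $t+D=[a'2^q,(a'+1)2^q-1]$ with $a'=a+b$, which has exactly the form required of a dyadic interval of order $q$. Its length is still $2^q$, since translation preserves length.

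The only remaining point is the ambient constraint that a dyadic interval must lie inside $[0,2^k-1]$. The statement implicitly applies only when the translate stays in range, that is, when $(a'+1)2^q\le 2^k$. I would add this as a hypothesis, or note that it holds in every later application, where $t$ is an offset of a subtree inside a dyadic container.

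Given that hypothesis, $a'$ is an integer with $0\le a'<2^{k-q}$, and the lemma follows. No step is a real obstacle; the only care needed is the range check above. In the later uses I would verify it by noting that the translated interval sits inside a larger dyadic container of order at least $q$ whose start is aligned to $2^q$.
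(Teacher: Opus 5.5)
Your proof is correct and is the same as the paper's: write $t=b2^q$ and read off $t+D=[(a+b)2^q,(a+b+1)2^q-1]$. Your remark about the ambient range $[0,2^k-1]$ is a fair point. The paper's one-line proof leaves it implicit, and it does hold in each later use, since the translated containers land inside $[2^{d-1},2^d-1]$ or inside a block $B_i$.
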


\begin{proof}
Writing $t=b\cdot 2^q$ gives $t+D=[(a+b)2^q,\,(a+b+1)2^q-1]$.
\end{proof}

\begin{proposition}[Hilbert dyadic-rectangle property]
\label{prop:hilbert-dyadic}
Under $H_k$, every dyadic interval of order $q$ maps to an axis-aligned rectangle of area $2^q$. Its side lengths are $2^{\lfloor q/2\rfloor}$ and $2^{\lceil q/2\rceil}$. Disjoint dyadic intervals map to non-overlapping rectangles.
\end{proposition}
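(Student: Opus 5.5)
We argue by induction on the recursive structure of the Hilbert curve $H_k$, using that the order-$k/2$ traversal of the $2^{k/2}\times2^{k/2}$ grid is a concatenation of four order-$(k/2-1)$ traversals, one per quadrant. Each sub-traversal is an isometry (rotation or reflection) of the standard one, translated to its quadrant, and the four quadrants are visited in the index blocks $[j4^{k/2-1},(j+1)4^{k/2-1}-1]$ for $j=0,1,2,3$. From this we read off the key structural fact. For every even order $q=2s$, the dyadic interval $[a4^s,(a+1)4^s-1]$ maps onto an aligned $2^s\times2^s$ sub-square of the grid. We prove this by induction on $k/2-s$: a dyadic interval of order $2s$ lies inside exactly one quadrant block, where it is a dyadic interval of the same order within that block. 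The claim then follows from the inductive hypothesis for the smaller curve, together with the fact that isometries and translations by multiples of the quadrant side send aligned squares to aligned squares. Lemma~\ref{lem:translate-dyadic} handles the index shift when we pass from the block back to absolute indices.

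For odd order $q=2s+1$, the dyadic interval $[a2^{q},(a+1)2^q-1]$ is the union of two consecutive even-order dyadic intervals of order $2s$. Both lie in the same order-$(2s+2)$ dyadic interval, which by the even case is a $2^{s+1}\times2^{s+1}$ aligned square. Inside that square the curve visits its four $2^s\times2^s$ subsquares in a fixed Hilbert order: four cells around a $2\times2$ block, consecutive ones edge-adjacent, with the first two adjacent and the last two adjacent. Our interval is either the first half or the second half of that order-$(2s+2)$ interval, so it is the union of two edge-adjacent aligned subsquares. That union is a $2^s\times2^{s+1}$ or $2^{s+1}\times2^s$ rectangle, with side lengths $2^{\lfloor q/2\rfloor}$ and $2^{\lceil q/2\rceil}$ and area $2^q$, as claimed.

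Disjointness needs no geometry. $H_k$ is a bijection from indices to cells, so disjoint index sets have disjoint images; in particular the rectangles have no common cell and do not overlap.

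The step we expect to be the main obstacle is making rigorous the claim that the four base-case subsquares are visited in a cyclic adjacent order, with the first two and the last two forming rectangles. We would settle it by stating explicitly the standard recursive definition of $H_k$: the $2\times2$ U-shaped base pattern, the quadrant transformations (transpose in quadrant $0$, identity in quadrants $1$ and $2$, anti-transpose in quadrant $3$), and the fact that these maps preserve the U-shape's adjacency. This consecutive-subsquare adjacency is exactly what the induction propagates.
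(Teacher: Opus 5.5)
Your proof is correct and takes essentially the same route as the paper. Even-order dyadic intervals are the index domains of single recursive calls, so they map to aligned squares. Odd-order intervals are the first or second half of a parent call, so they map to an axis-aligned half of the parent square. Disjointness follows from $H_k$ being a bijection. Your explicit induction on $k/2-s$ and the isometry argument for the U-shaped adjacency of consecutive quadrants just spell out what the paper asserts directly from the recursive construction.
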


\begin{proof}
The finite Hilbert layout $H_k$ fills a $2^{\lceil k/2\rceil}\times 2^{\lfloor k/2\rfloor}$ rectangle through four recursive $H_{k-2}$ calls on its four quarters~\cite{hilbert1935stetige, bader2012space}. Any two consecutive calls together cover an axis-aligned half of their parent square.

Let $J=[a2^q,(a+1)2^q-1]$. If $q\equiv k\pmod 2$, then $J$ is the domain of one recursive call. Thus, $H_k(J)$ is a subrectangle of area $2^q$. If $q\not\equiv k\pmod 2$, write $J=J_1\sqcup J_2$, where $J_1$ and $J_2$ are consecutive dyadic intervals of order $q-1$. Each $J_t$ is the domain of one call. These two calls are consecutive within the same parent call, whose area is $2^{q+1}$. Their images fill an axis-aligned half of that parent, so $H_k(J)$ is again an axis-aligned rectangle of area $2^q$. The stated side lengths follow in both cases. Because $H_k$ is a bijection on cells, it also preserves disjointness.
\end{proof}

\subsection{The local subproblem}
\label{app:local-pattern}

We first define the interval pattern for one round of \CNOTs.

\begin{definition}[Local-subproblem family $\mathcal{I}_{d,r}$]
\label{def:I-family}
For integers $d\ge 1$ and $1\le r\le d$, define the collection of intervals $\mathcal{I}_{d,r}$ in $[0,2^d-1]$ recursively by
\[
\mathcal{I}_{d,1}\;=\;\bigl\{[2^{d-1}-1,\,2^d-1]\bigr\},
\]
and, for $r\ge 2$,
\[
\mathcal{I}_{d,r}\;=\;\mathcal{I}_{d-1,r-1}\;\sqcup\;\bigl(2^{d-1}+\mathcal{I}_{d-1,r-1}\bigr),
\]
where $t+[a,b]:=[t+a,t+b]$ and $t+\mathcal{A}:=\{t+J:J\in\mathcal{A}\}$.
\end{definition}

The family $\mathcal{I}_{d,r}$ is the BK$\to$JW access pattern. To see this, consider a \emph{depth-$d$ local subproblem}. It consists of a balanced BK subtree on cells $0,\dots,2^d-2$ and an external parent at cell $2^d-1$. The subtree root is at the median cell $2^{d-1}-1$. Thus, sub-round $1$ applies one \CNOT to the endpoints $(2^{d-1}-1,\,2^d-1)$, as specified by $\mathcal{I}_{d,1}$.

After this \CNOT, two depth-$(d-1)$ local subproblems remain. The left child occupies cells $[0,2^{d-1}-2]$ and uses cell $2^{d-1}-1$ as its external parent. The right child occupies cells $[2^{d-1},2^d-2]$ and keeps the original external parent. Subtracting $2^{d-1}$ from the right child's cells makes the two child problems identical. Therefore, for $r\ge2$, sub-round $r$ consists of sub-round $r-1$ in the left child and its translation by $2^{d-1}$ in the right child. This is exactly the recursion that defines $\mathcal{I}_{d,r}$. Thus, $\mathcal{I}_{d,r}$ lists the inorder intervals of the \CNOTs in sub-round $r$.

For the standard $N=2^k$ BK tree, cell $2^k-1$ is the root above the perfect subtree on cells $0,\dots,2^k-2$. The full tree is therefore one depth-$k$ local subproblem. Its round-$r$ access pattern is $\mathcal{I}_{k,r}$ for $r=1,\dots,k$.

We next show that the intervals in $\mathcal{I}_{d,r}$ have pairwise-disjoint dyadic containers.

\begin{lemma}[Dyadic containers for $\mathcal{I}_{d,r}$]
\label{lem:local-dyadic}
For every $d\ge 1$ and $1\le r\le d$, there is an assignment $J\mapsto D(J)$. For each $J\in\mathcal{I}_{d,r}$, the interval $D(J)\subseteq[0,2^d-1]$ is dyadic and satisfies
\[
|J|\;\le\;|D(J)|\;\le\;2|J|.
\]
The family $\{D(J):J\in\mathcal{I}_{d,r}\}$ is pairwise disjoint.
\end{lemma}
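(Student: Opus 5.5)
Induction on $r$, following the recursion of Definition~\ref{def:I-family}, with the base case handled explicitly and the inductive step built from Lemma~\ref{lem:translate-dyadic}.

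\emph{Base case $r=1$ (any $d\ge1$).} Here $\mathcal{I}_{d,1}$ has the single interval $J=[2^{d-1}-1,2^d-1]$, of length $|J|=2^{d-1}+1$. We take $D(J)=[0,2^d-1]$. This is the order-$d$ dyadic interval. Its length $2^d$ satisfies $|J|\le 2^d\le 2|J|=2^d+2$, and containment $J\subseteq D(J)$ is immediate. Disjointness is vacuous. The point is that the interval straddles the midpoint by one cell, so the half-size container would miss the left endpoint. We therefore use the whole block, which the factor $2$ allows. This also covers $d=1$, where $J=[0,1]$.

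\emph{Inductive step $r\ge2$.} Suppose containers exist for $\mathcal{I}_{d-1,r-1}$, with all $D(J)\subseteq[0,2^{d-1}-1]$ dyadic, pairwise disjoint, and satisfying $|J|\le|D(J)|\le2|J|$. For $J$ in the left copy, keep $D(J)$ unchanged; it is dyadic in $[0,2^d-1]$ as well. For the translated copy $2^{d-1}+J$, set $D(2^{d-1}+J)=2^{d-1}+D(J)$. Every container here has order $q\le d-1$, so $2^{d-1}$ is a multiple of $2^q$, and Lemma~\ref{lem:translate-dyadic} says the translate is again dyadic of the same order. Translation preserves both containment and length, so the length bounds carry over. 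Within each copy, disjointness is inherited. Across copies, the left containers lie in $[0,2^{d-1}-1]$ and the right ones in $[2^{d-1},2^d-1]$, so they cannot meet. This gives the claim for $(d,r)$. Since the recursion lowers $d$ and $r$ together, induction on $r$ (uniformly in $d\ge r$) reaches every pair.

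\emph{Main obstacle.} The only delicate point is that a container must actually contain its interval; the lemma's intended use, routing inside $D(J)$, needs $J\subseteq D(J)$. I would strengthen the induction hypothesis to include containment explicitly, which the base case and translation both preserve. I would also confirm that no container leaves its half-block. This holds because the inductive hypothesis places all containers inside $[0,2^{d-1}-1]$. The base-case bound $|D(J)|=2^d\le 2|J|$ relies on $J$ having length $2^{d-1}+1$ rather than $2^{d-1}$, so I would check that arithmetic explicitly.
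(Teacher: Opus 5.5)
Your proof is correct and follows the paper's argument essentially step for step. It uses induction on $r$, takes the whole block $[0,2^d-1]$ as the container in the base case (relying on $|J|=2^{d-1}+1$), applies Lemma~\ref{lem:translate-dyadic} to translate the inductive containers into the right half, and gets disjointness across the two halves. Your explicit tracking of the containment $J\subseteq D(J)$ is a small but worthwhile strengthening, since the paper's proof uses it only implicitly when it calls $D(J)$ a container.
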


\begin{proof}
We induct on $r$. For $r=1$, the only interval is $J=[2^{d-1}-1,\,2^d-1]$. Choose $D(J)=[0,2^d-1]$. Then $|J|=2^{d-1}+1$ and $|D(J)|=2^d$, so the bounds hold.

For $r\ge 2$, write
$\mathcal{I}_{d,r}=\mathcal{I}_{d-1,r-1}\sqcup(2^{d-1}+\mathcal{I}_{d-1,r-1})$.
By induction, each $J\in\mathcal{I}_{d-1,r-1}$ has a dyadic container $D(J)\subseteq[0,2^{d-1}-1]$ with $|J|\le|D(J)|\le 2|J|$. These containers are pairwise disjoint. Each $|D(J)|$ is a power of two no larger than $2^{d-1}$, so $2^{d-1}$ is a multiple of $|D(J)|$. Lemma~\ref{lem:translate-dyadic} shows that $2^{d-1}+D(J)$ is a dyadic interval of the same order. It is a container for $2^{d-1}+J$ with the same length bounds. The original containers lie in $[0,2^{d-1}-1]$, while the translated containers lie in $[2^{d-1},2^d-1]$. The full family is therefore pairwise disjoint.
\end{proof}

\subsection{Perfect-tree rounds and depth bound}
\label{app:global-rounds}

For the $N=2^k-1$ perfect-tree case, the full BK$\to$JW conversion processes the right-spine vertices one at a time. Let $v_0,v_1,\dots,v_{k-1}$ be these vertices from top to bottom. For each $i=0,1,\dots,k-2$, vertex $v_i$ and its balanced left subtree of depth $k-1-i$ occupy the contiguous block
\[
B_i\;=\;[\beta_i,\,\beta_i+L_i-1],\qquad L_i\;=\;2^{k-1-i}.
\]
The vertex $v_i$ is in the rightmost cell, $\beta_i+L_i-1$. This top-down order gives the offsets
\[
\beta_0\;=\;0,\qquad \beta_i\;=\;\sum_{j<i}L_j\;=\;2^k-2^{k-i}\quad(i\ge 1).
\]
Thus, the blocks $B_i$ are pairwise disjoint and $L_i\mid\beta_i$. In fact, $\beta_i/L_i=2^{i+1}-2$ for $i\ge1$, and $\beta_0=0$. Each block $B_i$ is a depth-$(k-1-i)$ local subproblem as defined in Section~\ref{app:local-pattern}. Here, $v_i$ is the external parent and its left subtree is the balanced subtree. The block is active in round $r$ exactly when $r\le k-1-i$, or equivalently, $i\le k-1-r$. Therefore, the round-$r$ intervals are
\[
\mathcal{G}_{k,r}\;=\;\bigsqcup_{i=0}^{k-1-r}\bigl(\beta_i+\mathcal{I}_{k-1-i,\,r}\bigr).
\]

\begin{theorem}[Per-round disjoint dyadic rectangles]
\label{thm:round-dyadic}
For every round $r\in\{1,\dots,k-1\}$, each $J\in\mathcal{G}_{k,r}$ has a dyadic container $D(J)\subseteq[0,2^k-1]$ with $|J|\le|D(J)|\le 2|J|$. The family $\{D(J):J\in\mathcal{G}_{k,r}\}$ is pairwise disjoint. Under the Hilbert layout, each round-$r$ \CNOT is therefore confined to an axis-aligned rectangle of diameter $O\!\left(\sqrt{|J|}\right)$. These rectangles are pairwise cell-disjoint.
\end{theorem}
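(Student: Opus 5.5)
The plan is to reduce the global round to the local subproblems block by block, using Lemma~\ref{lem:local-dyadic} inside each block and Lemma~\ref{lem:translate-dyadic} to place the containers at their true positions. Fix a round $r$ and an index $i\le k-1-r$, and put $d_i=k-1-i$, so $r\le d_i$ and Lemma~\ref{lem:local-dyadic} applies to $\mathcal{I}_{d_i,r}$. That lemma gives, for each $J'\in\mathcal{I}_{d_i,r}$, a dyadic container $D(J')\subseteq[0,2^{d_i}-1]$ with $|J'|\le|D(J')|\le2|J'|$, and these containers are pairwise disjoint.

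For $J=\beta_i+J'$ we take $D(J)=\beta_i+D(J')$. The order $q$ of $D(J')$ satisfies $2^q\le 2^{d_i}=L_i$. Since $L_i\mid\beta_i$ (we have $\beta_0=0$, and $\beta_i/L_i=2^{i+1}-2$ for $i\ge1$), $\beta_i$ is a multiple of $2^q$. Lemma~\ref{lem:translate-dyadic} then shows $D(J)$ is dyadic of the same order. Translation preserves lengths, so the bounds $|J|\le|D(J)|\le2|J|$ carry over. It also preserves disjointness within a block.

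For disjointness across blocks, each translated container lies in $\beta_i+[0,L_i-1]=B_i$, and the blocks $B_i$ are pairwise disjoint. Hence the whole family $\{D(J):J\in\mathcal{G}_{k,r}\}$ is pairwise disjoint and contained in $[0,2^k-1]$.

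For the geometric conclusion, Proposition~\ref{prop:hilbert-dyadic} applies to each $D(J)$, which has order $q$ with $2^q\le2|J|$. It maps $D(J)$ to an axis-aligned rectangle with sides $2^{\lfloor q/2\rfloor}$ and $2^{\lceil q/2\rceil}$. The diameter is therefore at most $2\cdot2^{\lceil q/2\rceil}\le2\sqrt{2\cdot2^q}\le4\sqrt{|J|}$, which is $O(\sqrt{|J|})$. Both CNOT endpoints lie in $J\subseteq D(J)$, so the route stays inside a shortest path within this rectangle. Rectangles are convex in the grid metric, so such a path exists. Disjoint dyadic intervals map to cell-disjoint rectangles by the same proposition.

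I expect the only delicate point to be the bookkeeping of the definitions rather than any estimate. One part is confirming $L_i\mid\beta_i$, including $i=0$. The other is checking that Lemma~\ref{lem:local-dyadic} applies with $d_i\ge r\ge1$. This requires $i\le k-1-r$, which is exactly the activity condition in $\mathcal{G}_{k,r}$. The case $d_i\ge1$ forces $i\le k-2$, consistent with the blocks defined for $i=0,\dots,k-2$.
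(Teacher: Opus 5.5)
Your proposal is correct and follows the paper's proof essentially step for step. You apply the local-container lemma to each active block $B_i$, translate by $\beta_i$ using $L_i\mid\beta_i$ and translation invariance, get disjointness within blocks from the lemma and across blocks from disjointness of the $B_i$, and finish with the Hilbert dyadic-rectangle proposition; your explicit diameter bound $4\sqrt{|J|}$ and the remark that a shortest path between the CNOT endpoints stays inside the rectangle are small, correct refinements of the paper's $O(\sqrt{|J|})$ statement.
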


\begin{proof}
Fix an active block $B_i$. By Lemma~\ref{lem:local-dyadic}, each $J'\in\mathcal{I}_{k-1-i,r}$ has a dyadic container inside $[0,L_i-1]$ of length at most $L_i$. The offset $\beta_i$ is a multiple of $L_i$ and hence of every power of two up to $L_i$. Lemma~\ref{lem:translate-dyadic} then shows that each translated container is also dyadic. This gives containers for $\beta_i+\mathcal{I}_{k-1-i,r}$ inside $B_i$. Lemma~\ref{lem:local-dyadic} gives pairwise disjointness within each $B_i$. Containers in different blocks are also disjoint because the blocks $B_i$ are disjoint. Finally, Proposition~\ref{prop:hilbert-dyadic} maps each container to a rectangle whose side lengths are $O(\sqrt{|D(J)|})$. Since $|D(J)|\le2|J|$, its diameter is $O\!\left(\sqrt{|J|}\right)$.
\end{proof}

For example, let $k=3$ and $N=7$. The top-level blocks are $B_0=[0,3]$ and $B_1=[4,5]$. Round $1$ applies the \CNOTs in $\mathcal{G}_{3,1}=\{[1,3],[4,5]\}$. Their dyadic containers are $[0,3]$ and $[4,5]$. Round $2$ applies the \CNOTs in $\mathcal{G}_{3,2}=\{[0,1],[2,3]\}$, whose intervals are already dyadic.

We obtain the depth bound by summing the rectangle diameters over all rounds. The recursion in Definition~\ref{def:I-family} halves the local problem size at each step, which gives $\max_{J\in\mathcal{I}_{d,r}}|J|=2^{d-r}+1$. The deepest active block, $B_0$, has the largest intervals. Therefore,
\[
\max_{J\in\mathcal{G}_{k,r}}|J|\;=\;2^{k-1-r}+1.
\]

\begin{corollary}[Hilbert-layout conversion depth]
\label{cor:bk-jw-depth}
For even $k$ and $N=2^k$, the total BK$\to$JW conversion depth under the square Hilbert layout is $O(\sqrt{N})$. The construction occupies an $L\times L$ grid with $L=2^{k/2}$ and uses no ancillas.
\end{corollary}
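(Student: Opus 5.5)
The plan is to bound the depth of each round separately and then sum a geometric series.

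\emph{Occupancy.} For even $k$ and $N=2^k$, the Hilbert layout $H_k$ fills the $L\times L$ grid with $L=2^{k/2}$, and every cell is occupied. The tree is one depth-$k$ local subproblem. Its round-$r$ access pattern is $\mathcal{I}_{k,r}$ for $r=1,\dots,k$.

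\emph{Containers.} Lemma~\ref{lem:local-dyadic} gives, for each $J\in\mathcal{I}_{k,r}$, a dyadic container $D(J)$ with $|J|\le|D(J)|\le 2|J|$. These containers are pairwise disjoint. Proposition~\ref{prop:hilbert-dyadic} maps each $D(J)$ to an axis-aligned rectangle with sides $2^{\lfloor q/2\rfloor}$ and $2^{\lceil q/2\rceil}$, where $|D(J)|=2^q$. Distinct rectangles are cell-disjoint. This is the analogue of Theorem~\ref{thm:round-dyadic}, with the single block $B=[0,2^k-1]$.

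\emph{One remote \CNOT.} Each \CNOT of round $r$ has both endpoints in its rectangle, since both are endpoints of $J\subseteq D(J)$. I will implement it as follows.
\begin{enumerate}
\item Swap one endpoint along a shortest monotone staircase path inside the rectangle until it is adjacent to the other endpoint. A rectangle is convex in the grid metric, so such a path exists.
\item Apply the \CNOT.
\item Reverse the swaps.
\end{enumerate}
No ancilla is used, because every intermediate cell holds data that the reversed swaps restore. The path has length at most the rectangle's Manhattan diameter, $2^{\lfloor q/2\rfloor}+2^{\lceil q/2\rceil}-2=O(\sqrt{|J|})$. Each \SWAP costs CNOT depth $3$, so one rotation costs depth $O(\sqrt{|J|})$.

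\emph{One round.} Within a round, distinct \CNOTs use disjoint rectangles, so their swap chains touch disjoint qubits and run concurrently. The depth of round $r$ is therefore $O(\sqrt{\max_{J\in\mathcal{I}_{k,r}}|J|})$. From the recursion of Definition~\ref{def:I-family}, $\max_{J\in\mathcal{I}_{d,r}}|J|=2^{d-r}+1$, so round $r$ has depth $O(\sqrt{2^{k-r}})$.

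\emph{Summing rounds.} Summing over $r=1,\dots,k$ gives
\[
\sum_{r=1}^{k}O\!\bigl(2^{(k-r)/2}\bigr)=O\!\bigl(2^{k/2}\bigr)=O(L)=O(\sqrt{N}),
\]
since this is a geometric series with ratio $2^{-1/2}$.

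The main obstacle is checking that the tree-rotation sequence of~\cite{constantinides2025lowdepthfermionroutingancillas} indeed applies, in round $r$, exactly one \CNOT per interval of $\mathcal{I}_{k,r}$ on that interval's two endpoints. The local-subproblem discussion preceding Lemma~\ref{lem:local-dyadic} supplies this. A second point is that the reversed swap path leaves intervening qubits unchanged even when the target is relocated rather than the control. Both follow because swaps are permutations that are undone exactly, and the \CNOT acts only on the two relocated endpoint states.
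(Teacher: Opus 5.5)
Your proposal is correct and follows the paper's proof step for step. Round $r$ is $\mathcal{I}_{k,r}$. Lemma~\ref{lem:local-dyadic} and Proposition~\ref{prop:hilbert-dyadic} give cell-disjoint rectangles of diameter $O(\sqrt{|J|})$. Route--\CNOT--unroute then costs $O(2^{(k-r)/2})$ in round $r$, and the geometric series sums to $O(\sqrt{N})$. Your extra details — the monotone path staying inside the rectangle and the explicit \SWAP cost — only spell out what the paper's terse proof leaves implicit.
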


\begin{proof}
For $N=2^k$, round~$r$ is $\mathcal{I}_{k,r}$. Lemma~\ref{lem:local-dyadic} gives pairwise-disjoint dyadic containers, and the recursion gives a maximum interval length of $2^{k-r}+1$. The route--\CNOT--unroute construction therefore has depth $O(2^{(k-r)/2})$ in round $r$. Summing over the rounds gives
\[
\sum_{r=1}^{k}O\!\left(2^{(k-r)/2}\right)
=O\!\left(2^{k/2}\right)=O(\sqrt{N}).\qedhere
\]
\end{proof}

\begin{corollary}[Parity conversion depth]
\label{cor:parity-jw-depth}
For $N=2^k$ with even $k$, Parity$\leftrightarrow$JW conversion has depth $O(\sqrt{N})$ on the square Hilbert-ordered nearest-neighbor grid and uses no ancillas.
\end{corollary}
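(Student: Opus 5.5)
The plan is to reduce Parity$\leftrightarrow$JW to the BK$\to$JW machinery already established, exploiting the fact that the Parity tree is the mirror image of the JW tree (left spine versus right spine). Following Section~\ref{sec:ferm-perm-bk-parity}, I will factor the conversion as Parity$\to$BK$\to$JW. The second factor already costs $O(\sqrt N)$ by Corollary~\ref{cor:bk-jw-depth}, and the inverse direction follows by reversing the circuit, since every round consists of self-inverse CNOTs. It remains to treat Parity$\to$BK and the change of placement between the two halves.

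\textbf{Reflection step.} Under the left--right mirror of ordered binary trees, the inorder order is reversed. The right spine (JW) maps to the left spine (Parity), and the balanced BK tree maps to a balanced tree of the same shape. Tree rotations commute with mirroring: a rotation at a parent--child pair is sent to the mirrored rotation, which is again a single CNOT on the same two qubits. Taking the BK$\to$JW rotation sequence, mirroring it, and reading it in reverse order therefore gives a Parity$\to$BK$'$ sequence. Here BK$'$ is the mirrored balanced tree. For $N=2^k$ it coincides with BK up to the position of the top root, which is absorbed by the one-cell offset discussed below. The rounds appear in reverse order, so interval lengths grow geometrically instead of decaying, as noted in Figure~\ref{fig:parity-bk-jw}.

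\textbf{Layout step.} The mirrored intervals are $\{2^k-1-j\}$ images of the intervals $\mathcal I_{k,r}$. A reflected dyadic interval $[2^k-(a+1)2^q,\,2^k-1-a2^q]$ is again dyadic of order $q$. Therefore Lemma~\ref{lem:local-dyadic} transfers verbatim: each round has pairwise-disjoint dyadic containers of length between $|J|$ and $2|J|$. I will place inorder index $i$ at cell $H_k(2^k-1-i)$; this is the reflected Hilbert placement. By Proposition~\ref{prop:hilbert-dyadic}, each container maps to a disjoint rectangle of diameter $O(\sqrt{|J|})$, so the route--CNOT--unroute construction runs each round in depth $O(2^{(k-r)/2})$. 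Summing the geometric series gives $O(\sqrt N)$ for Parity$\to$BK. Afterwards, a single qubit permutation from the reflected placement to the standard Hilbert placement is applied via Hall routing, at depth $O(L)$. This step is an ordinary relabeling of qubit positions, not a fermionic one, because the encoding is carried by the tree and not by the physical cell. Concatenating it with Corollary~\ref{cor:bk-jw-depth} gives total depth $O(\sqrt N)$, with no ancillas used anywhere.

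\textbf{Main obstacle.} I expect the hard part to be the bookkeeping that shows the mirrored BK tree for $N=2^k$ (root at the far end) matches the tree assumed in Corollary~\ref{cor:bk-jw-depth}, and that the extra top-level rotation handling the spine root stays within $O(\sqrt N)$. My first attempt would be to treat that root rotation as one additional full-grid routed CNOT, which costs $O(L)$ and does not change the bound. If the shapes genuinely differ, I would instead re-derive the access families directly as reflections of $\mathcal I_{k,r}$ and reuse the inductive container argument unchanged.
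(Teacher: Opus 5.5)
\textbf{There is a genuine gap where your two halves are joined.} You reflect the full $N=2^k$ schedule $\mathcal I_{k,r}$ under $i\mapsto 2^k-1-i$. That reflection sends the BK tree (root $N-1$, left perfect subtree on $0,\dots,N-2$) to a tree BK$'$ with root at inorder index $0$ and a right perfect subtree on $1,\dots,N-1$. So your reversed mirrored schedule carries Parity to BK$'$, not to the BK tree that Corollary~\ref{cor:bk-jw-depth} starts from.

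These two trees do not differ only by ``the position of the top root.'' Qubit $0$ is the root of BK$'$ but a leaf at depth $k$ in BK. A single rotation changes any node's depth by at most one, so at least $k=\log_2 N$ rotations separate the two trees. One extra routed \CNOT cannot close that. A change of physical placement cannot close it either, because the encoding is fixed by the tree on inorder indices, not by which cell holds which qubit. Your layout also contains no one-cell offset; it is the full reflection $H_k(2^k-1-i)$. Your fallback of re-deriving the reflected families of $\mathcal I_{k,r}$ just reproduces the same endpoint BK$'$.

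\textbf{The missing idea is that Parity and BK share the root $N-1$, so only the $(2^k-1)$-node subtree on $0,\dots,N-2$ needs to be mirrored.} On that subtree BK is perfect and Parity is a left spine. BK$\to$Parity there is therefore the mirror of the perfect-tree schedule $\mathcal G_{k,r}$ of Theorem~\ref{thm:round-dyadic}, not of $\mathcal I_{k,r}$.

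This is what the paper does. It places qubit $i$ at $H_k(\rho(i))$ with $\rho(i)=N-2-i$ for $i<N-1$ and $\rho(N-1)=N-1$. Theorem~\ref{thm:round-dyadic} then supplies the disjoint Hilbert rectangles, and the largest round-$r$ interval has length $2^{k-1-r}+1$. The geometric sum is $O(\sqrt N)$. Your Hall-routing step and Corollary~\ref{cor:bk-jw-depth} then finish exactly as you planned.

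Alternatively, you can keep your Parity$\to$BK$'$ half and continue with BK$'\to$JW directly. Qubit $0$ is already the JW root, so only its right subtree on $1,\dots,N-1$ has to become a right spine. That is a copy of $\mathcal G_{k,r}$ shifted by one, run after Hall-routing to the placement $i\mapsto H_k(i-1)$ for $i\ge1$ and $0\mapsto H_k(N-1)$. This route is the mirror image of the paper's.

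The rest of your proposal is sound:
\begin{itemize}
\item reflected dyadic intervals remain dyadic;
\item reversing the round order does not change the depth;
\item the Hall step is a plain qubit permutation;
\item the inverse direction follows by reversing self-inverse CNOTs.
\end{itemize}
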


\begin{proof}
In the standard zero-based $N$-mode Parity and Fenwick-BK trees, qubit $N{-}1$ is the common root. The other qubits, $0,\ldots,N{-}2$, form a left spine in the Parity tree and a perfect balanced subtree in the BK tree. On this subtree, BK$\to$Parity is the left--right mirror of the perfect-tree BK$\to$JW schedule $\mathcal{G}_{k,r}$. Because \CNOT is self-inverse, Parity$\to$BK uses the same mirrored, labeled \CNOTs in reverse round order.

Define $\rho(i)=N{-}2{-}i$ for $i<N{-}1$ and $\rho(N{-}1)=N{-}1$. Initially place Parity qubit~$i$ at cell $H_k(\rho(i))$. Reflection maps each mirrored access interval to the corresponding interval in $\mathcal{G}_{k,r}$. Theorem~\ref{thm:round-dyadic} therefore gives the same disjoint Hilbert rectangles. Reversing the round order does not change their route--gate--unroute depth. The largest interval in perfect-tree round~$r$ has length $2^{k-1-r}+1$. Thus, the geometric sum of the round depths is $O(2^{k/2})=O(\sqrt{N})$ for Parity$\to$BK.

Next, an ordinary Hall-grid permutation sends qubit~$i$ from $H_k(\rho(i))$ to $H_k(i)$ in depth $O(\sqrt{N})$ without ancillas~\cite{hall-routing}. Corollary~\ref{cor:bk-jw-depth} then converts BK to JW in another $O(\sqrt{N})$ depth. Reversing these steps gives JW$\to$Parity with the same bound.
\end{proof}

Theorem~\ref{thm:round-dyadic} gives the same disjoint-container structure for the $N=2^k-1$ perfect tree shown in Figure~\ref{fig:hilbert-bk-jw}. That layout leaves one grid cell vacant, so we use it only to illustrate the structure. The zero-ancilla square-grid extension in the main text uses Corollaries~\ref{cor:bk-jw-depth} and~\ref{cor:parity-jw-depth} with $N=2^k$.

\bibliography{bib/references}

\begin{thebibliography}{60}%
\makeatletter
\providecommand \@ifxundefined [1]{%
 \@ifx{#1\undefined}
}%
\providecommand \@ifnum [1]{%
 \ifnum #1\expandafter \@firstoftwo
 \else \expandafter \@secondoftwo
 \fi
}%
\providecommand \@ifx [1]{%
 \ifx #1\expandafter \@firstoftwo
 \else \expandafter \@secondoftwo
 \fi
}%
\providecommand \natexlab [1]{#1}%
\providecommand \enquote  [1]{``#1''}%
\providecommand \bibnamefont  [1]{#1}%
\providecommand \bibfnamefont [1]{#1}%
\providecommand \citenamefont [1]{#1}%
\providecommand \href@noop [0]{\@secondoftwo}%
\providecommand \href [0]{\begingroup \@sanitize@url \@href}%
\providecommand \@href[1]{\@@startlink{#1}\@@href}%
\providecommand \@@href[1]{\endgroup#1\@@endlink}%
\providecommand \@sanitize@url [0]{\catcode `\\12\catcode `\$12\catcode `\&12\catcode `\#12\catcode `\^12\catcode `\_12\catcode `\%12\relax}%
\providecommand \@@startlink[1]{}%
\providecommand \@@endlink[0]{}%
\providecommand \url  [0]{\begingroup\@sanitize@url \@url }%
\providecommand \@url [1]{\endgroup\@href {#1}{\urlprefix }}%
\providecommand \urlprefix  [0]{URL }%
\providecommand \Eprint [0]{\href }%
\providecommand \doibase [0]{https://doi.org/}%
\providecommand \selectlanguage [0]{\@gobble}%
\providecommand \bibinfo  [0]{\@secondoftwo}%
\providecommand \bibfield  [0]{\@secondoftwo}%
\providecommand \translation [1]{[#1]}%
\providecommand \BibitemOpen [0]{}%
\providecommand \bibitemStop [0]{}%
\providecommand \bibitemNoStop [0]{.\EOS\space}%
\providecommand \EOS [0]{\spacefactor3000\relax}%
\providecommand \BibitemShut  [1]{\csname bibitem#1\endcsname}%
\let\auto@bib@innerbib\@empty
\bibitem [{\citenamefont {Abrams}\ and\ \citenamefont {Lloyd}(1997)}]{Abrams_1997}%
  \BibitemOpen
  \bibfield  {author} {\bibinfo {author} {\bibfnamefont {D.~S.}\ \bibnamefont {Abrams}}\ and\ \bibinfo {author} {\bibfnamefont {S.}~\bibnamefont {Lloyd}},\ }\bibfield  {title} {\bibinfo {title} {Simulation of many-body fermi systems on a universal quantum computer},\ }\href {https://doi.org/10.1103/physrevlett.79.2586} {\bibfield  {journal} {\bibinfo  {journal} {Physical Review Letters}\ }\textbf {\bibinfo {volume} {79}},\ \bibinfo {pages} {2586–2589} (\bibinfo {year} {1997})}\BibitemShut {NoStop}%
\bibitem [{\citenamefont {Ortiz}\ \emph {et~al.}(2001)\citenamefont {Ortiz}, \citenamefont {Gubernatis}, \citenamefont {Knill},\ and\ \citenamefont {Laflamme}}]{Ortiz_2001}%
  \BibitemOpen
  \bibfield  {author} {\bibinfo {author} {\bibfnamefont {G.}~\bibnamefont {Ortiz}}, \bibinfo {author} {\bibfnamefont {J.~E.}\ \bibnamefont {Gubernatis}}, \bibinfo {author} {\bibfnamefont {E.}~\bibnamefont {Knill}},\ and\ \bibinfo {author} {\bibfnamefont {R.}~\bibnamefont {Laflamme}},\ }\bibfield  {title} {\bibinfo {title} {Quantum algorithms for fermionic simulations},\ }\bibfield  {journal} {\bibinfo  {journal} {Physical Review A}\ }\textbf {\bibinfo {volume} {64}},\ \href {https://doi.org/10.1103/physreva.64.022319} {10.1103/physreva.64.022319} (\bibinfo {year} {2001})\BibitemShut {NoStop}%
\bibitem [{\citenamefont {Lanyon}\ \emph {et~al.}(2010)\citenamefont {Lanyon}, \citenamefont {Whitfield}, \citenamefont {Gillett}, \citenamefont {Goggin}, \citenamefont {Almeida}, \citenamefont {Kassal}, \citenamefont {Biamonte}, \citenamefont {Mohseni}, \citenamefont {Powell}, \citenamefont {Barbieri}, \citenamefont {Aspuru-Guzik},\ and\ \citenamefont {White}}]{Lanyon_2010}%
  \BibitemOpen
  \bibfield  {author} {\bibinfo {author} {\bibfnamefont {B.~P.}\ \bibnamefont {Lanyon}}, \bibinfo {author} {\bibfnamefont {J.~D.}\ \bibnamefont {Whitfield}}, \bibinfo {author} {\bibfnamefont {G.~G.}\ \bibnamefont {Gillett}}, \bibinfo {author} {\bibfnamefont {M.~E.}\ \bibnamefont {Goggin}}, \bibinfo {author} {\bibfnamefont {M.~P.}\ \bibnamefont {Almeida}}, \bibinfo {author} {\bibfnamefont {I.}~\bibnamefont {Kassal}}, \bibinfo {author} {\bibfnamefont {J.~D.}\ \bibnamefont {Biamonte}}, \bibinfo {author} {\bibfnamefont {M.}~\bibnamefont {Mohseni}}, \bibinfo {author} {\bibfnamefont {B.~J.}\ \bibnamefont {Powell}}, \bibinfo {author} {\bibfnamefont {M.}~\bibnamefont {Barbieri}}, \bibinfo {author} {\bibfnamefont {A.}~\bibnamefont {Aspuru-Guzik}},\ and\ \bibinfo {author} {\bibfnamefont {A.~G.}\ \bibnamefont {White}},\ }\bibfield  {title} {\bibinfo {title} {Towards quantum chemistry on a quantum computer},\ }\href {https://doi.org/10.1038/nchem.483} {\bibfield  {journal} {\bibinfo  {journal} {Nature Chemistry}\ }\textbf {\bibinfo {volume} {2}},\ \bibinfo {pages} {106–111} (\bibinfo {year} {2010})}\BibitemShut {NoStop}%
\bibitem [{\citenamefont {Wecker}\ \emph {et~al.}(2014)\citenamefont {Wecker}, \citenamefont {Bauer}, \citenamefont {Clark}, \citenamefont {Hastings},\ and\ \citenamefont {Troyer}}]{Wecker_2014}%
  \BibitemOpen
  \bibfield  {author} {\bibinfo {author} {\bibfnamefont {D.}~\bibnamefont {Wecker}}, \bibinfo {author} {\bibfnamefont {B.}~\bibnamefont {Bauer}}, \bibinfo {author} {\bibfnamefont {B.~K.}\ \bibnamefont {Clark}}, \bibinfo {author} {\bibfnamefont {M.~B.}\ \bibnamefont {Hastings}},\ and\ \bibinfo {author} {\bibfnamefont {M.}~\bibnamefont {Troyer}},\ }\bibfield  {title} {\bibinfo {title} {Gate-count estimates for performing quantum chemistry on small quantum computers},\ }\bibfield  {journal} {\bibinfo  {journal} {Physical Review A}\ }\textbf {\bibinfo {volume} {90}},\ \href {https://doi.org/10.1103/physreva.90.022305} {10.1103/physreva.90.022305} (\bibinfo {year} {2014})\BibitemShut {NoStop}%
\bibitem [{\citenamefont {Reiher}\ \emph {et~al.}(2017)\citenamefont {Reiher}, \citenamefont {Wiebe}, \citenamefont {Svore}, \citenamefont {Wecker},\ and\ \citenamefont {Troyer}}]{Reiher_2017}%
  \BibitemOpen
  \bibfield  {author} {\bibinfo {author} {\bibfnamefont {M.}~\bibnamefont {Reiher}}, \bibinfo {author} {\bibfnamefont {N.}~\bibnamefont {Wiebe}}, \bibinfo {author} {\bibfnamefont {K.~M.}\ \bibnamefont {Svore}}, \bibinfo {author} {\bibfnamefont {D.}~\bibnamefont {Wecker}},\ and\ \bibinfo {author} {\bibfnamefont {M.}~\bibnamefont {Troyer}},\ }\bibfield  {title} {\bibinfo {title} {Elucidating reaction mechanisms on quantum computers},\ }\href {https://doi.org/10.1073/pnas.1619152114} {\bibfield  {journal} {\bibinfo  {journal} {Proceedings of the National Academy of Sciences}\ }\textbf {\bibinfo {volume} {114}},\ \bibinfo {pages} {7555–7560} (\bibinfo {year} {2017})}\BibitemShut {NoStop}%
\bibitem [{\citenamefont {Bauer}\ \emph {et~al.}(2020)\citenamefont {Bauer}, \citenamefont {Bravyi}, \citenamefont {Motta},\ and\ \citenamefont {Chan}}]{Bauer_2020}%
  \BibitemOpen
  \bibfield  {author} {\bibinfo {author} {\bibfnamefont {B.}~\bibnamefont {Bauer}}, \bibinfo {author} {\bibfnamefont {S.}~\bibnamefont {Bravyi}}, \bibinfo {author} {\bibfnamefont {M.}~\bibnamefont {Motta}},\ and\ \bibinfo {author} {\bibfnamefont {G.~K.-L.}\ \bibnamefont {Chan}},\ }\bibfield  {title} {\bibinfo {title} {Quantum algorithms for quantum chemistry and quantum materials science},\ }\href {https://doi.org/10.1021/acs.chemrev.9b00829} {\bibfield  {journal} {\bibinfo  {journal} {Chemical Reviews}\ }\textbf {\bibinfo {volume} {120}},\ \bibinfo {pages} {12685–12717} (\bibinfo {year} {2020})}\BibitemShut {NoStop}%
\bibitem [{\citenamefont {McArdle}\ \emph {et~al.}(2020)\citenamefont {McArdle}, \citenamefont {Endo}, \citenamefont {Aspuru-Guzik}, \citenamefont {Benjamin},\ and\ \citenamefont {Yuan}}]{McArdle_2020}%
  \BibitemOpen
  \bibfield  {author} {\bibinfo {author} {\bibfnamefont {S.}~\bibnamefont {McArdle}}, \bibinfo {author} {\bibfnamefont {S.}~\bibnamefont {Endo}}, \bibinfo {author} {\bibfnamefont {A.}~\bibnamefont {Aspuru-Guzik}}, \bibinfo {author} {\bibfnamefont {S.~C.}\ \bibnamefont {Benjamin}},\ and\ \bibinfo {author} {\bibfnamefont {X.}~\bibnamefont {Yuan}},\ }\bibfield  {title} {\bibinfo {title} {Quantum computational chemistry},\ }\bibfield  {journal} {\bibinfo  {journal} {Reviews of Modern Physics}\ }\textbf {\bibinfo {volume} {92}},\ \href {https://doi.org/10.1103/revmodphys.92.015003} {10.1103/revmodphys.92.015003} (\bibinfo {year} {2020})\BibitemShut {NoStop}%
\bibitem [{\citenamefont {Di~Meglio}\ \emph {et~al.}(2024)\citenamefont {Di~Meglio}, \citenamefont {Jansen}, \citenamefont {Tavernelli}, \citenamefont {Alexandrou}, \citenamefont {Arunachalam}, \citenamefont {Bauer}, \citenamefont {Borras}, \citenamefont {Carrazza}, \citenamefont {Crippa}, \citenamefont {Croft} \emph {et~al.}}]{di2024quantum}%
  \BibitemOpen
  \bibfield  {author} {\bibinfo {author} {\bibfnamefont {A.}~\bibnamefont {Di~Meglio}}, \bibinfo {author} {\bibfnamefont {K.}~\bibnamefont {Jansen}}, \bibinfo {author} {\bibfnamefont {I.}~\bibnamefont {Tavernelli}}, \bibinfo {author} {\bibfnamefont {C.}~\bibnamefont {Alexandrou}}, \bibinfo {author} {\bibfnamefont {S.}~\bibnamefont {Arunachalam}}, \bibinfo {author} {\bibfnamefont {C.~W.}\ \bibnamefont {Bauer}}, \bibinfo {author} {\bibfnamefont {K.}~\bibnamefont {Borras}}, \bibinfo {author} {\bibfnamefont {S.}~\bibnamefont {Carrazza}}, \bibinfo {author} {\bibfnamefont {A.}~\bibnamefont {Crippa}}, \bibinfo {author} {\bibfnamefont {V.}~\bibnamefont {Croft}}, \emph {et~al.},\ }\bibfield  {title} {\bibinfo {title} {Quantum computing for high-energy physics: State of the art and challenges},\ }\href@noop {} {\bibfield  {journal} {\bibinfo  {journal} {PRX Quantum}\ }\textbf {\bibinfo {volume} {5}},\ \bibinfo {pages} {037001} (\bibinfo {year} {2024})}\BibitemShut {NoStop}%
\bibitem [{\citenamefont {Manin}(1980)}]{manin1980computable}%
  \BibitemOpen
  \bibfield  {author} {\bibinfo {author} {\bibfnamefont {Y.}~\bibnamefont {Manin}},\ }\bibfield  {title} {\bibinfo {title} {Computable and uncomputable},\ }\href@noop {} {\bibfield  {journal} {\bibinfo  {journal} {Sovetskoye Radio, Moscow}\ }\textbf {\bibinfo {volume} {128}},\ \bibinfo {pages} {28} (\bibinfo {year} {1980})}\BibitemShut {NoStop}%
\bibitem [{\citenamefont {Feynman}(2018)}]{feynman2018simulating}%
  \BibitemOpen
  \bibfield  {author} {\bibinfo {author} {\bibfnamefont {R.~P.}\ \bibnamefont {Feynman}},\ }\bibfield  {title} {\bibinfo {title} {Simulating physics with computers},\ }in\ \href@noop {} {\emph {\bibinfo {booktitle} {Feynman and computation}}}\ (\bibinfo  {publisher} {CRC Press},\ \bibinfo {year} {2018})\ pp.\ \bibinfo {pages} {133--153}\BibitemShut {NoStop}%
\bibitem [{\citenamefont {Jordan}\ and\ \citenamefont {Wigner}(1928)}]{jordan1928paulische}%
  \BibitemOpen
  \bibfield  {author} {\bibinfo {author} {\bibfnamefont {P.}~\bibnamefont {Jordan}}\ and\ \bibinfo {author} {\bibfnamefont {E.}~\bibnamefont {Wigner}},\ }\bibfield  {title} {\bibinfo {title} {{\"U}ber das paulische {\"a}quivalenzverbot},\ }\href@noop {} {\bibfield  {journal} {\bibinfo  {journal} {Zeitschrift f{\"u}r Physik}\ }\textbf {\bibinfo {volume} {47}},\ \bibinfo {pages} {631} (\bibinfo {year} {1928})}\BibitemShut {NoStop}%
\bibitem [{\citenamefont {Bravyi}\ and\ \citenamefont {Kitaev}(2002)}]{bk}%
  \BibitemOpen
  \bibfield  {author} {\bibinfo {author} {\bibfnamefont {S.~B.}\ \bibnamefont {Bravyi}}\ and\ \bibinfo {author} {\bibfnamefont {A.~Y.}\ \bibnamefont {Kitaev}},\ }\bibfield  {title} {\bibinfo {title} {Fermionic quantum computation},\ }\href@noop {} {\bibfield  {journal} {\bibinfo  {journal} {Annals of Physics}\ }\textbf {\bibinfo {volume} {298}},\ \bibinfo {pages} {210} (\bibinfo {year} {2002})}\BibitemShut {NoStop}%
\bibitem [{\citenamefont {Yordanov}\ \emph {et~al.}(2020)\citenamefont {Yordanov}, \citenamefont {Arvidsson-Shukur},\ and\ \citenamefont {Barnes}}]{parity}%
  \BibitemOpen
  \bibfield  {author} {\bibinfo {author} {\bibfnamefont {Y.~S.}\ \bibnamefont {Yordanov}}, \bibinfo {author} {\bibfnamefont {D.~R.}\ \bibnamefont {Arvidsson-Shukur}},\ and\ \bibinfo {author} {\bibfnamefont {C.~H.}\ \bibnamefont {Barnes}},\ }\bibfield  {title} {\bibinfo {title} {Efficient quantum circuits for quantum computational chemistry},\ }\href@noop {} {\bibfield  {journal} {\bibinfo  {journal} {Physical Review A}\ }\textbf {\bibinfo {volume} {102}},\ \bibinfo {pages} {062612} (\bibinfo {year} {2020})}\BibitemShut {NoStop}%
\bibitem [{\citenamefont {Liu}\ \emph {et~al.}(2024)\citenamefont {Liu}, \citenamefont {Che}, \citenamefont {Zhou}, \citenamefont {Shi},\ and\ \citenamefont {Li}}]{fermihedral}%
  \BibitemOpen
  \bibfield  {author} {\bibinfo {author} {\bibfnamefont {Y.}~\bibnamefont {Liu}}, \bibinfo {author} {\bibfnamefont {S.}~\bibnamefont {Che}}, \bibinfo {author} {\bibfnamefont {J.}~\bibnamefont {Zhou}}, \bibinfo {author} {\bibfnamefont {Y.}~\bibnamefont {Shi}},\ and\ \bibinfo {author} {\bibfnamefont {G.}~\bibnamefont {Li}},\ }\bibfield  {title} {\bibinfo {title} {{Fermihedral}: On the optimal compilation for fermion-to-qubit encoding},\ }in\ \href {https://doi.org/10.1145/3620666.3651371} {\emph {\bibinfo {booktitle} {Proceedings of the 29th ACM International Conference on Architectural Support for Programming Languages and Operating Systems, Volume 3 ({ASPLOS})}}}\ (\bibinfo  {publisher} {ACM},\ \bibinfo {year} {2024})\ pp.\ \bibinfo {pages} {382--397}\BibitemShut {NoStop}%
\bibitem [{\citenamefont {Liu}\ \emph {et~al.}(2025)\citenamefont {Liu}, \citenamefont {Yao}, \citenamefont {Hong}, \citenamefont {Froustey}, \citenamefont {Rrapaj}, \citenamefont {Iancu}, \citenamefont {Li},\ and\ \citenamefont {Shi}}]{hatt}%
  \BibitemOpen
  \bibfield  {author} {\bibinfo {author} {\bibfnamefont {Y.}~\bibnamefont {Liu}}, \bibinfo {author} {\bibfnamefont {K.}~\bibnamefont {Yao}}, \bibinfo {author} {\bibfnamefont {J.}~\bibnamefont {Hong}}, \bibinfo {author} {\bibfnamefont {J.}~\bibnamefont {Froustey}}, \bibinfo {author} {\bibfnamefont {E.}~\bibnamefont {Rrapaj}}, \bibinfo {author} {\bibfnamefont {C.}~\bibnamefont {Iancu}}, \bibinfo {author} {\bibfnamefont {G.}~\bibnamefont {Li}},\ and\ \bibinfo {author} {\bibfnamefont {Y.}~\bibnamefont {Shi}},\ }\bibfield  {title} {\bibinfo {title} {{HATT}: {Hamiltonian} {Adaptive} {Ternary} {Tree} for optimizing fermion-to-qubit mapping},\ }in\ \href {https://doi.org/10.1109/HPCA61900.2025.00022} {\emph {\bibinfo {booktitle} {2025 IEEE International Symposium on High Performance Computer Architecture ({HPCA})}}}\ (\bibinfo {organization} {IEEE},\ \bibinfo {year} {2025})\ pp.\ \bibinfo {pages} {143--157}\BibitemShut {NoStop}%
\bibitem [{\citenamefont {Maskara}\ \emph {et~al.}(2025)\citenamefont {Maskara}, \citenamefont {Kalinowski}, \citenamefont {Gonzalez-Cuadra},\ and\ \citenamefont {Lukin}}]{maskara2025fastsimulationfermionsreconfigurable}%
  \BibitemOpen
  \bibfield  {author} {\bibinfo {author} {\bibfnamefont {N.}~\bibnamefont {Maskara}}, \bibinfo {author} {\bibfnamefont {M.}~\bibnamefont {Kalinowski}}, \bibinfo {author} {\bibfnamefont {D.}~\bibnamefont {Gonzalez-Cuadra}},\ and\ \bibinfo {author} {\bibfnamefont {M.~D.}\ \bibnamefont {Lukin}},\ }\href {https://arxiv.org/abs/2509.08898} {\bibinfo {title} {Fast simulation of fermions with reconfigurable qubits}} (\bibinfo {year} {2025}),\ \Eprint {https://arxiv.org/abs/2509.08898} {arXiv:2509.08898 [quant-ph]} \BibitemShut {NoStop}%
\bibitem [{\citenamefont {Constantinides}\ \emph {et~al.}(2025)\citenamefont {Constantinides}, \citenamefont {Yu}, \citenamefont {Devulapalli}, \citenamefont {Fahimniya}, \citenamefont {Schaeffer}, \citenamefont {Childs}, \citenamefont {Gullans}, \citenamefont {Schuckert},\ and\ \citenamefont {Gorshkov}}]{constantinides2025lowdepthfermionroutingancillas}%
  \BibitemOpen
  \bibfield  {author} {\bibinfo {author} {\bibfnamefont {N.}~\bibnamefont {Constantinides}}, \bibinfo {author} {\bibfnamefont {J.}~\bibnamefont {Yu}}, \bibinfo {author} {\bibfnamefont {D.}~\bibnamefont {Devulapalli}}, \bibinfo {author} {\bibfnamefont {A.}~\bibnamefont {Fahimniya}}, \bibinfo {author} {\bibfnamefont {L.}~\bibnamefont {Schaeffer}}, \bibinfo {author} {\bibfnamefont {A.~M.}\ \bibnamefont {Childs}}, \bibinfo {author} {\bibfnamefont {M.~J.}\ \bibnamefont {Gullans}}, \bibinfo {author} {\bibfnamefont {A.}~\bibnamefont {Schuckert}},\ and\ \bibinfo {author} {\bibfnamefont {A.~V.}\ \bibnamefont {Gorshkov}},\ }\href {https://arxiv.org/abs/2510.05099} {\bibinfo {title} {Low-depth fermion routing without ancillas}} (\bibinfo {year} {2025}),\ \Eprint {https://arxiv.org/abs/2510.05099} {arXiv:2510.05099 [quant-ph]} \BibitemShut {NoStop}%
\bibitem [{\citenamefont {Ferris}(2014)}]{ffft}%
  \BibitemOpen
  \bibfield  {author} {\bibinfo {author} {\bibfnamefont {A.~J.}\ \bibnamefont {Ferris}},\ }\bibfield  {title} {\bibinfo {title} {Fourier transform for fermionic systems and the spectral tensor network},\ }\bibfield  {journal} {\bibinfo  {journal} {Physical Review Letters}\ }\textbf {\bibinfo {volume} {113}},\ \href {https://doi.org/10.1103/physrevlett.113.010401} {10.1103/physrevlett.113.010401} (\bibinfo {year} {2014})\BibitemShut {NoStop}%
\bibitem [{\citenamefont {Babbush}\ \emph {et~al.}(2018)\citenamefont {Babbush}, \citenamefont {Wiebe}, \citenamefont {McClean}, \citenamefont {McClain}, \citenamefont {Neven},\ and\ \citenamefont {Chan}}]{ffft-ct}%
  \BibitemOpen
  \bibfield  {author} {\bibinfo {author} {\bibfnamefont {R.}~\bibnamefont {Babbush}}, \bibinfo {author} {\bibfnamefont {N.}~\bibnamefont {Wiebe}}, \bibinfo {author} {\bibfnamefont {J.}~\bibnamefont {McClean}}, \bibinfo {author} {\bibfnamefont {J.}~\bibnamefont {McClain}}, \bibinfo {author} {\bibfnamefont {H.}~\bibnamefont {Neven}},\ and\ \bibinfo {author} {\bibfnamefont {G.~K.-L.}\ \bibnamefont {Chan}},\ }\bibfield  {title} {\bibinfo {title} {Low-depth quantum simulation of materials},\ }\href@noop {} {\bibfield  {journal} {\bibinfo  {journal} {Physical Review X}\ }\textbf {\bibinfo {volume} {8}},\ \bibinfo {pages} {011044} (\bibinfo {year} {2018})}\BibitemShut {NoStop}%
\bibitem [{\citenamefont {Kivlichan}\ \emph {et~al.}(2018)\citenamefont {Kivlichan}, \citenamefont {McClean}, \citenamefont {Wiebe}, \citenamefont {Gidney}, \citenamefont {Aspuru-Guzik}, \citenamefont {Chan},\ and\ \citenamefont {Babbush}}]{fswap1Dnetwork}%
  \BibitemOpen
  \bibfield  {author} {\bibinfo {author} {\bibfnamefont {I.~D.}\ \bibnamefont {Kivlichan}}, \bibinfo {author} {\bibfnamefont {J.}~\bibnamefont {McClean}}, \bibinfo {author} {\bibfnamefont {N.}~\bibnamefont {Wiebe}}, \bibinfo {author} {\bibfnamefont {C.}~\bibnamefont {Gidney}}, \bibinfo {author} {\bibfnamefont {A.}~\bibnamefont {Aspuru-Guzik}}, \bibinfo {author} {\bibfnamefont {G.~K.-L.}\ \bibnamefont {Chan}},\ and\ \bibinfo {author} {\bibfnamefont {R.}~\bibnamefont {Babbush}},\ }\bibfield  {title} {\bibinfo {title} {Quantum simulation of electronic structure with linear depth and connectivity},\ }\bibfield  {journal} {\bibinfo  {journal} {Physical Review Letters}\ }\textbf {\bibinfo {volume} {120}},\ \href {https://doi.org/10.1103/physrevlett.120.110501} {10.1103/physrevlett.120.110501} (\bibinfo {year} {2018})\BibitemShut {NoStop}%
\bibitem [{\citenamefont {Devulapalli}\ \emph {et~al.}(2024)\citenamefont {Devulapalli}, \citenamefont {Schoute}, \citenamefont {Bapat}, \citenamefont {Childs},\ and\ \citenamefont {Gorshkov}}]{devulapalli2022routing}%
  \BibitemOpen
  \bibfield  {author} {\bibinfo {author} {\bibfnamefont {D.}~\bibnamefont {Devulapalli}}, \bibinfo {author} {\bibfnamefont {E.}~\bibnamefont {Schoute}}, \bibinfo {author} {\bibfnamefont {A.}~\bibnamefont {Bapat}}, \bibinfo {author} {\bibfnamefont {A.~M.}\ \bibnamefont {Childs}},\ and\ \bibinfo {author} {\bibfnamefont {A.~V.}\ \bibnamefont {Gorshkov}},\ }\bibfield  {title} {\bibinfo {title} {Quantum routing with teleportation},\ }\bibfield  {journal} {\bibinfo  {journal} {Physical Review Research}\ }\textbf {\bibinfo {volume} {6}},\ \href {https://doi.org/10.1103/physrevresearch.6.033313} {10.1103/physrevresearch.6.033313} (\bibinfo {year} {2024})\BibitemShut {NoStop}%
\bibitem [{\citenamefont {{Google Quantum AI and Collaborators}}(2025)}]{google2025quantum}%
  \BibitemOpen
  \bibfield  {author} {\bibinfo {author} {\bibnamefont {{Google Quantum AI and Collaborators}}},\ }\bibfield  {title} {\bibinfo {title} {Quantum error correction below the surface code threshold},\ }\href {https://doi.org/10.1038/s41586-024-08449-y} {\bibfield  {journal} {\bibinfo  {journal} {Nature}\ }\textbf {\bibinfo {volume} {638}},\ \bibinfo {pages} {920} (\bibinfo {year} {2025})}\BibitemShut {NoStop}%
\bibitem [{\citenamefont {Kim}\ \emph {et~al.}(2023)\citenamefont {Kim}, \citenamefont {Eddins}, \citenamefont {Anand}, \citenamefont {Wei}, \citenamefont {van~den Berg}, \citenamefont {Rosenblatt}, \citenamefont {Nayfeh}, \citenamefont {Wu}, \citenamefont {Zaletel}, \citenamefont {Temme} \emph {et~al.}}]{kim2023evidence}%
  \BibitemOpen
  \bibfield  {author} {\bibinfo {author} {\bibfnamefont {Y.}~\bibnamefont {Kim}}, \bibinfo {author} {\bibfnamefont {A.}~\bibnamefont {Eddins}}, \bibinfo {author} {\bibfnamefont {S.}~\bibnamefont {Anand}}, \bibinfo {author} {\bibfnamefont {K.~X.}\ \bibnamefont {Wei}}, \bibinfo {author} {\bibfnamefont {E.}~\bibnamefont {van~den Berg}}, \bibinfo {author} {\bibfnamefont {S.}~\bibnamefont {Rosenblatt}}, \bibinfo {author} {\bibfnamefont {H.}~\bibnamefont {Nayfeh}}, \bibinfo {author} {\bibfnamefont {Y.}~\bibnamefont {Wu}}, \bibinfo {author} {\bibfnamefont {M.}~\bibnamefont {Zaletel}}, \bibinfo {author} {\bibfnamefont {K.}~\bibnamefont {Temme}}, \emph {et~al.},\ }\bibfield  {title} {\bibinfo {title} {Evidence for the utility of quantum computing before fault tolerance},\ }\href@noop {} {\bibfield  {journal} {\bibinfo  {journal} {Nature}\ }\textbf {\bibinfo {volume} {618}},\ \bibinfo {pages} {500} (\bibinfo {year} {2023})}\BibitemShut {NoStop}%
\bibitem [{\citenamefont {Jiang}\ \emph {et~al.}(2018)\citenamefont {Jiang}, \citenamefont {Sung}, \citenamefont {Kechedzhi}, \citenamefont {Smelyanskiy},\ and\ \citenamefont {Boixo}}]{gamma}%
  \BibitemOpen
  \bibfield  {author} {\bibinfo {author} {\bibfnamefont {Z.}~\bibnamefont {Jiang}}, \bibinfo {author} {\bibfnamefont {K.~J.}\ \bibnamefont {Sung}}, \bibinfo {author} {\bibfnamefont {K.}~\bibnamefont {Kechedzhi}}, \bibinfo {author} {\bibfnamefont {V.~N.}\ \bibnamefont {Smelyanskiy}},\ and\ \bibinfo {author} {\bibfnamefont {S.}~\bibnamefont {Boixo}},\ }\bibfield  {title} {\bibinfo {title} {Quantum algorithms to simulate many-body physics of correlated fermions},\ }\bibfield  {journal} {\bibinfo  {journal} {Physical Review Applied}\ }\textbf {\bibinfo {volume} {9}},\ \href {https://doi.org/10.1103/physrevapplied.9.044036} {10.1103/physrevapplied.9.044036} (\bibinfo {year} {2018})\BibitemShut {NoStop}%
\bibitem [{\citenamefont {Hilbert}(1935)}]{hilbert1935stetige}%
  \BibitemOpen
  \bibfield  {author} {\bibinfo {author} {\bibfnamefont {D.}~\bibnamefont {Hilbert}},\ }\bibfield  {title} {\bibinfo {title} {{\"U}ber die stetige abbildung einer linie auf ein fl{\"a}chenst{\"u}ck},\ }in\ \href@noop {} {\emph {\bibinfo {booktitle} {Dritter Band: Analysis{\textperiodcentered} Grundlagen der Mathematik{\textperiodcentered} Physik Verschiedenes: Nebst Einer Lebensgeschichte}}}\ (\bibinfo  {publisher} {Springer},\ \bibinfo {year} {1935})\ pp.\ \bibinfo {pages} {1--2}\BibitemShut {NoStop}%
\bibitem [{\citenamefont {Aigner}\ \emph {et~al.}(2026)\citenamefont {Aigner}, \citenamefont {Klaver}, \citenamefont {Lanthaler},\ and\ \citenamefont {Lechner}}]{aigner2026fermionlattices}%
  \BibitemOpen
  \bibfield  {author} {\bibinfo {author} {\bibfnamefont {G.}~\bibnamefont {Aigner}}, \bibinfo {author} {\bibfnamefont {B.}~\bibnamefont {Klaver}}, \bibinfo {author} {\bibfnamefont {M.}~\bibnamefont {Lanthaler}},\ and\ \bibinfo {author} {\bibfnamefont {W.}~\bibnamefont {Lechner}},\ }\bibfield  {title} {\bibinfo {title} {Fermion lattices can be simulated by same-size qubit lattices with {$\mathcal{O}(1)$} interaction overhead},\ }\bibfield  {journal} {\bibinfo  {journal} {arXiv preprint arXiv:2605.12600}\ }\href {https://doi.org/10.48550/arXiv.2605.12600} {10.48550/arXiv.2605.12600} (\bibinfo {year} {2026})\BibitemShut {NoStop}%
\bibitem [{\citenamefont {Miller}\ \emph {et~al.}(2025)\citenamefont {Miller}, \citenamefont {Favre}, \citenamefont {Holmes}, \citenamefont {Salehi}, \citenamefont {Chakraborty}, \citenamefont {Nyk{\"a}nen}, \citenamefont {Zimbor{\'a}s}, \citenamefont {Glos},\ and\ \citenamefont {Garc{\'i}a-P{\'e}rez}}]{majoranaprop}%
  \BibitemOpen
  \bibfield  {author} {\bibinfo {author} {\bibfnamefont {A.}~\bibnamefont {Miller}}, \bibinfo {author} {\bibfnamefont {J.}~\bibnamefont {Favre}}, \bibinfo {author} {\bibfnamefont {Z.}~\bibnamefont {Holmes}}, \bibinfo {author} {\bibfnamefont {{\"O}.}~\bibnamefont {Salehi}}, \bibinfo {author} {\bibfnamefont {R.}~\bibnamefont {Chakraborty}}, \bibinfo {author} {\bibfnamefont {A.}~\bibnamefont {Nyk{\"a}nen}}, \bibinfo {author} {\bibfnamefont {Z.}~\bibnamefont {Zimbor{\'a}s}}, \bibinfo {author} {\bibfnamefont {A.}~\bibnamefont {Glos}},\ and\ \bibinfo {author} {\bibfnamefont {G.}~\bibnamefont {Garc{\'i}a-P{\'e}rez}},\ }\bibfield  {title} {\bibinfo {title} {Simulation of {Fermionic} circuits using {Majorana} {Propagation}},\ }\href@noop {} {\bibfield  {journal} {\bibinfo  {journal} {arXiv preprint arXiv:2503.18939}\ } (\bibinfo {year} {2025})}\BibitemShut {NoStop}%
\bibitem [{\citenamefont {Vatan}\ and\ \citenamefont {Williams}(2004)}]{fswap}%
  \BibitemOpen
  \bibfield  {author} {\bibinfo {author} {\bibfnamefont {F.}~\bibnamefont {Vatan}}\ and\ \bibinfo {author} {\bibfnamefont {C.}~\bibnamefont {Williams}},\ }\bibfield  {title} {\bibinfo {title} {Optimal quantum circuits for general two-qubit gates},\ }\bibfield  {journal} {\bibinfo  {journal} {Physical Review A}\ }\textbf {\bibinfo {volume} {69}},\ \href {https://doi.org/10.1103/physreva.69.032315} {10.1103/physreva.69.032315} (\bibinfo {year} {2004})\BibitemShut {NoStop}%
\bibitem [{\citenamefont {Litinski}\ and\ \citenamefont {von Oppen}(2018)}]{litinski2018lattice}%
  \BibitemOpen
  \bibfield  {author} {\bibinfo {author} {\bibfnamefont {D.}~\bibnamefont {Litinski}}\ and\ \bibinfo {author} {\bibfnamefont {F.}~\bibnamefont {von Oppen}},\ }\bibfield  {title} {\bibinfo {title} {Lattice surgery with a twist: Simplifying {Clifford} gates of surface codes},\ }\href {https://doi.org/10.22331/q-2018-05-04-62} {\bibfield  {journal} {\bibinfo  {journal} {Quantum}\ }\textbf {\bibinfo {volume} {2}},\ \bibinfo {pages} {62} (\bibinfo {year} {2018})}\BibitemShut {NoStop}%
\bibitem [{\citenamefont {Knuth}(1998)}]{oet}%
  \BibitemOpen
  \bibfield  {author} {\bibinfo {author} {\bibfnamefont {D.~E.}\ \bibnamefont {Knuth}},\ }\href@noop {} {\emph {\bibinfo {title} {The art of computer programming, volume 3: (2nd ed.) sorting and searching}}}\ (\bibinfo  {publisher} {Addison Wesley Longman Publishing Co., Inc.},\ \bibinfo {address} {USA},\ \bibinfo {year} {1998})\BibitemShut {NoStop}%
\bibitem [{\citenamefont {Alon}\ \emph {et~al.}(1994)\citenamefont {Alon}, \citenamefont {Chung},\ and\ \citenamefont {Graham}}]{hall-routing}%
  \BibitemOpen
  \bibfield  {author} {\bibinfo {author} {\bibfnamefont {N.}~\bibnamefont {Alon}}, \bibinfo {author} {\bibfnamefont {F.~R.~K.}\ \bibnamefont {Chung}},\ and\ \bibinfo {author} {\bibfnamefont {R.~L.}\ \bibnamefont {Graham}},\ }\bibfield  {title} {\bibinfo {title} {Routing permutations on graphs via matchings},\ }\href {https://doi.org/10.1137/S0895480192236628} {\bibfield  {journal} {\bibinfo  {journal} {SIAM Journal on Discrete Mathematics}\ }\textbf {\bibinfo {volume} {7}},\ \bibinfo {pages} {513} (\bibinfo {year} {1994})}\BibitemShut {NoStop}%
\bibitem [{\citenamefont {König}(1916)}]{Lregularmatching}%
  \BibitemOpen
  \bibfield  {author} {\bibinfo {author} {\bibfnamefont {D.}~\bibnamefont {König}},\ }\bibfield  {title} {\bibinfo {title} {Über graphen und ihre anwendung auf determinantentheorie und mengenlehre},\ }\href {http://eudml.org/doc/158740} {\bibfield  {journal} {\bibinfo  {journal} {Mathematische Annalen}\ }\textbf {\bibinfo {volume} {77}},\ \bibinfo {pages} {453} (\bibinfo {year} {1916})}\BibitemShut {NoStop}%
\bibitem [{\citenamefont {Gabow}\ and\ \citenamefont {Kariv}(1982)}]{edgecoloring}%
  \BibitemOpen
  \bibfield  {author} {\bibinfo {author} {\bibfnamefont {H.~N.}\ \bibnamefont {Gabow}}\ and\ \bibinfo {author} {\bibfnamefont {O.}~\bibnamefont {Kariv}},\ }\bibfield  {title} {\bibinfo {title} {Algorithms for edge coloring bipartite graphs and multigraphs},\ }\href {https://doi.org/10.1137/0211009} {\bibfield  {journal} {\bibinfo  {journal} {SIAM Journal on Computing}\ }\textbf {\bibinfo {volume} {11}},\ \bibinfo {pages} {117} (\bibinfo {year} {1982})}\BibitemShut {NoStop}%
\bibitem [{\citenamefont {Jiang}\ \emph {et~al.}(2020)\citenamefont {Jiang}, \citenamefont {Kalev}, \citenamefont {Mruczkiewicz},\ and\ \citenamefont {Neven}}]{Jiang_2020}%
  \BibitemOpen
  \bibfield  {author} {\bibinfo {author} {\bibfnamefont {Z.}~\bibnamefont {Jiang}}, \bibinfo {author} {\bibfnamefont {A.}~\bibnamefont {Kalev}}, \bibinfo {author} {\bibfnamefont {W.}~\bibnamefont {Mruczkiewicz}},\ and\ \bibinfo {author} {\bibfnamefont {H.}~\bibnamefont {Neven}},\ }\bibfield  {title} {\bibinfo {title} {Optimal fermion-to-qubit mapping via ternary trees with applications to reduced quantum states learning},\ }\href {https://doi.org/10.22331/q-2020-06-04-276} {\bibfield  {journal} {\bibinfo  {journal} {Quantum}\ }\textbf {\bibinfo {volume} {4}},\ \bibinfo {pages} {276} (\bibinfo {year} {2020})}\BibitemShut {NoStop}%
\bibitem [{\citenamefont {Yu}\ \emph {et~al.}(2025)\citenamefont {Yu}, \citenamefont {Liu}, \citenamefont {Sugiura}, \citenamefont {Van~Voorhis},\ and\ \citenamefont {Zeytinoğlu}}]{Yu_2025}%
  \BibitemOpen
  \bibfield  {author} {\bibinfo {author} {\bibfnamefont {J.}~\bibnamefont {Yu}}, \bibinfo {author} {\bibfnamefont {Y.}~\bibnamefont {Liu}}, \bibinfo {author} {\bibfnamefont {S.}~\bibnamefont {Sugiura}}, \bibinfo {author} {\bibfnamefont {T.}~\bibnamefont {Van~Voorhis}},\ and\ \bibinfo {author} {\bibfnamefont {S.}~\bibnamefont {Zeytinoğlu}},\ }\bibfield  {title} {\bibinfo {title} {Clifford circuit-based heuristic optimization of fermion-to-qubit mappings},\ }\href {https://doi.org/10.1021/acs.jctc.5c00794} {\bibfield  {journal} {\bibinfo  {journal} {Journal of Chemical Theory and Computation}\ }\textbf {\bibinfo {volume} {21}},\ \bibinfo {pages} {9430–9443} (\bibinfo {year} {2025})}\BibitemShut {NoStop}%
\bibitem [{\citenamefont {Miller}\ \emph {et~al.}(2023)\citenamefont {Miller}, \citenamefont {Zimborás}, \citenamefont {Knecht}, \citenamefont {Maniscalco},\ and\ \citenamefont {García-Pérez}}]{Miller_2023}%
  \BibitemOpen
  \bibfield  {author} {\bibinfo {author} {\bibfnamefont {A.}~\bibnamefont {Miller}}, \bibinfo {author} {\bibfnamefont {Z.}~\bibnamefont {Zimborás}}, \bibinfo {author} {\bibfnamefont {S.}~\bibnamefont {Knecht}}, \bibinfo {author} {\bibfnamefont {S.}~\bibnamefont {Maniscalco}},\ and\ \bibinfo {author} {\bibfnamefont {G.}~\bibnamefont {García-Pérez}},\ }\bibfield  {title} {\bibinfo {title} {Bonsai algorithm: Grow your own fermion-to-qubit mappings},\ }\bibfield  {journal} {\bibinfo  {journal} {PRX Quantum}\ }\textbf {\bibinfo {volume} {4}},\ \href {https://doi.org/10.1103/prxquantum.4.030314} {10.1103/prxquantum.4.030314} (\bibinfo {year} {2023})\BibitemShut {NoStop}%
\bibitem [{\citenamefont {Chiew}\ \emph {et~al.}(2024)\citenamefont {Chiew}, \citenamefont {Harrison},\ and\ \citenamefont {Strelchuk}}]{chiew2024ternarytreetransformationsequivalent}%
  \BibitemOpen
  \bibfield  {author} {\bibinfo {author} {\bibfnamefont {M.}~\bibnamefont {Chiew}}, \bibinfo {author} {\bibfnamefont {B.}~\bibnamefont {Harrison}},\ and\ \bibinfo {author} {\bibfnamefont {S.}~\bibnamefont {Strelchuk}},\ }\href {https://arxiv.org/abs/2412.07578} {\bibinfo {title} {Ternary tree transformations are equivalent to linear encodings of the fock basis}} (\bibinfo {year} {2024}),\ \Eprint {https://arxiv.org/abs/2412.07578} {arXiv:2412.07578 [quant-ph]} \BibitemShut {NoStop}%
\bibitem [{\citenamefont {Vlasov}(2024)}]{vlasov2024mutualtransformationsarbitraryternary}%
  \BibitemOpen
  \bibfield  {author} {\bibinfo {author} {\bibfnamefont {A.~Y.}\ \bibnamefont {Vlasov}},\ }\href {https://arxiv.org/abs/2404.16693} {\bibinfo {title} {Mutual transformations of arbitrary ternary qubit trees by clifford gates}} (\bibinfo {year} {2024}),\ \Eprint {https://arxiv.org/abs/2404.16693} {arXiv:2404.16693 [quant-ph]} \BibitemShut {NoStop}%
\bibitem [{\citenamefont {{Cirq Developers}}(2025)}]{cirq}%
  \BibitemOpen
  \bibfield  {author} {\bibinfo {author} {\bibnamefont {{Cirq Developers}}},\ }\href {https://doi.org/10.5281/ZENODO.4062499} {\emph {\bibinfo {title} {Cirq}}}\ (\bibinfo  {publisher} {Zenodo},\ \bibinfo {year} {2025})\BibitemShut {NoStop}%
\bibitem [{\citenamefont {Fowler}\ \emph {et~al.}(2012)\citenamefont {Fowler}, \citenamefont {Mariantoni}, \citenamefont {Martinis},\ and\ \citenamefont {Cleland}}]{Fowler_2012}%
  \BibitemOpen
  \bibfield  {author} {\bibinfo {author} {\bibfnamefont {A.~G.}\ \bibnamefont {Fowler}}, \bibinfo {author} {\bibfnamefont {M.}~\bibnamefont {Mariantoni}}, \bibinfo {author} {\bibfnamefont {J.~M.}\ \bibnamefont {Martinis}},\ and\ \bibinfo {author} {\bibfnamefont {A.~N.}\ \bibnamefont {Cleland}},\ }\bibfield  {title} {\bibinfo {title} {Surface codes: Towards practical large-scale quantum computation},\ }\bibfield  {journal} {\bibinfo  {journal} {Physical Review A}\ }\textbf {\bibinfo {volume} {86}},\ \href {https://doi.org/10.1103/physreva.86.032324} {10.1103/physreva.86.032324} (\bibinfo {year} {2012})\BibitemShut {NoStop}%
\bibitem [{\citenamefont {Beverland}\ \emph {et~al.}(2022)\citenamefont {Beverland}, \citenamefont {Murali}, \citenamefont {Troyer}, \citenamefont {Svore}, \citenamefont {Hoefler}, \citenamefont {Kliuchnikov}, \citenamefont {Low}, \citenamefont {Soeken}, \citenamefont {Sundaram},\ and\ \citenamefont {Vaschillo}}]{beverland2022assessing}%
  \BibitemOpen
  \bibfield  {author} {\bibinfo {author} {\bibfnamefont {M.~E.}\ \bibnamefont {Beverland}}, \bibinfo {author} {\bibfnamefont {P.}~\bibnamefont {Murali}}, \bibinfo {author} {\bibfnamefont {M.}~\bibnamefont {Troyer}}, \bibinfo {author} {\bibfnamefont {K.~M.}\ \bibnamefont {Svore}}, \bibinfo {author} {\bibfnamefont {T.}~\bibnamefont {Hoefler}}, \bibinfo {author} {\bibfnamefont {V.}~\bibnamefont {Kliuchnikov}}, \bibinfo {author} {\bibfnamefont {G.~H.}\ \bibnamefont {Low}}, \bibinfo {author} {\bibfnamefont {M.}~\bibnamefont {Soeken}}, \bibinfo {author} {\bibfnamefont {A.}~\bibnamefont {Sundaram}},\ and\ \bibinfo {author} {\bibfnamefont {A.}~\bibnamefont {Vaschillo}},\ }\bibfield  {title} {\bibinfo {title} {Assessing requirements to scale to practical quantum advantage},\ }\href@noop {} {\bibfield  {journal} {\bibinfo  {journal} {arXiv preprint arXiv:2211.07629}\ } (\bibinfo {year} {2022})}\BibitemShut {NoStop}%
\bibitem [{\citenamefont {McKay}\ \emph {et~al.}(2017)\citenamefont {McKay}, \citenamefont {Wood}, \citenamefont {Sheldon}, \citenamefont {Chow},\ and\ \citenamefont {Gambetta}}]{mckay2017efficient}%
  \BibitemOpen
  \bibfield  {author} {\bibinfo {author} {\bibfnamefont {D.~C.}\ \bibnamefont {McKay}}, \bibinfo {author} {\bibfnamefont {C.~J.}\ \bibnamefont {Wood}}, \bibinfo {author} {\bibfnamefont {S.}~\bibnamefont {Sheldon}}, \bibinfo {author} {\bibfnamefont {J.~M.}\ \bibnamefont {Chow}},\ and\ \bibinfo {author} {\bibfnamefont {J.~M.}\ \bibnamefont {Gambetta}},\ }\bibfield  {title} {\bibinfo {title} {Efficient {$Z$} gates for quantum computing},\ }\href {https://doi.org/10.1103/PhysRevA.96.022330} {\bibfield  {journal} {\bibinfo  {journal} {Physical Review A}\ }\textbf {\bibinfo {volume} {96}},\ \bibinfo {pages} {022330} (\bibinfo {year} {2017})}\BibitemShut {NoStop}%
\bibitem [{\citenamefont {Riesebos}\ \emph {et~al.}(2017)\citenamefont {Riesebos}, \citenamefont {Fu}, \citenamefont {Varsamopoulos}, \citenamefont {Almud{\'e}ver},\ and\ \citenamefont {Bertels}}]{riesebos2017pauli}%
  \BibitemOpen
  \bibfield  {author} {\bibinfo {author} {\bibfnamefont {L.}~\bibnamefont {Riesebos}}, \bibinfo {author} {\bibfnamefont {X.}~\bibnamefont {Fu}}, \bibinfo {author} {\bibfnamefont {S.}~\bibnamefont {Varsamopoulos}}, \bibinfo {author} {\bibfnamefont {C.~G.}\ \bibnamefont {Almud{\'e}ver}},\ and\ \bibinfo {author} {\bibfnamefont {K.}~\bibnamefont {Bertels}},\ }\bibfield  {title} {\bibinfo {title} {Pauli frames for quantum computer architectures},\ }in\ \href {https://doi.org/10.1145/3061639.3062300} {\emph {\bibinfo {booktitle} {Proceedings of the 54th Annual Design Automation Conference 2017}}}\ (\bibinfo  {publisher} {ACM},\ \bibinfo {year} {2017})\ pp.\ \bibinfo {pages} {1--6}\BibitemShut {NoStop}%
\bibitem [{\citenamefont {Gidney}(2021)}]{stim}%
  \BibitemOpen
  \bibfield  {author} {\bibinfo {author} {\bibfnamefont {C.}~\bibnamefont {Gidney}},\ }\bibfield  {title} {\bibinfo {title} {Stim: a fast stabilizer circuit simulator},\ }\href {https://doi.org/10.22331/q-2021-07-06-497} {\bibfield  {journal} {\bibinfo  {journal} {{Quantum}}\ }\textbf {\bibinfo {volume} {5}},\ \bibinfo {pages} {497} (\bibinfo {year} {2021})}\BibitemShut {NoStop}%
\bibitem [{\citenamefont {Huggins}\ \emph {et~al.}(2025)\citenamefont {Huggins}, \citenamefont {Khattar}, \citenamefont {Xu}, \citenamefont {Harrigan}, \citenamefont {Kang}, \citenamefont {Low}, \citenamefont {Fowler}, \citenamefont {Rubin},\ and\ \citenamefont {Babbush}}]{huggins2025flasq}%
  \BibitemOpen
  \bibfield  {author} {\bibinfo {author} {\bibfnamefont {W.~J.}\ \bibnamefont {Huggins}}, \bibinfo {author} {\bibfnamefont {T.}~\bibnamefont {Khattar}}, \bibinfo {author} {\bibfnamefont {A.}~\bibnamefont {Xu}}, \bibinfo {author} {\bibfnamefont {M.}~\bibnamefont {Harrigan}}, \bibinfo {author} {\bibfnamefont {C.}~\bibnamefont {Kang}}, \bibinfo {author} {\bibfnamefont {G.~H.}\ \bibnamefont {Low}}, \bibinfo {author} {\bibfnamefont {A.}~\bibnamefont {Fowler}}, \bibinfo {author} {\bibfnamefont {N.~C.}\ \bibnamefont {Rubin}},\ and\ \bibinfo {author} {\bibfnamefont {R.}~\bibnamefont {Babbush}},\ }\bibfield  {title} {\bibinfo {title} {The {FL}uid {A}llocation of {S}urface code {Q}ubits ({FLASQ}) cost model for early fault-tolerant quantum algorithms},\ }\bibfield  {journal} {\bibinfo  {journal} {arXiv preprint arXiv:2511.08508}\ }\href {https://doi.org/10.48550/arXiv.2511.08508} {10.48550/arXiv.2511.08508} (\bibinfo {year} {2025})\BibitemShut {NoStop}%
\bibitem [{\citenamefont {Ross}\ and\ \citenamefont {Selinger}(2016)}]{ross2016optimal}%
  \BibitemOpen
  \bibfield  {author} {\bibinfo {author} {\bibfnamefont {N.~J.}\ \bibnamefont {Ross}}\ and\ \bibinfo {author} {\bibfnamefont {P.}~\bibnamefont {Selinger}},\ }\bibfield  {title} {\bibinfo {title} {Optimal ancilla-free {Clifford}+{T} approximation of {$z$}-rotations},\ }\href {https://doi.org/10.26421/QIC16.11-12-1} {\bibfield  {journal} {\bibinfo  {journal} {Quantum Information \& Computation}\ }\textbf {\bibinfo {volume} {16}},\ \bibinfo {pages} {901} (\bibinfo {year} {2016})}\BibitemShut {NoStop}%
\bibitem [{\citenamefont {Litinski}(2019)}]{litinski2019magic}%
  \BibitemOpen
  \bibfield  {author} {\bibinfo {author} {\bibfnamefont {D.}~\bibnamefont {Litinski}},\ }\bibfield  {title} {\bibinfo {title} {Magic state distillation: Not as costly as you think},\ }\href {https://doi.org/10.22331/q-2019-12-02-205} {\bibfield  {journal} {\bibinfo  {journal} {Quantum}\ }\textbf {\bibinfo {volume} {3}},\ \bibinfo {pages} {205} (\bibinfo {year} {2019})}\BibitemShut {NoStop}%
\bibitem [{\citenamefont {Gidney}\ \emph {et~al.}(2024)\citenamefont {Gidney}, \citenamefont {Shutty},\ and\ \citenamefont {Jones}}]{gidney2024cultivation}%
  \BibitemOpen
  \bibfield  {author} {\bibinfo {author} {\bibfnamefont {C.}~\bibnamefont {Gidney}}, \bibinfo {author} {\bibfnamefont {N.}~\bibnamefont {Shutty}},\ and\ \bibinfo {author} {\bibfnamefont {C.}~\bibnamefont {Jones}},\ }\bibfield  {title} {\bibinfo {title} {Magic state cultivation: growing {T} states as cheap as {CNOT} gates},\ }\bibfield  {journal} {\bibinfo  {journal} {arXiv preprint arXiv:2409.17595}\ }\href {https://doi.org/10.48550/arXiv.2409.17595} {10.48550/arXiv.2409.17595} (\bibinfo {year} {2024})\BibitemShut {NoStop}%
\bibitem [{\citenamefont {Verstraete}\ \emph {et~al.}(2009)\citenamefont {Verstraete}, \citenamefont {Cirac},\ and\ \citenamefont {Latorre}}]{Verstraete_2009}%
  \BibitemOpen
  \bibfield  {author} {\bibinfo {author} {\bibfnamefont {F.}~\bibnamefont {Verstraete}}, \bibinfo {author} {\bibfnamefont {J.~I.}\ \bibnamefont {Cirac}},\ and\ \bibinfo {author} {\bibfnamefont {J.~I.}\ \bibnamefont {Latorre}},\ }\bibfield  {title} {\bibinfo {title} {Quantum circuits for strongly correlated quantum systems},\ }\bibfield  {journal} {\bibinfo  {journal} {Physical Review A}\ }\textbf {\bibinfo {volume} {79}},\ \href {https://doi.org/10.1103/physreva.79.032316} {10.1103/physreva.79.032316} (\bibinfo {year} {2009})\BibitemShut {NoStop}%
\bibitem [{\citenamefont {McClean}\ \emph {et~al.}(2020)\citenamefont {McClean}, \citenamefont {Rubin}, \citenamefont {Sung}, \citenamefont {Kivlichan}, \citenamefont {Bonet-Monroig}, \citenamefont {Cao}, \citenamefont {Dai}, \citenamefont {Fried}, \citenamefont {Gidney}, \citenamefont {Gimby} \emph {et~al.}}]{openfermion}%
  \BibitemOpen
  \bibfield  {author} {\bibinfo {author} {\bibfnamefont {J.~R.}\ \bibnamefont {McClean}}, \bibinfo {author} {\bibfnamefont {N.~C.}\ \bibnamefont {Rubin}}, \bibinfo {author} {\bibfnamefont {K.~J.}\ \bibnamefont {Sung}}, \bibinfo {author} {\bibfnamefont {I.~D.}\ \bibnamefont {Kivlichan}}, \bibinfo {author} {\bibfnamefont {X.}~\bibnamefont {Bonet-Monroig}}, \bibinfo {author} {\bibfnamefont {Y.}~\bibnamefont {Cao}}, \bibinfo {author} {\bibfnamefont {C.}~\bibnamefont {Dai}}, \bibinfo {author} {\bibfnamefont {E.~S.}\ \bibnamefont {Fried}}, \bibinfo {author} {\bibfnamefont {C.}~\bibnamefont {Gidney}}, \bibinfo {author} {\bibfnamefont {B.}~\bibnamefont {Gimby}}, \emph {et~al.},\ }\bibfield  {title} {\bibinfo {title} {Openfermion: the electronic structure package for quantum computers},\ }\href@noop {} {\bibfield  {journal} {\bibinfo  {journal} {Quantum Science \& Technology}\ }\textbf {\bibinfo {volume} {5}},\ \bibinfo {pages} {034014} (\bibinfo {year} {2020})}\BibitemShut {NoStop}%
\bibitem [{\citenamefont {Derby}\ \emph {et~al.}(2021)\citenamefont {Derby}, \citenamefont {Klassen}, \citenamefont {Bausch},\ and\ \citenamefont {Cubitt}}]{compactencoding}%
  \BibitemOpen
  \bibfield  {author} {\bibinfo {author} {\bibfnamefont {C.}~\bibnamefont {Derby}}, \bibinfo {author} {\bibfnamefont {J.}~\bibnamefont {Klassen}}, \bibinfo {author} {\bibfnamefont {J.}~\bibnamefont {Bausch}},\ and\ \bibinfo {author} {\bibfnamefont {T.}~\bibnamefont {Cubitt}},\ }\bibfield  {title} {\bibinfo {title} {Compact fermion to qubit mappings},\ }\href {https://doi.org/10.1103/PhysRevB.104.035118} {\bibfield  {journal} {\bibinfo  {journal} {Phys. Rev. B}\ }\textbf {\bibinfo {volume} {104}},\ \bibinfo {pages} {035118} (\bibinfo {year} {2021})}\BibitemShut {NoStop}%
\bibitem [{\citenamefont {Orman}\ \emph {et~al.}(2025)\citenamefont {Orman}, \citenamefont {Gharibyan},\ and\ \citenamefont {Preskill}}]{syk}%
  \BibitemOpen
  \bibfield  {author} {\bibinfo {author} {\bibfnamefont {P.}~\bibnamefont {Orman}}, \bibinfo {author} {\bibfnamefont {H.}~\bibnamefont {Gharibyan}},\ and\ \bibinfo {author} {\bibfnamefont {J.}~\bibnamefont {Preskill}},\ }\bibfield  {title} {\bibinfo {title} {Quantum chaos in the sparse {SYK} model},\ }\href {https://doi.org/10.1007/JHEP02(2025)173} {\bibfield  {journal} {\bibinfo  {journal} {Journal of High Energy Physics}\ }\textbf {\bibinfo {volume} {2025}},\ \bibinfo {pages} {173} (\bibinfo {year} {2025})},\ \Eprint {https://arxiv.org/abs/2403.13884} {arXiv:2403.13884 [hep-th]} \BibitemShut {NoStop}%
\bibitem [{\citenamefont {Sachdev}\ and\ \citenamefont {Ye}(1993)}]{Sachdev_1993}%
  \BibitemOpen
  \bibfield  {author} {\bibinfo {author} {\bibfnamefont {S.}~\bibnamefont {Sachdev}}\ and\ \bibinfo {author} {\bibfnamefont {J.}~\bibnamefont {Ye}},\ }\bibfield  {title} {\bibinfo {title} {Gapless spin-fluid ground state in a random quantum heisenberg magnet},\ }\href {https://doi.org/10.1103/physrevlett.70.3339} {\bibfield  {journal} {\bibinfo  {journal} {Physical Review Letters}\ }\textbf {\bibinfo {volume} {70}},\ \bibinfo {pages} {3339–3342} (\bibinfo {year} {1993})}\BibitemShut {NoStop}%
\bibitem [{\citenamefont {Xu}\ \emph {et~al.}(2020)\citenamefont {Xu}, \citenamefont {Susskind}, \citenamefont {Su},\ and\ \citenamefont {Swingle}}]{xu2020sparsemodelquantumholography}%
  \BibitemOpen
  \bibfield  {author} {\bibinfo {author} {\bibfnamefont {S.}~\bibnamefont {Xu}}, \bibinfo {author} {\bibfnamefont {L.}~\bibnamefont {Susskind}}, \bibinfo {author} {\bibfnamefont {Y.}~\bibnamefont {Su}},\ and\ \bibinfo {author} {\bibfnamefont {B.}~\bibnamefont {Swingle}},\ }\href {https://arxiv.org/abs/2008.02303} {\bibinfo {title} {A sparse model of quantum holography}} (\bibinfo {year} {2020}),\ \Eprint {https://arxiv.org/abs/2008.02303} {arXiv:2008.02303 [cond-mat.str-el]} \BibitemShut {NoStop}%
\bibitem [{\citenamefont {Setia}\ and\ \citenamefont {Whitfield}(2018)}]{setia2018superfast}%
  \BibitemOpen
  \bibfield  {author} {\bibinfo {author} {\bibfnamefont {K.}~\bibnamefont {Setia}}\ and\ \bibinfo {author} {\bibfnamefont {J.~D.}\ \bibnamefont {Whitfield}},\ }\bibfield  {title} {\bibinfo {title} {Bravyi-{K}itaev {S}uperfast simulation of electronic structure on a quantum computer},\ }\href {https://doi.org/10.1063/1.5019371} {\bibfield  {journal} {\bibinfo  {journal} {The Journal of Chemical Physics}\ }\textbf {\bibinfo {volume} {148}},\ \bibinfo {pages} {164104} (\bibinfo {year} {2018})}\BibitemShut {NoStop}%
\bibitem [{\citenamefont {B{\"a}umer}\ and\ \citenamefont {Woerner}(2025)}]{baumer2025measurement}%
  \BibitemOpen
  \bibfield  {author} {\bibinfo {author} {\bibfnamefont {E.}~\bibnamefont {B{\"a}umer}}\ and\ \bibinfo {author} {\bibfnamefont {S.}~\bibnamefont {Woerner}},\ }\bibfield  {title} {\bibinfo {title} {Measurement-based long-range entangling gates in constant depth},\ }\href@noop {} {\bibfield  {journal} {\bibinfo  {journal} {Physical Review Research}\ }\textbf {\bibinfo {volume} {7}},\ \bibinfo {pages} {023120} (\bibinfo {year} {2025})}\BibitemShut {NoStop}%
\bibitem [{\citenamefont {Moore}(1999)}]{moore1999quantum}%
  \BibitemOpen
  \bibfield  {author} {\bibinfo {author} {\bibfnamefont {C.}~\bibnamefont {Moore}},\ }\bibfield  {title} {\bibinfo {title} {Quantum circuits: Fanout, parity, and counting},\ }\href@noop {} {\bibfield  {journal} {\bibinfo  {journal} {arXiv preprint quant-ph/9903046}\ } (\bibinfo {year} {1999})}\BibitemShut {NoStop}%
\bibitem [{\citenamefont {Fang}\ \emph {et~al.}(2003)\citenamefont {Fang}, \citenamefont {Fenner}, \citenamefont {Green}, \citenamefont {Homer},\ and\ \citenamefont {Zhang}}]{fang2003quantum}%
  \BibitemOpen
  \bibfield  {author} {\bibinfo {author} {\bibfnamefont {M.}~\bibnamefont {Fang}}, \bibinfo {author} {\bibfnamefont {S.}~\bibnamefont {Fenner}}, \bibinfo {author} {\bibfnamefont {F.}~\bibnamefont {Green}}, \bibinfo {author} {\bibfnamefont {S.}~\bibnamefont {Homer}},\ and\ \bibinfo {author} {\bibfnamefont {Y.}~\bibnamefont {Zhang}},\ }\bibfield  {title} {\bibinfo {title} {Quantum lower bounds for fanout},\ }\href@noop {} {\bibfield  {journal} {\bibinfo  {journal} {arXiv preprint quant-ph/0312208}\ } (\bibinfo {year} {2003})}\BibitemShut {NoStop}%
\bibitem [{\citenamefont {Remaud}\ and\ \citenamefont {Vandaele}(2025)}]{remaud2025ancilla}%
  \BibitemOpen
  \bibfield  {author} {\bibinfo {author} {\bibfnamefont {M.}~\bibnamefont {Remaud}}\ and\ \bibinfo {author} {\bibfnamefont {V.}~\bibnamefont {Vandaele}},\ }\bibfield  {title} {\bibinfo {title} {Ancilla-free quantum adder with sublinear depth},\ }in\ \href@noop {} {\emph {\bibinfo {booktitle} {International Conference on Reversible Computation}}}\ (\bibinfo {organization} {Springer},\ \bibinfo {year} {2025})\ pp.\ \bibinfo {pages} {137--154}\BibitemShut {NoStop}%
\bibitem [{\citenamefont {Bader}(2012)}]{bader2012space}%
  \BibitemOpen
  \bibfield  {author} {\bibinfo {author} {\bibfnamefont {M.}~\bibnamefont {Bader}},\ }\href@noop {} {\emph {\bibinfo {title} {Space-filling curves: an introduction with applications in scientific computing}}},\ Vol.~\bibinfo {volume} {9}\ (\bibinfo  {publisher} {Springer Science \& Business Media},\ \bibinfo {year} {2012})\BibitemShut {NoStop}%
\end{thebibliography}%

\end{document}